\title{
\mbox{From Cycle Rooted Spanning Forests to the Critical Ising Model:}\\
an Explicit Construction}
\author{B\'eatrice de Tili\`ere
\thanks{{\small
Laboratoire de Probabilit\'es et Mod\`eles Al\'eatoires, Universit\'e Pierre et Marie Curie, 4 place Jussieu, 
F-75005 Paris.}
{\small\texttt{beatrice.de\_tiliere@upmc.fr.}}
{\small Institut de Math\'ematiques, Universit\'e de Neuch\^atel, Rue Emile-Argand 11, 
CH-2007 Neuch\^atel. Supported in part by the Swiss National Foundations grant 200020-120218.}
}}
\date{}
\begin{document}

\maketitle

\begin{abstract}
Fisher established an explicit correspondence between the $2$-dimensional Ising
model defined on a graph $G$ and the dimer model defined on a decorated
version~$\GD$ of this graph \cite{Fisher}. In this paper we explicitly relate
the dimer model associated to the critical Ising model and critical
cycle rooted spanning forests (CRSFs). This relation is established through
characteristic polynomials, whose definition only depends on the respective
fundamental domains, and which encode the combinatorics of the model. We
first
show a matrix-tree type theorem establishing that the dimer characteristic
polynomial counts CRSFs of the decorated fundamental domain $\GD_1$. Our main
result consists in explicitly constructing CRSFs of~$\GD_1$ counted by the dimer
characteristic polynomial, from CRSFs of $G_1$ where edges are assigned
Kenyon's critical weight function \cite{Kenyon3}; thus proving a relation on
the level of configurations between two well known 2-dimensional critical
models.
\end{abstract}

\section{Introduction}

In \cite{Fisher}, Fisher established an explicit correspondence between the
$2$-dimensional Ising model defined on 
a graph $G$ and the dimer model defined on a decorated version~$\GD$ of this
graph, known as the \emph{Fisher graph} of $G$. Since then, dimer techniques
have been a powerful
tool for tackling the Ising model, see for example the book of \cite{McCoyWu}. More recently, in 
\cite{BoutillierdeTiliere:iso_perio,BoutillierdeTiliere:iso_gen},
we prove fundamental results for the dimer model corresponding
to a large class of \emph{critical} Ising models, by proving
explicit formulae for the free energy and for the Gibbs measure.

Critical Ising models we consider are defined on graphs satisfying a geometric
property called \emph{isoradiality}. When
the underlying isoradial graph $G$ is infinite and $\ZZ^2$-periodic, then so is
the Fisher graph $\GD$, and we let $\GD_1=\GD/\ZZ^2$ be the \emph{fundamental
domain}. The key object involved in the
explicit expressions of \cite{BoutillierdeTiliere:iso_perio} for the free energy
and the Gibbs measure is the \emph{critical dimer characteristic 
polynomial}, whose definition only depends on the fundamental domain $\GD_1$.
This polynomial
is a generating function 
for configurations related to super-imposed dimer configurations of $\GD_1$,
referred to as `double-dimer' configurations. 
By Fisher's correspondence, this implies that the dimer characteristic
polynomial is a generating function for `double-Ising' 
configurations.

In \cite{BoutillierdeTiliere:iso_perio}, we prove that the dimer characteristic
polynomial is equal, up to a constant, to the
\emph{critical Laplacian characteristic polynomial} of $G_1=G/\ZZ^2$, where
edges of $G_1$ are assigned Kenyon's critical weight function \cite{Kenyon3}.
Using a generalization of Kirchhoff's matrix tree theorem due to Forman
\cite{Forman}, the
latter is shown to be a generating function for \emph{cycle rooted
spanning forests} (CRSFs) of $G_1$. This suggests the existence of an explicit
relation between 
`double-Ising' configurations
and CRSFs, which we were not able to find in
\cite{BoutillierdeTiliere:iso_perio}. The first result of this paper is a
matrix-tree type theorem, proving that the critical dimer characteristic
polynomial is a generating function for CRSFs of the Fisher graph $\GD_1$, see
Theorem \ref{thm:matrix-tree} of Section \ref{sec:matrix-tree}.
Then, the main result of this paper can loosely be stated as follows, refer to
Theorem~\ref{thm:main2} of Section \ref{sec:statementmain} and Theorem
\ref{thm:correspondence} of Section \ref{sec:explicitcorrespondence} for more
precise statements.
\begin{thm}\label{thm:main}
Consider a critical Ising model defined on an infinite,
$\ZZ^2$-periodic isoradial graph $G$. Then,
there exists an explicit way of constructing CRSFs of $\GD_1$ counted by the
critical dimer characteristic polynomial, from CRSFs of $G_1$ counted by the
critical Laplacian characteristic
polynomial.
\end{thm}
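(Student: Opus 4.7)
My plan is to describe, for each vertex of $G_1$, a local replacement rule that turns the incidence data of a given CRSF of $G_1$ at that vertex into a configuration inside the Fisher decoration; combining these local patches with the inter-gadget edges (which are in bijection with edges of $G_1$) should yield a CRSF of $\GD_1$, and the map should be weight-preserving in the appropriate summed sense.

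First, I would recall Fisher's decoration at a vertex $v$ of $G_1$ of degree $d_v$: a gadget of $3d_v$ vertices comprising $d_v$ \emph{external} vertices (one per edge of $G_1$ incident to $v$) attached to an internal triangle structure. Every subgraph of $\GD_1$ then splits naturally into its restrictions to these gadgets together with the edges between gadgets, and a CRSF of $\GD_1$ is encoded by such data subject to global spanning and unique-cycle-per-component constraints. Second, I would classify, for each vertex $v$, the finitely many local patterns formed by the input CRSF at $v$: which edges of $G_1$ at $v$ are in the forest, and, among those, which belong to the cycle of the component containing $v$. For each such pattern I would exhibit an explicit weighted family of admissible fillings of the gadget so that, when glued to the neighbouring gadgets through the retained $G_1$-edges, the result is a CRSF of $\GD_1$ whose cycle set is precisely the lift of the input cycle set. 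The two local constraints to verify are that the gadget vertices are all spanned and that no filling introduces a new cycle unless forced by the input pattern.

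Third, I would check the weight identity. Using the critical dimer weights on $\GD_1$ coming from Fisher's correspondence, the sum of weights of admissible fillings at $v$ for each fixed local pattern should equal, up to a universal constant depending only on $\deg v$, the corresponding local factor of Kenyon's critical Laplacian weight. Taking the product over $v$ and summing over CRSFs of $G_1$, combined with Theorem~\ref{thm:matrix-tree} and the characteristic-polynomial identity of \cite{BoutillierdeTiliere:iso_perio}, yields the theorem.

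The main obstacle I expect is in the second step: designing the local rules so that they work simultaneously for tree edges and cycle edges of the input, and so that they respect the homology class on the torus of each cycle of the input CRSF. The latter is crucial because the statement concerns the full \emph{characteristic polynomials} (functions of the Bloch arguments $(z,w)$), not just their values at $(1,1)$; the lifted cycle in $\GD_1$ must therefore traverse each visited gadget by a single continuous arc between the two external vertices sitting on that cycle, so that the winding is preserved edge by edge. Once this constraint is built into the local rule, the remaining verification is a finite, albeit tedious, enumeration of local patterns that I would organize by increasing number of forest edges and cycle edges at $v$.
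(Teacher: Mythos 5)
There is a genuine gap, and it lies exactly where your construction is most rigid: you require the output CRSF of $\GD_1$ to use as long edges precisely the retained edges of the input CRSF $\Fb$ of $G_1$, and to have cycles that are edge-by-edge lifts of the input cycles. Neither requirement is compatible with the weights or with the set of configurations actually counted by the dimer characteristic polynomial. On the weight side, the gadget fillings are not free parameters: once the long edges are fixed to be those of $\Fb$, the critical dimer weights (after the gauge by the kernel vector $f$) force each present long edge to contribute a factor proportional to $\sin\theta_e$ and each triangle a fixed factor, so the total weight of the family is proportional to $\prod_{e\in\Fb}\sin\theta_e$ times a constant depending only on $G_1$. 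This is \emph{not} proportional to the Laplacian weight $\prod_{e\in\Fb}\tan\theta_e$ up to a universal constant: the discrepancy is $\prod_{e\in E(G_1)\setminus\Fb}\cos\theta_e$, which depends on the configuration. In the paper this factor is produced only by additionally summing, for every edge $e\notin\Fb$, over whether a corresponding long edge is inserted into the configuration of $\GD_1$ (via licit primal/dual edge moves), giving per edge a factor $1-(1+\cos\theta_e)=-\cos\theta_e$. So the family attached to $\Fb$ must contain CRSFs of $\GD_1$ whose long-edge sets strictly contain $\Fb$ and are in general not spanning forests of $G_1$ at all; a purely per-vertex filling rule that keeps only the input edges cannot be weight-preserving.

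On the configuration side, the CRSFs of $\GD_1$ counted by $\Pdimer^0(\zs,\ws)$ are far more numerous than lifts of CRSFs of $G_1$: any long-edge set $\Lb$ with at least one root per decoration can occur, and before any matching one must first show that a large class of configurations (those with two edges at a triangle of a root vertex) \emph{cancels} in the signed sum, using the relation $f_{v}=f_{w}+f_{z}$ coming from $f\in\ker K$ -- your framing in terms of positive ``admissible fillings'' has no room for this cancellation, yet without it the two generating functions cannot agree. Finally, homology is not preserved cycle-by-cycle: when extra long edges are added, non-trivial cycles of the intermediate configurations are created, broken up, and can even change direction (orthogonal cycles appear when there is a single non-trivial cycle), so insisting that the lifted cycle traverse each gadget by one arc with the same winding as the input is both unachievable and unnecessary. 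The paper instead rewrites both polynomials over \emph{pairs} of dual OCRSFs $(F,F^*)$ and shows that the moves preserve only the invariant $(-\zs^{h_0}\ws^{v_0})^{\frac{1}{2}N(F,F^*)}$ (allowing the reference number to flip exactly when this invariant is $1$); working simultaneously with the dual forest is what makes the homology bookkeeping and the induction on added edges possible. Your step three, which appeals to the characteristic-polynomial identity of the earlier paper, would also be circular here, since that identity is what the construction is meant to reprove.
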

This exhibits an explicit relation, on the level of configurations, 
between two well known models of
statistical mechanics: the Ising model and CRSFs at criticality. Note that such
a relation was already
suspected by Messikh \cite{messikh}. Before giving an outline of the paper, let
us make a few comments.
\begin{itemize}
\item The main contribution of this paper is the proof of Theorem
\ref{thm:main}, where we actually provide the explicit construction. 
\item The partition functions (weighted sum of configurations) of both
models can be expressed from their
respective characteristic polynomials, so that it is actually stronger to work with characteristic polynomials.
\item Working with graphs embedded on the torus has the advantage of avoiding boundary issues, but has the
additional difficulty of involving the geometry of the torus, with non-trivial  
cycles occurring in configurations. 
Working on finite pieces of infinite graphs, and precisely specifying boundary
conditions, would certainly explicitly relate double-dimer configurations
and spanning trees. 
\item Spanning trees are a well suited object for defining a height function and
prove Gaussian fluctuations. Thus, it might be that
Theorem \ref{thm:main} could be 
used to prove results which are numerically described in the paper
\cite{Wilson}.
\end{itemize}

\textbf{Outline of the paper}
\begin{enumerate}
\item[Section \ref{sec:intro1}:] Definition of the critical Ising model and of
the dimer model. Description of Fisher's correspondence relating the two.
\item[Section \ref{sec:characteristicpolynomials}:] Definition of the
critical dimer and Laplacian characteristic
polynomials. Relation between the Laplacian characteristic polynomial
and CRSFs. 
\item[Section \ref{sec:essential}:] Statement and proof of Theorem
\ref{thm:matrix-tree} establishing that the critical dimer characteristic
polynomial is a generating function for CRSFs of the Fisher graph $\GD_1$.
Precise statement of Theorem \ref{thm:main}.
\item[Section \ref{sec:general}:] Definition and properties of \emph{licit
primal/dual edge moves}, which are one of the key ingredients of the
correspondence. 
\item[Section \ref{sec:CORRES}:] Explicit
construction of CRSFs of $\GD_1$ from CRSFs of $G_1$ and proof of Theorem
\ref{thm:main}.
\end{enumerate}

\section{Critical Ising model and dimer model}\label{sec:intro1}

In this section, we define the $2$-dimensional critical Ising model, the dimer
model and describe Fisher's correspondence relating the two.

\subsection{Critical Ising model}

Let $G=(V(G),E(G))$ be a finite, planar, unoriented graph, together with a
collection of positive real numbers $J=(J_e)_{e\in E(G)}$ indexed by the edges
of $G$. The {\em Ising model on $G$ with coupling constants $J$} is defined as
follows. A \emph{spin configuration} $\sigma$ of $G$ is a function of the
vertices of $G$ with values in $\{-1,+1\}$. The probability of occurrence of a
spin configuration $\sigma$ is given by the {\em Ising Boltzmann measure},
denoted $\PPising$:
\begin{equation*}
\PPising(\sigma)=\frac{1}{\Zising}\exp\left(\sum_{e=uv\in
E(G)}J_e\sigma_u\sigma_v\right),
\end{equation*}
where 
$
\Zising=\sum_{\sigma\in\{-1,1\}^{V(G)}}\exp\left(\sum_{e=uv\in
E(G)}J_e\sigma_u\sigma_v\right),
$
is the {\em Ising partition function}. 

We consider Ising models defined on a class of embedded graphs which have an additional 
property called {\em isoradiality}. A graph $G$ is said to be {\em isoradial}~\cite{Kenyon3}, 
if it has an embedding in the plane such that every face is inscribed in a circle of radius~1. 
We ask moreover that
all circumcenters of the faces are in the closure of the faces. From now on, when we speak 
of the graph $G$, we mean the graph together with a particular isoradial embedding in the
plane. Examples of isoradial graphs are the square and the honeycomb lattice.
Refer to Figure \ref{fig:isingcrit} (left) for a more general example of
isoradial graph.

To such a graph is naturally associated the {\em diamond graph}, denoted by
$\GR$:
vertices of $\GR$ consist in the vertices
of $G$ and the circumcenters of the faces of $G$ (which are also the dual
vertices of $G^*$); the circumcenter of each
face is then joined to all vertices which are on the boundary of this face, see
Figure \ref{fig:isingcrit} (center). Since $G$ is isoradial,
all faces of $\GR$ are side-length-$1$ rhombi. Moreover, each edge $e$ of
$G$ is the diagonal of exactly one rhombus of $\GR$; we let $\theta_e$ be the
half-angle of the rhombus at the vertex it has in common with $e$, see Figure
\ref{fig:isingcrit} (right). 

\begin{figure}[ht]
\begin{center}
\includegraphics[width=\linewidth]{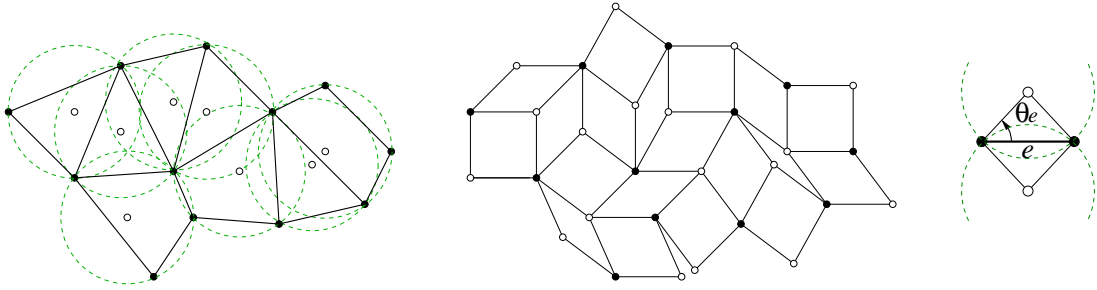}
\caption{Left: example of isoradial graph. Center: corresponding diamond graph.
Right: rhombus half-angle associated to 
an edge $e$ of the graph.}
\label{fig:isingcrit}
\end{center}
\end{figure}

The same construction can be done for infinite and toroidal isoradial graphs, in
which case the embedding is in the plane or in the torus.

It is then natural to choose the coupling constants $J$ of the Ising model to
depend on the geometry of the embedded graph: let us
assume that $J_e$ is a function of $\theta_e$, the rhombus half-angle assigned
to the edge $e$. 

We impose two more conditions on the coupling constants. First, we ask that the
Ising model on 
$G$ with coupling constants $J$ as above is {\em $Z$-invariant}, that is,
invariant under {\em star-triangle} transformations of the underlying graph.
Next, we impose that the Ising model satisfies a generalized form of {\em
self-duality}. These conditions completely determine the coupling constants $J$,
known as 
{\em critical coupling constants}: for every edge $e$ of $G$,
\begin{equation*}
  J(\theta_e) = \frac{1}{2} \log \left(\frac{1+\sin \theta_e}{\cos \theta_e}\right).
\end{equation*}
The $Z$-invariant Ising model on an isoradial graph with this particular choice
of coupling 
constants is referred to as  \emph{critical Ising model}. This
model was introduced by Baxter in \cite{Baxter:Zinv}. A more detailed definition
is given in \cite{BoutillierdeTiliere:iso_perio}.

\subsection{Dimer model}

Let $\GD=(V(\GD),E(\GD))$ be a finite, planar, unoriented graph, and suppose
that edges of $\GD$ are assigned a 
positive weight function $\nu=(\nu_e)_{e\in E(\GD)}$. The {\em dimer model on
$\GD$ with weight function $\nu$} is defined as follows.

A {\em dimer configuration} $M$ of $\GD$, also called {\em perfect matching}, is
a subset of edges 
of $\GD$ such that every vertex is incident to exactly one edge of $M$. Let
$\M(\GD)$ be the set of dimer configurations of the graph $\GD$. The probability
of occurrence of a dimer configuration $M$ is given by the {\em dimer Boltzmann
measure}, denoted $\PPdimer$:
\begin{equation*}
 \PPdimer(M)=\frac{\prod_{e\in M}\nu_e}{\Zdimer},
\end{equation*}
where $\Zdimer=\sum_{M\in \M(\GD)}\prod_{e\in M}\nu_e$ is the {\em dimer
partition function}.

\subsection{Fisher's correspondence}

Fisher's correspondence \cite{Fisher} holds for a general Ising model defined
on a finite graph $G$ embedded on a surface without boundary, 
with coupling constants $J$. We use the following slight variation of the
correspondence.

The decorated graph, on which the dimer configurations live, is constructed from
$G$ as follows. 
Every vertex of degree $k$ of $G$ is replaced by a {\em
decoration} consisting of $3k$ vertices: a triangle is attached to
every edge incident to this vertex, and these triangles are joined by edges in
a circular way, see Figure \ref{fig:decorated_graph} below. This new graph,
denoted by~$\GD$, 
is also embedded on the surface without boundary and has vertices of degree $3$.
It is referred to as the {\em Fisher graph} of $G$. 

\begin{figure}[ht]
\begin{center}
\includegraphics[height=2.8cm]{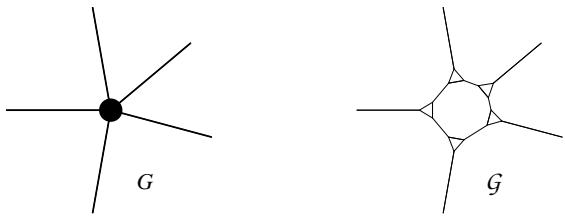}
\caption{Left: a vertex of $G$ with its incoming edges. Right: corresponding
decoration of $\GD$.}
\label{fig:decorated_graph}
\end{center}
\end{figure}
Fisher's correspondence uses the high temperature expansion of the Ising
partition function, see for example \cite{Baxter:exactly}:
\begin{equation*}\label{eq:Zhightemp}
\Zising=\left(\prod_{e\in
E(G)}\cosh(J_e)\right)2^{|V(G)|}\sum_{\C\in\P}
\prod_{e\in\C} \tanh(J_e),
\end{equation*}
where $\P$ is the family of all polygonal contours drawn on $G$, for
which every edge of $G$ is used at most once. This expansion defines a measure
on the set of polygonal 
contours $\P$ of $G$: the probability of occurrence of a polygonal
contour $\C$ is proportional to the product of the weights of the edges
it contains, where the weight of an edge $e$ is $\tanh(J_e)$.

Here comes the correspondence: to any contour configuration $\C$ coming
from the high-temperature 
expansion of the Ising model on $G$, we associate $2^{|V(G)|}$ dimer
configurations on $\GD$: edges present (resp. absent) in $\C$ are
absent (resp. present) in the corresponding dimer configuration of $\GD$. Once
the state of these edges is fixed, there is, for every decorated vertex,
exactly two ways to complete the configuration into a dimer configuration.
Figure \ref{fig:Fisher_correspondence} 
below gives an example in the case where $G$ is the square lattice $\ZZ^2$.\\

\begin{figure}[ht]
\begin{center}
\includegraphics[width=\linewidth]{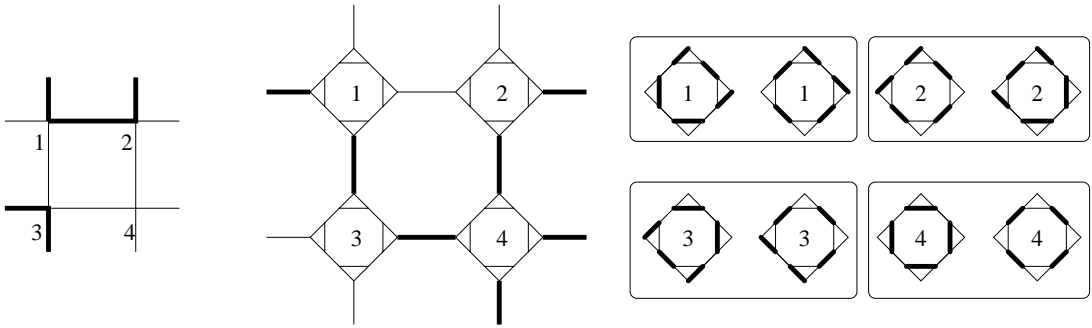}
\caption{Polygonal contour of $\ZZ^2$, and corresponding dimer configurations
  of the associated Fisher graph.}
\label{fig:Fisher_correspondence}
\end{center}
\end{figure}

Let us assign, to an edge $e$ of $\GD$, weight $\nu_e=1$, if it belongs to a
decoration; and weight $\nu_e=\coth{J_e}$, if it corresponds to an edge of $G$.
Then the correspondence is measure-preserving: every contour configuration $\C$
has the same number ($2^{|V(G)|}$) of images by this correspondence, and the
product of the 
weights of the edges in $\C$, $\prod_{e\in \C} \tanh(J_e)$
is proportional to the weight $\prod_{e\not\in\C} \coth(J_e)$  of any of
its corresponding 
dimer configurations for a proportionality factor, $\prod_{e\in E(G)}
\tanh(J_e)$, which is independent of $\C$.

As a consequence of Fisher's correspondence, we have the following relation
between the 
Ising and dimer partition functions:
\begin{equation*}
\Zising=\left(\prod_{e\in E(G)}\sinh(J_e)\right)\Zdimer.
\end{equation*}

Fisher's correspondence between Ising contour configurations and dimer configurations naturally 
extends to the case where $G$ is an infinite planar graph.

\subsection{Critical dimer model on Fisher graphs}

Consider a critical Ising model defined on an isoradial graph $G$ embedded in
the torus, or in the plane. Then, the dimer weights of the corresponding dimer
model on the Fisher graph $\GD$ are:
\begin{equation}\label{equ:dimercritical}
  \nu_e=\begin{cases}
    1 & \text{if $e$ belongs to a decoration,}\\
    \nu(\theta_e)=\cot\left(\frac{\theta_e}{2}\right) & \text{if $e$ comes from an edge of $G$.}
  \end{cases}
\end{equation}
We refer to these weights as {\em critical dimer weights}, and to the
corresponding dimer model as {\em critical dimer model 
on the Fisher graph $\GD$}.

\section{Critical characteristic
polynomials}\label{sec:characteristicpolynomials}

In this section we define the critical dimer and Laplacian characteristic
polynomials. We then state Forman's theorem proving that the Laplacian
characteristic polynomial is a generating function for CRSFs of the underlying
graph.

\subsection{Critical dimer characteristic polynomial}\label{sec:dimercharact}

The dimer model has the specific feature of having an explicit formula for
the partition function due to Kasteleyn \cite{Kast61} and independently to
Temperley and Fisher \cite{TemperleyFisher}. It involves a weighted adjacency
matrix of the underlying graph known as a \emph{Kasteleyn matrix}. Let us define
it for the critical dimer model on a Fisher graph $\GD$, which we
assume to be the Fisher graph of an infinite, $\ZZ^2$-periodic isoradial graph
$G$. Recall that $\GD_1=\GD/\ZZ^2$ denotes the fundamental domain of $\GD$.

A {\em Kasteleyn orientation} of $\GD$ is an orientation of the edges of $\GD$
such that all 
elementary cycles are {\em clockwise odd}, i.e. when
traveling clockwise around the edges of any elementary cycle of $\GD$, the
number of co-oriented edges is odd. When the graph is planar, such an 
orientation always exists \cite{Kasteleyn}. For later purposes, we need to keep
track of the orientation of the
edges of $\GD$.
We thus choose a specific Kasteleyn orientation of $\GD$ in which every 
triangle of every decoration is oriented clockwise. Having a Kasteleyn
orientation of the graph $\GD$ then amounts to finding a Kasteleyn orientation
of the planar graph obtained from $\GD$ by contracting each triangle to a single
vertex, which exists by Kasteleyn's theorem \cite{Kasteleyn}. Refer to Figure
\ref{fig:Kast_orientation} for an example of such an
orientation in the case where $G=\ZZ^2$.\\
\vspace{2cm}

\begin{figure}[ht]
\begin{center}
\includegraphics[height=5cm]{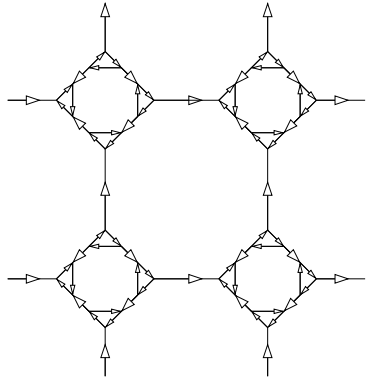}
\caption{An example of Kasteleyn orientation of the Fisher graph of $\ZZ^2$, in
which every 
triangle of every decoration is oriented clockwise.}
\label{fig:Kast_orientation}
\end{center}
\end{figure}

The {\em Kasteleyn matrix} corresponding to such an orientation is an infinite
matrix, 
whose rows and columns are indexed by vertices of $\GD$, defined by:

\begin{equation*}
K_{x,y}=\eps_{x,y}\nu_{xy},
\end{equation*}
where 
\begin{equation*}
\eps_{x,y}=
\begin{cases}
 1&\text{ if } x\sim y,\text{ and } x\rightarrow y\\
-1&\text{ if } x\sim y,\text{ and } x\leftarrow y\\
0&\text{ else},
\end{cases}
\end{equation*}
and $\nu$ is the critical dimer model weight function of Equation
\eqref{equ:dimercritical}. Note that $K$ can be interpreted as an operator
acting on $\CC^{V(\GD)}$:
\begin{equation*}
\forall f \in\CC^{V(\GD)},\quad (Kf)_x=\sum_{x\in V(\GD)}K_{x,y}f_{y}.
\end{equation*}

The {\em critical dimer characteristic polynomial}, denoted by
$\Pdimer(\zs,\ws)$, is the determinant of the Fourier transform of the 
Kasteleyn operator. More explicitly, let $\gamma_\hs$ 
and $\gamma_\vs$ be two paths in the dual graph of $\GD_1$ winding once around
the torus
horizontally and vertically 
respectively. Then, the Fourier transform of $K$ is the modified weight
Kasteleyn matrix $K_1(\zs,\ws)$ whose lines and columns are indexed by vertices
of 
$\GD_1$, and whose coefficients are those of $K$ multiplied by $\zs^{\pm 1}$ 
(resp. $\ws^{\pm 1}$) when the corresponding 
edge is crossed by the horizontal cycle (resp. vertical cycle), and the sign
$\pm 1$ is defined by the Kasteleyn orientation 
of the edge. The \emph{critical dimer characteristic polynomial} then is:
$$
\Pdimer(\zs,\ws)=\det K_1(\zs,\ws).
$$
It is the key ingredient used in explicit formulae
for the critical dimer model defined on the infinite graph $\GD$ or on the
finite toroidal graph $\GD_n=\GD/n\ZZ^2$, $n\in\NN^*$, see
\cite{BoutillierdeTiliere:iso_perio}. More precisely, the partition function of
the fundamental domain $\GD_1$ can be expressed as a linear combination of the
square root of $\Pdimer(\zs,\ws)$ evaluated at $\zs,\,\ws\in\{\pm
1\}$, and the partition function of $\GD_n$ can be expressed using
$\Pdimer(\zs,\ws)$ evaluated at $\zs^{n},\,\ws^n\in\{\pm 1\}$.

By expanding $\Pdimer(\zs,\ws)$, see also the proof of Lemma $13$ of
\cite{BoutillierdeTiliere:iso_perio},
$$
\Pdimer(\zs,\ws)=\sum_{\sigma\in\S_{|V(\GD_1)|}}\mathrm{sgn}(\sigma)\prod_{x\in
V(\GD_1)}(K_1(\zs,\ws))_{x,\sigma(x)},
$$
and using the fact that $K_1(\zs,\ws)$ is an adjacency matrix, one observes that
the only contribution to the sum comes from configurations which are unions of
disjoint cycles covering all vertices of $\GD_1$, such that trivial cycles
(homotopic to a point) are of even length (when the length is $2$, it is
then a doubled edge), non-trivial cycles (with non-trivial
homology) can be of even or odd length, and each non-trivial cycle
contributes a term $\zs^{h}\ws^{v}$ where $(h,v)$ is its
homology class. Moreover, since cycles are disjoint, non-trivial
cycles must be parallel. The difference between these configurations and
super-imposition of dimer configurations, also known as \emph{double-dimers},
lies in the terms $\zs,\,\ws$ and in the fact that non-trivial cycles can be of
odd length; double-dimer configurations can be recovered by taking a linear
combination of $\Pdimer(\zs,\ws)$ with $\zs,\ws\in\{\pm 1\}$. We refer
to configurations counted by $\Pdimer(\zs,\ws)$ as \emph{`double-dimer'
configurations}.

\subsection{Critical Laplacian characteristic polynomial}

A generalization of Kirchhoff's matrix tree theorem due to Forman \cite{Forman}
proves that the Laplacian characteristic polynomial is a generating function for
cycle rooted spanning forests, which are the natural pendent of spanning
trees when working on the torus. In this section we first define cycle rooted
spanning forests, then the Laplacian characteristic polynomial, and finally
state Forman's theorem. 

We let $G$ be an infinite, $\ZZ^2$-periodic isoradial graph,
and $G_1=G/\ZZ^2$ be the fundamental domain. Note that the content of Sections
\ref{subsec321} and \ref{subsec322} holds in more generality, i.e. when $G_1$ is
any graph embedded on the torus and $\rho$ (see below) is any positive weight
function on unoriented edges of $G_1$.

\subsubsection{Cycle rooted spanning forests}\label{subsec321}

A {\em cycle-rooted tree} (CRT) of a toroidal graph $G_1$ is a connected
subgraph of $G_1$ with a unique non-trivial cycle. A {\em cycle-rooted spanning
forest} (CRSF) is a collection of disjoint
cycle-rooted trees covering 
every vertex of $G_1$, thus implying that all non-trivial cycles are parallel.
An {\em oriented} CRT (OCRT) is a 
CRT in which edges of the branches are oriented towards the non-trivial cycle,
and the non-trivial cycle 
is oriented in one of the two possible ways. An {\em oriented} CRSF (OCRSF) is a
CRSF consisting of OCRTs.

Let us denote by $T$ a generic OCRT of $G_1$, by $F$ a generic oriented OCRSF
of $G_1$, and by
$\F(G_1)$ the collection of OCRSFs of $G_1$. 

\begin{rem}\label{rem:OCRSF}$\,$
\begin{itemize}
\item To a CRSF naturally corresponds $2^{|\text{non-trivial cycles}|}$ OCRSFs.
\item A CRSF is characterized as a subset of $|V(G_1)|$ edges of $G_1$
containing no trivial cycle.
\item An OCRSF is characterized as a subset of oriented edges of $G_1$ such
that each vertex 
has exactly one outgoing edge of this subset, and which contains no trivial cycle. 
\end{itemize}
\end{rem}

Let $\gamma_{\hs}$ and $\gamma_{\vs}$ be two paths in the dual graph of $G_1$
winding
once around the torus horizontally and vertically respectively. Assume that
$\gamma_{\hs}$ and $\gamma_{\vs}$ are assigned a reference orientation.
The {\em homology class} of an OCRT $T$, denoted by $H(T)=(h(T),v(T))$, is
defined to be the homology class of its non-trivial cycle in $\ZZ^2$. Define
the \emph{reference number} of $T$ to be: 
\begin{equation*}\label{def:H_0}
H_0(T)=(h_0(T),v_0(T))=\pm(h(T),v(T)),
\end{equation*}
where the sign is chosen so that, $h_0\geq 0$, and $v_0\geq 0$ when $h_0=0$.
Note that this definition is independent of the orientation of the non-trivial
cycle, so that it also makes sense for CRTs. Define the \emph{sign of the
non-trivial cycle} of $T$ to be:
\begin{equation*}
N(T)=\mathbf{1}_{\{H(T)=H_0(T)\}}-\mathbf{1}_{\{H(T)=-H_0(T)\}}.
\end{equation*}
Then, the homology class of the OCRT $T$ can be rewritten as:
$$
H(T)=N(T)H_0(T).
$$
Let $F$ be an OCRSF of $G_1$, and denote by $T_1,\cdots,T_n$, its tree
components, then the
homology class
$H(F)$ of $F$, is naturally defined by: 
\begin{align*}
H(F)&=(h(F),v(F))=\sum_{i=1}^n H(T_i)=\Bigl(\sum_{i=1}^n h(T_i),\sum_{i=1}^n
v(T_i)\Bigr).
\end{align*}
Since non-trivial cycles of the CRT components of $F$ are parallel, we deduce
that the number $H_0(T_i)$, $i\in\{1,\cdots,n\}$, is independent of $i$. It is
then natural to define the \emph{reference number} of the OCRSF $F$ by:
\begin{equation*}
H_0(F)=(h_0(F),v_0(F))=H_0(T_i).
\end{equation*}
As a consequence, the homology class of the OCRSF $F$ can be rewritten as:
\begin{equation*}
H(F)=N(F)H_0(F),
\end{equation*}
where $N(F)=\sum_{i=1}^n N(T_i)$ is the \emph{signed number of cycles} of the
OCRSF $F$.

\subsubsection{Critical Laplacian characteristic polynomial}\label{subsec322}

Suppose that (unoriented) edges of $G$ are 
assigned Kenyon's critical weight function for the Laplacian
\cite{Kenyon3}, denoted by $\rho$, 
$$
\forall e\in E(G),\;\rho_e=\rho(\theta_e)=\tan\theta_e,
$$
where $\theta_e$ is the rhombus half-angle of the edge $e$.
Then, the \emph{critical Laplacian} $\Delta$ on $G$, is represented by the
following matrix, also denoted $\Delta$, whose lines and columns are indexed by 
vertices of $G$:
\begin{equation*}
 \Delta_{x,y}=\left\{
\begin{array}{ll}
 \rho_{xy}&\text{ if }x\sim y\\
-\sum_{y\sim x}\rho_{xy}&\text{ if } x=y\\
0&\text{ otherwise}.
\end{array}\right.
\end{equation*}

The {\em critical Laplacian characteristic polynomial}, $\Plap(\zs,\ws)$, is
the determinant of the Fourier transform of the 
Laplacian operator, that is, $
\Plap(\zs,\ws)=\det\Delta_1(\zs,\ws),
$
where $\Delta_1(\zs,\ws)$ is the modified weight Laplacian matrix defined in a
way similar to the modified weight Kasteleyn matrix.

A remarkable fact due to Kirchhoff is that, when the graph is finite and
embedded in the plane, the absolute value of any cofactor of the Laplacian
matrix yields the weighted number of spanning trees. Forman generalized this
result, and for the case of the torus, his result can be stated as, see also
Lemma $9$ of \cite{BoutillierdeTiliere:iso_perio}:

\begin{thm}[\cite{Forman}]
The critical Laplacian characteristic polynomial is the following combinatorial
sum:
\begin{equation*}
\Plap(\zs,\ws)=\sum_{F\in \F(G_1)}\left(\prod_{e=(x,y)\in F}\rho_{xy}
\right)\prod_{T\in
F}(1-\zs^{h(T)}\ws^{v(T)}).
\end{equation*}
\end{thm}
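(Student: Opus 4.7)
The plan is to prove the theorem by a Leibniz-expansion argument combined with a sign-reversing involution, in the spirit of Kirchhoff's classical matrix-tree theorem. Write $\Delta_1(\zs,\ws)=-D+A_1(\zs,\ws)$, where $D$ is the diagonal matrix of weighted degrees ($D_{x,x}=\sum_{y\sim x}\rho_{xy}$) and $A_1(\zs,\ws)$ is the off-diagonal part whose entry at $(x,y)$ is $\rho_{xy}$ multiplied by the appropriate $\zs^{\pm 1}\ws^{\pm 1}$ phase coming from the crossings of $\gamma_\hs,\gamma_\vs$. The Leibniz expansion reads
\[
\Plap(\zs,\ws)=\sum_{\sigma}\mathrm{sgn}(\sigma)\prod_{x\in V(G_1)}(\Delta_1(\zs,\ws))_{x,\sigma(x)},
\]
and I would distribute, at each fixed point $x$ of $\sigma$, the diagonal sum $-\sum_{y\sim x}\rho_{xy}$ into a choice of incident ``marked'' unoriented edge $\{x,y_x\}\in E(G_1)$ contributing $-\rho_{xy_x}$. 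Each term is then indexed by a \emph{marked functional digraph}: an oriented arrow $x\to\sigma(x)$ at every non-fixed vertex, and a marked incident edge $\{x,y_x\}$ at every fixed vertex. Viewing each mark at $x$ as an effective directed edge $x\to y_x$, every such configuration determines a functional digraph on $V(G_1)$: every vertex has a unique out-neighbor, so the effective digraph decomposes into connected components each carrying a unique cycle.

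Next, I would construct a sign-reversing involution that cancels every configuration whose effective digraph contains a homologically trivial cycle. The prototypical cancellation pairs a $\sigma$-$2$-cycle supported on a trivial edge $\{x,y\}$ (contributing $\mathrm{sgn}((x,y))\rho_{xy}^2=-\rho_{xy}^2$) with the configuration in which $x,y$ are both $\sigma$-fixed and mutually mark each other along $\{x,y\}$ (contributing $(-\rho_{xy})^2=+\rho_{xy}^2$); these sum to zero. The general case is handled by choosing, on each trivial closed substructure of the effective digraph, a canonical edge (for instance the lexicographically smallest) and toggling its status between belonging to $\sigma$ and being a mark; a parity check shows that such a toggle flips the sign of the contribution while preserving its magnitude. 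The fixed points of the involution are the marked functional digraphs whose effective cycles all have non-trivial homology on the torus: their cycles are necessarily parallel (a CRSF property), and the marked non-cycle edges organise into branches oriented towards the cycles.

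Third, I would assemble the contribution of each underlying OCRSF $F\in\F(G_1)$ with tree components $T_1,\ldots,T_n$. For each component $T_i$ with non-trivial cycle $C_i$ of length $\ell_i$ and $b_i$ branch vertices, surviving configurations split into two classes according to whether $C_i$ is built from $\sigma$-arrows (``active'') or from marks (``inactive''). Active cycles inherit $\mathrm{sgn}(\sigma)=(-1)^{\ell_i-1}$ and $b_i$ mark factors of $-1$, and carry the phase $\zs^{h(T_i)}\ws^{v(T_i)}$; inactive cycles have $\sigma$ restricted to the component equal to the identity and accumulate $\ell_i+b_i$ mark factors of $-1$, with no phase. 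The two pieces combine to give a per-component factor proportional to $1-\zs^{h(T_i)}\ws^{v(T_i)}$, and after summing over OCRSFs $F$ (which accounts for both orientations of every non-trivial cycle) and collecting the edge weights as $\prod_{e\in F}\rho_e$ one recovers exactly the stated formula.

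The main obstacle is the sign bookkeeping in the third step: three sources of signs -- $\mathrm{sgn}(\sigma)$, the diagonal minus signs attached to marks, and the parity of the $\sigma$-cycles -- must recombine to produce exactly $(1-\zs^{h(T)}\ws^{v(T)})$ per tree, and one must verify that the active-vs-inactive pairing survives all components simultaneously without producing spurious global signs such as $(-1)^{|V(G_1)|}$. A secondary but more local obstacle is specifying the canonical toggle edge in step two uniformly across arbitrary homologically trivial substructures (not just $2$-cycles), and checking that the resulting map is genuinely an involution with no fixed points outside of the OCRSF configurations.
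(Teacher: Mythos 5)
Your route---Leibniz expansion of $\det\Delta_1(\zs,\ws)$ into marked functional digraphs followed by a sign-reversing involution---is a legitimate strategy and genuinely different from anything in the paper: this statement is imported from Forman (see also Lemma~9 of \cite{BoutillierdeTiliere:iso_perio}), and the in-paper analogue, Theorem~\ref{thm:matrix-tree}, is proved instead by factorizing the matrix through incidence matrices and applying Cauchy--Binet. However, two steps of your sketch are genuinely incomplete. First, the involution as you describe it (choose a canonical edge on a trivial closed substructure and toggle that single edge between ``belonging to $\sigma$'' and ``being a mark'') is not an operation on Leibniz terms: $\sigma$-arrows necessarily come in complete permutation cycles, so changing the status of one edge of a $\sigma$-cycle of length $\ell\geq 2$ leaves the remaining $\ell-1$ arrows forming a path that belongs to no permutation. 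The correct involution recolors an \emph{entire} homologically trivial cycle of the effective digraph (chosen canonically, say the one through the smallest vertex lying on a trivial cycle) from $\sigma$-type to mark-type and back; it is an involution because the underlying effective digraph is unchanged, it reverses the sign because $(-1)^{\ell-1}$ becomes $(-1)^{\ell}$, and it preserves the weight because a trivial cycle carries total phase $\zs^{0}\ws^{0}=1$---exactly what fails for non-trivial cycles, which is why they survive. You also use, without saying so, that every cycle of the effective digraph is monochromatic (all $\sigma$-arrows or all marks); this needs the one-line remark that the head of a $\sigma$-arrow is a non-fixed vertex, whose outgoing arrow is again a $\sigma$-arrow. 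Finally, the diagonal/off-diagonal split tacitly assumes $G_1$ simple, whereas fundamental domains here typically have loops and multiple edges (for $G=\ZZ^2$, $G_1$ is one vertex with two loops, whose contributions sit on the diagonal \emph{with} $\zs^{\pm 1},\ws^{\pm 1}$ phases); the expansion should be indexed by oriented edges rather than by matrix entries to cover this.

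Second, the sign bookkeeping that you yourself flag as the main obstacle is not a formality, and as set up it does not close. With your convention $\Delta_1=-D+A_1$, an ``active'' component (cycle of $\sigma$-arrows, branches marked) contributes $(-1)^{\ell-1+b}\rho(T)\,\zs^{h(T)}\ws^{v(T)}$ and the ``inactive'' one contributes $(-1)^{\ell+b}\rho(T)$, where $\ell$ and $b$ are the numbers of cycle and branch vertices of $T$; their sum is $(-1)^{|V(T)|}\rho(T)\bigl(1-\zs^{h(T)}\ws^{v(T)}\bigr)$, so the expansion produces the right-hand side of the theorem only up to a global factor $(-1)^{|V(G_1)|}$ (visible already for $G=\ZZ^2$: there $\det\Delta_1=\zs+\zs^{-1}+\ws+\ws^{-1}-4\leq 0$ on the unit torus, while the OCRSF sum is $\geq 0$). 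The clean resolution is to run the same expansion for $\det(D-A_1)$: then an active cycle contributes $(-1)^{\ell-1}\cdot(-1)^{\ell}\rho\,\zs^{h}\ws^{v}=-\rho\,\zs^{h}\ws^{v}$, an inactive one $+\rho$, every branch mark $+\rho$, and each component yields exactly $\rho(T)\bigl(1-\zs^{h(T)}\ws^{v(T)}\bigr)$ with no residual sign; one must then reconcile this with the sign convention chosen for $\Delta$ (the identity as stated is the one for the positive Laplacian $D-A$, which is Forman's convention). Until you make this choice explicit and carry the sign through, the concluding claim that the expansion ``recovers exactly the stated formula'' is unjustified---and this is precisely the step your proposal leaves open.
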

Note that the weight function $\rho$ is independent of the orientation
of the edges.

\section{Matrix-tree theorem for the Kasteleyn matrix}\label{sec:essential}

Consider a critical Ising model defined on an infinite, $\ZZ^2$-periodic
isoradial
graph $G$, and let $\GD$ be the Fisher graph of the corresponding critical dimer
model.
In Section~\ref{sec:matrix-tree}, we state and prove
Theorem~\ref{thm:matrix-tree}, which is a matrix-tree type theorem for the
Kasteleyn matrix, thus establishing that the critical dimer characteristic
polynomial can be rewritten as a generating function for OCRSFs of $\GD_1$. In
Section \ref{sec:charac}, we analyze this polynomial and show that the
contribution of some OCRSFs cancel out, leading to the definition of
\emph{essential OCRSFs} of $\GD_1$. Then, in Section \ref{sec:statementmain}, we
give a precise statement of Theorem \ref{thm:main}, and an idea of the proof.
Section~\ref{sec:notations} is dedicated to notations.

\subsection{Notations}\label{sec:notations}

We specify notations and terminology for the isoradial graph $G$
and for the corresponding Fisher graph $\GD$, which were introduced in
\cite{BoutillierdeTiliere:iso_gen}, see also Figure
\ref{fig:notations}. These will be used throughout the
remainder of the paper.

Edges of $\GD$ corresponding to edges of $G$ are referred to as {\em long}
edges, and edges of the decorations of $\GD$ are referred to as {\em short}
ones. 

Vertices of the graph $G$ are written in boldface, and those of $\GD$ with
normal symbols. Edges and edge subsets of $G$ are also written in boldface.
Let $x$ be a vertex of $\GD$, then $x$ belongs to the decoration
corresponding to a
unique vertex $\xb$ of $G$. We shall also denote by $\xb$
the decoration in $\GD$. Conversely, vertices of a
decoration $\xb$ of $\GD$ are labeled as follows. Let
$d_{\xb}$ be the degree of the vertex $\xb$ in $G$, then the 
decoration $\xb$ of $\GD$ consists of $d_{\xb}$ \emph{triangles}, labeled
$t_1(\xb),\cdots,t_{d_{\xb}}(\xb)$ in counterclockwise order, and $d_{\xb}$ {\em
inner edges}. Vertices of the triangle $t_k(\xb)$ are labeled 
$v_k(\xb),w_k(\xb),z_k(\xb)$ in counterclockwise order, where $v_k(\xb)$ is the
only vertex incident to a long edge. We also
refer to the triangle $t_k(\xb)$ as the {\em triangle of the vertex
$v_k(\xb)$}. Whenever no confusion occurs, we drop the argument $\xb$ in the
notations above.

\begin{figure}[h]
\begin{center}
\includegraphics[height=3.3cm]{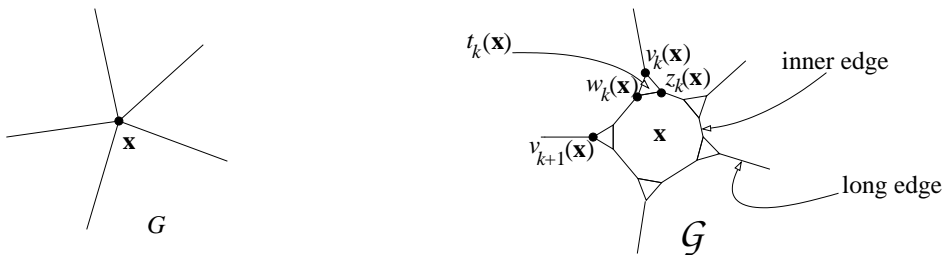}
\caption{Left: vertex $\xb$ of the graph $G$. Right: corresponding decoration
$\xb$ of~$\GD$.}
\label{fig:notations}
\end{center}
\end{figure}

Define a vertex $x$ of $\GD$ to be of {\it type `$v$'}, if
$x=v_k(\xb)$ for some vertex $\xb$ of $G$ and some $k\in\{1,\cdots,d_{\xb}\}$,
and similarly for `$w$' and `$z$'.

The isoradial embedding of the graph $G$ fixes an embedding of the
corresponding diamond graph $\GR$. There is a natural way of assigning rhombus
unit-vectors of $\GR$ to vertices of $\GD$: for every vertex $\xb$ of $G$, and
every $k\in\{1,\cdots,d_{\xb}\}$, let us associate the rhombus unit-vector
$e^{i\alpha_{w_{k}(\xb)}}$ to $w_k(\xb)$, $e^{i \alpha_{z_{k}(\xb)}}$ to
$z_k(\xb)$, and the two rhombus-unit vectors $e^{i\alpha_{w_{k}(\xb)}}$,
$e^{i\alpha_{z_{k}(\xb)}}$ to $v_k(\xb)$, as in Figure \ref{fig:rhombusvectors}
below. Note that $e^{i\alpha_{w_k(\xb)}}=e^{i\alpha_{z_{k+1}(\xb)}}$.

\begin{figure}[ht]
\begin{center}
\includegraphics[height=3.3cm]{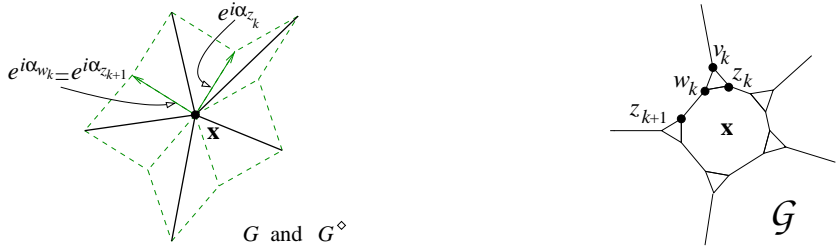}
\end{center}
\caption{Rhombus vectors of the diamond graph $\GR$ assigned to vertices of
$\GD$.}
\label{fig:rhombusvectors}
\end{figure}

\subsection{Matrix-tree theorem for the Kasteleyn matrix}\label{sec:matrix-tree}

In this section, we prove a matrix-tree type theorem for the Kasteleyn matrix
$K_1(\zs,\ws)$ of the critical dimer model on 
the graph $\GD_1$. A key requirement for such a
theorem to hold is to have a vector in the 
kernel of the matrix $K$, which is the subject of the next section.

\subsubsection{Vector in the kernel of the Kasteleyn matrix}\label{sec:421}

The vector in the kernel of the Kasteleyn matrix $K$ is naturally obtained by
setting $\lambda=0$ in the definition of the 
complex-valued function introduced in Section 4.2.2 of
\cite{BoutillierdeTiliere:iso_gen}. More precisely, we define $f=(f_x)_{\{x\in
V(\GD)\}}$, by:

\begin{equation}\label{def:f}
f_x=
\begin{cases}
e^{-i\frac{\alpha_{w_k(\xb)}}{2}}&\text{ if } x=w_k(\xb)\\
-e^{-i\frac{\alpha_{z_k(\xb)}}{2}}&\text{ if } x=z_k(\xb)\\
f_{w_k(\xb)}+f_{z_k(\xb)}&\text{ if } x=v_k(\xb),
\end{cases}
\end{equation}
for every vertex $\xb$ of $G$, and every $k\in\{1,\cdots,d_{\xb}\}$. Then,
setting $\lambda=0$ in Proposition~$15$ of \cite{BoutillierdeTiliere:iso_gen}
yields:
\begin{lem}\label{lem:kernel}
The vector $f$ is in the kernel of the matrix $K$. That is, if we let $x$ be a
vertex of $\GD$, and $x_1,x_2,x_3$ be its three neighbors, then:
\begin{equation*}
(Kf)_x=\sum_{i=1}^3 K_{x,x_i}f_{x_i}=0.  
\end{equation*}
\end{lem}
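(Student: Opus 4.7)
The plan is to verify $(Kf)_x = 0$ vertex by vertex. Since $\GD$ is $3$-regular, the identity is purely local, with only three terms at each $x$, and vertices split into the three types $v_k(\xb)$, $w_k(\xb)$, $z_k(\xb)$ of Section~\ref{sec:notations}, which I would treat in turn.

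For $x = w_k(\xb)$, the three neighbors are the two triangle-mates $v_k(\xb), z_k(\xb)$ plus $z_{k+1}(\xb)$ across the adjacent inner edge; all three edges are short of weight $1$, so only Kasteleyn signs appear. Substituting the recursion $f_{v_k(\xb)} = f_{w_k(\xb)} + f_{z_k(\xb)}$ and using the identification $\alpha_{z_{k+1}(\xb)} = \alpha_{w_k(\xb)}$ of rhombus vectors on either side of the inner edge, which gives $f_{z_{k+1}(\xb)} = -f_{w_k(\xb)}$, the sum collapses into a linear combination of $f_{w_k(\xb)}$ and $f_{z_k(\xb)}$. The coefficient of $f_{z_k(\xb)}$ vanishes because each triangle is oriented clockwise, so the two contributions from the triangle edges $w_k \to v_k$ and $z_k \to w_k$ cancel; the coefficient of $f_{w_k(\xb)}$ vanishes because the Kasteleyn clockwise-odd condition on the elementary face bordered by the inner edge pins the remaining sign. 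The $z$-type case is completely symmetric.

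The $v$-type case is the substantive one, and is where the critical weight enters. At $x = v_k(\xb)$, two neighbors are the triangle-mates $w_k(\xb), z_k(\xb)$ (short edges of weight $1$), and the third is $v_l(\yb)$ across a long edge $e$ of weight $\cot(\theta_e/2)$. Using the recursion at $v_l(\yb)$ to expand $f_{v_l(\yb)} = f_{w_l(\yb)} + f_{z_l(\yb)}$ together with the explicit formulas for $f$ at the four $w$- and $z$-type vertices involved, the three-term relation reduces to a trigonometric identity among the four rhombus unit-vectors attached to the two endpoints of $e$. These four vectors are precisely the sides of the rhombus of $\GR$ dual to $e$, with half-angle $\theta_e$, and the identity holds exactly when the long-edge weight equals $\cot(\theta_e/2)$ — this is where the ``critical'' choice of $\nu$ in \eqref{equ:dimercritical} is forced. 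I expect this case to be the main obstacle, not in depth but in sign bookkeeping: one must carefully align the Kasteleyn orientation on $e$ with the two recursive definitions of $f$ at its endpoints, and check that the resulting combination of $e^{-i\alpha/2}$ terms reproduces $\cot(\theta_e/2)$ via the half-angle formula. As indicated in the statement, the whole verification is the specialization of Proposition~$15$ of \cite{BoutillierdeTiliere:iso_gen} at $\lambda = 0$.
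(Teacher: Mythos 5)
Your verification is correct, but it takes a genuinely different route from the paper, which contains no computation at all for this lemma: there the statement is obtained purely by specializing at $\lambda=0$ Proposition $15$ of \cite{BoutillierdeTiliere:iso_gen}, where the $\lambda$-deformed function was introduced and the local relation proved once and for all. Your plan --- checking $(Kf)_x=0$ separately at $w$-, $z$- and $v$-type vertices --- is the self-contained version of that proof, and it does close: at a $v$-vertex, the relation $f_{v_\ell(\yb)}=i\,\eps_{v_k(\xb),v_\ell(\yb)}f_{v_k(\xb)}$ (a consequence of the angle rule across long edges) reduces $(Kf)_{v_k}=0$ to the identity $-(1+e^{-i\theta_e})+i\cot(\theta_e/2)(e^{-i\theta_e}-1)=0$, which is exactly where criticality is forced; these are the same elementary manipulations the paper only performs later, in the proof of Theorem~\ref{prop:weightpres}. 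What your route buys is transparency (one sees explicitly that $\cot(\theta_e/2)$ is the unique long-edge weight putting $f$ in the kernel of $K$); what the paper's citation buys is brevity and compatibility with the whole $\lambda$-family of functions used in \cite{BoutillierdeTiliere:iso_gen}. One point to tighten in your sign bookkeeping at $w$- (and symmetrically $z$-) vertices: the identity $f_{z_{k+1}}=-f_{w_k}$ is not orientation-independent. By the inductive definition of the angles in $\RR/4\pi\ZZ$ of Section~\ref{sec:421}, $\alpha_{z_{k+1}}$ equals $\alpha_{w_k}$ or $\alpha_{w_k}+2\pi$ according to the orientation of the inner edge, so $f_{z_{k+1}}=\mp f_{w_k}$; what is orientation-independent is the product $K_{w_k,z_{k+1}}f_{z_{k+1}}=-f_{w_k}$, which then cancels the $+f_{w_k}$ coming from $K_{w_k,v_k}f_{v_k}$. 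So the ``pinning of the remaining sign'' comes from the definition of the half-angles relative to the chosen Kasteleyn orientation (and likewise from the rule across long edges), rather than from invoking clockwise-oddness of a face directly; with that adjustment your argument is complete.
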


\begin{rem}
In order for the vector $f$ to be well defined, the angles $\alpha_{w_k(\xb)}$,
$\alpha_{z_k(\xb)}$ need to be well defined mod $4\pi$, indeed half-angles need
to be well defined mod $2\pi$. The latter are defined inductively in
\cite{BoutillierdeTiliere:iso_gen} as follows, see also Figure \ref{fig:angles}.
Note that the definition relies on our choice of Kasteleyn orientation of
Section \ref{sec:dimercharact}. Fix a vertex $\xb_0$
of $G$, and set $\alpha_{z_1(\xb_0)}=0$. Then, for vertices of $\GD$ in
the decoration of a vertex $\xb\in G$, define:
\begin{align}\label{eq:angle_intra_deco}
\nonumber
&\alpha_{w_k(\xb)}=\alpha_{z_k(\xb)}+2\theta_k(\xb),\text{ where
  $\theta_k(\xb)>0$ is the rhombus half-angle of Figure \ref{fig:angles},}\\ 
&\alpha_{z_{k+1}(\xb)}=
\begin{cases}
\alpha_{w_k(\xb)}& \text{if the edge } w_k(\xb) z_{k+1}(\xb) \text{ is oriented
from } w_k(\xb) \text{ to } z_{k+1}(\xb) \\
\alpha_{w_k(\xb)}+2\pi&\text{else}.
\end{cases}
\end{align}
Here is the rule defining angles in the neighboring decoration, corresponding to
a vertex $\yb$ of $G$. Let $k$ and $\l$ be indices such that $v_k(\xb)$ is
adjacent to $v_\l(\yb)$ in $\GD$. Then, define:
\begin{equation}\label{eq:angle_chg_deco}
\alpha_{w_\l(\yb)}=
\left\{
\begin{array}{ll}
\alpha_{w_k(\xb)}-\pi&\text{if the edge } v_k(\xb)v_\l(\yb) \text{ is
  oriented from } v_k(\xb)\text { to } v_\l(\yb)\\
\alpha_{w_k(\xb)}+\pi&\text{ else}.
\end{array}
\right.
\end{equation}

\begin{figure}[ht]
\begin{center}
\includegraphics[width=13cm]{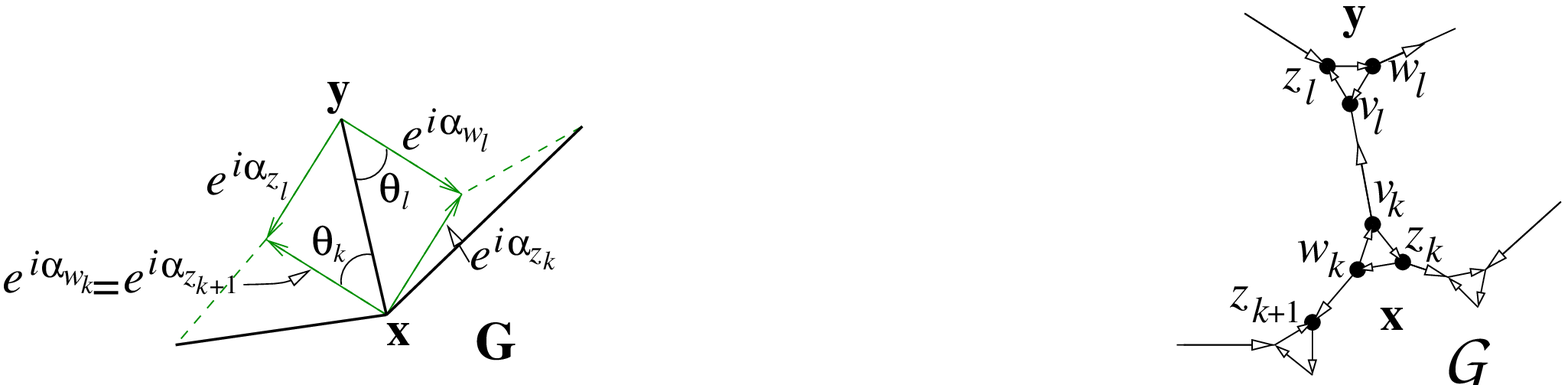}
\end{center}
\caption{Notations for the definition of the angles in $\RR/4\pi\ZZ$.}
\label{fig:angles}
\end{figure}

\begin{lem}[\rm\cite{BoutillierdeTiliere:iso_gen}]\label{lem:angles4pi}
For every vertex $\xb$ of $G$, and every $k\in\{1,\cdots,d(\xb)\}$, the
angles 
$\alpha_{w_k(\xb)}$, $\alpha_{z_k(\xb)}$, are well defined in $\RR/4\pi\ZZ$.
\end{lem}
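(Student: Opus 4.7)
The plan is to show that the rules \eqref{eq:angle_intra_deco}--\eqref{eq:angle_chg_deco} give a well-posed definition modulo $4\pi$, i.e.\ that the total angle increment accumulated around any closed walk in the implicit ``angle-definition graph'' vanishes modulo $4\pi$. The definition graph has as nodes the $w$- and $z$-vertices of $\GD$; its three types of edges correspond to the three cases of the rules, carrying respectively the increments $\pm 2\theta_k(\xb)$ on triangle edges $z_k \leftrightarrow w_k$, $0$ or $\pm 2\pi$ on inner edges $w_k \leftrightarrow z_{k+1}$ (depending on the Kasteleyn orientation of that short edge), and $\pm \pi$ on cross-decoration edges $w_k(\xb) \leftrightarrow w_{\ell}(\yb)$ (depending on the Kasteleyn orientation of the long edge $v_k(\xb) v_{\ell}(\yb)$). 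By connectedness, it suffices to verify the claim on a generating family of cycles, which I take to be the face-boundaries of the embedding, plus two non-trivial homology generators in the toroidal case.

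First I would treat the central face of each decoration $\xb$, whose definition-graph boundary is the cycle $z_1, w_1, z_2, w_2, \ldots, z_{d_{\xb}}, w_{d_{\xb}}$. The accumulated increment is
\[
\sum_{k=1}^{d_{\xb}} 2\theta_k(\xb) \;+\; 2\pi\, m \;=\; 2\pi + 2\pi\, m,
\]
where $m$ counts the inner edges oriented from $z_{k+1}$ to $w_k$ in the Kasteleyn orientation, and I have used that the rhombus half-angles around a vertex of $G$ tile a full turn. The chosen convention that every decoration triangle is oriented clockwise, combined with the clockwise-odd Kasteleyn condition on the central face, forces $m$ to be odd, so the total lies in $4\pi \ZZ$.

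Next I would treat the cycle associated to a face of $G$ of length $n$: it is a concatenation of $n$ cross-decoration edges (each contributing $\pm \pi$) and, between them, arcs of triangle and inner edges inside each incident decoration. The geometric part, coming from the half-angles $2\theta_k$ along these arcs, integrates to $2\pi$ times the winding number of the outward rhombus unit-vector around the face, which is $\pm 1$; the combinatorial contributions (the inner-edge $2\pi$'s and the long-edge $\pm \pi$'s) combine, via the clockwise-odd Kasteleyn condition applied to the corresponding face of $\GD$, to give the exact parity needed for the total to lie in $4\pi \ZZ$. In the toroidal case, two further checks on non-trivial homology generators reduce, by the same mix of geometric winding and Kasteleyn parity, to analogous conditions.

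The main obstacle is the bookkeeping in the second step: one must identify precisely which elementary face of $\GD$ supplies the relevant clockwise-odd condition, and align it with the geometric winding of the rhombus vectors around the face, so that the $2\pi$'s from geometry, the $2\pi$'s from inner edges, and the $\pi$'s from long edges all combine with the right parity. Once this alignment is set up, the verification is a direct parity check driven by the Kasteleyn property.
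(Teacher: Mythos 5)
The paper itself gives no proof of this lemma (it is imported from \cite{BoutillierdeTiliere:iso_gen}), so your argument can only be judged on its own merits. Your strategy is the right one: the rules \eqref{eq:angle_intra_deco}--\eqref{eq:angle_chg_deco} define the angles consistently mod $4\pi$ iff the accumulated increment vanishes mod $4\pi$ around every cycle of the planar ``definition graph'', and since that graph is planar it suffices to check its face boundaries (the lemma concerns the infinite planar $\GD$, so your extra homology generators are not even needed). Your first verification is complete and correct: around the central $2d_{\xb}$-gon of a decoration the half-angles contribute $\sum_k 2\theta_k(\xb)=2\pi$, and because the clockwise-triangle convention makes all $d_{\xb}$ triangle edges counter-oriented for the clockwise traversal of that face, the clockwise-odd condition forces the number $m$ of inner edges oriented $z_{k+1}\to w_k$ to be odd, so the total lies in $4\pi\ZZ$.

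The gap is the second step, which you explicitly defer, and the quantitative claim you make about it is not right as stated. Around a face of $G$ of degree $n$ the geometric part is not ``$2\pi$ times a winding number $\pm 1$'': the rhombus angles at the circumcenter satisfy $\sum_{e\in\partial \mathrm{face}}(\pi-2\theta_e)=2\pi$, so the half-angles picked up along the cycle sum to $(n-2)\pi$, which equals $\pm 2\pi$ only when $n=4$. The argument closes as follows. Traverse the cycle so that at each corner you use one inner edge (contributing $0$, or $2\pi$ for the $n-p$ inner edges not oriented $w\to z$ along the traversal) and one triangle relation (contributing $2\theta$ of the outgoing boundary edge), while each of the $n$ long-edge crossings contributes $-\pi$ if the long edge is co-oriented with the traversal ($q$ of them) and $+\pi$ otherwise. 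The total is $(n-2)\pi+2\pi(n-p)+\pi(n-2q)=2\pi(2n-1-p-q)$, so well-definedness mod $4\pi$ amounts exactly to $p+q$ being odd. This is precisely the clockwise-odd condition on the $4n$-gon face of $\GD$ bounding the same region: under the clockwise-triangle convention its $2n$ triangle edges are all counter-oriented for this traversal, so its co-oriented edges are exactly the $p$ inner and $q$ long edges, and the parity is the same for either direction of traversal since $4n$ is even. So your plan does go through, but as written the decisive check --- the one the lemma really rests on --- is asserted rather than proved, and the ``winding number'' bookkeeping you propose would not produce the correct balance without the $(n-2)\pi$ count above.
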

\end{rem}

\subsubsection{Matrix-tree theorem}

The matrix-tree type theorem is most easily written for the matrix $K^0$, which
is the following gauge transformation of the Kasteleyn matrix $K$:
\begin{equation*}
K^0=D^* K D, 
\end{equation*}
where $D$ is the diagonal matrix whose elements are indexed by vertices of
$\GD$, and such that $D_{x,x}=f_x$. Let $\Pdimer^0(\zs,\ws)=\det K_1^0(\zs,\ws)$
be the characteristic polynomial of the matrix $K^0$, then we clearly have:
\begin{equation}\label{equ:30}
\Pdimer^0(\zs,\ws)=\left(\prod_{x\in V(\GD_1)}|f_x|^2\right)\Pdimer(\zs,\ws). 
\end{equation}

\begin{thm}\label{thm:matrix-tree}{\rm [Matrix-tree theorem for the Kasteleyn
matrix]}$\,$\\
The critical dimer characteristic polynomial $\Pdimer^0(\zs,\ws)$ is the
following
combinatorial sum:
\begin{equation*}
\Pdimer^0(\zs,\ws)=\sum_{F\in\F(\GD_1)}\left(\prod_{e=(x,y)\in
F}f_x\overline{f_y}K_{x,y}\right)\prod_{T\in F}(1-\zs^{h(T)}\ws^{v(T)}).
\end{equation*}
\end{thm}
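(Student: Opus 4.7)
The plan is to derive Theorem~\ref{thm:matrix-tree} from a Forman-type matrix-tree identity for the gauge-transformed matrix $K^0$. The key structural input is that $f$ lies in the kernel of $K$ (Lemma~\ref{lem:kernel}), which after the gauge transform translates into a vanishing row-sum property for $K^0$.

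First I would verify that $K^0\mathbf{1}=0$, where $\mathbf{1}$ is the all-ones vector on $V(\GD)$. Since $D$ is diagonal with $D_{x,x}=f_x$, one has $D\mathbf{1}=f$, and therefore $K^0\mathbf{1}=D^{*}KD\mathbf{1}=D^{*}(Kf)=0$. Combined with the fact that $K^0$ inherits a zero diagonal from $K$, this places $K^0$ in the setting of matrix-tree theorems for row-sum-zero matrices, here in its toroidal form incorporating the Fourier parameters $(\zs,\ws)$.

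Next I would expand $\det K_1^0(\zs,\ws)$ via the Leibniz formula. Since $K^0$ has zero diagonal, only permutations $\sigma$ of $V(\GD_1)$ whose cycles use edges of $\GD_1$ contribute, and each such $\sigma$ is a disjoint cycle cover; as discussed in Section~\ref{sec:dimercharact}, non-trivial cycles are parallel and each non-trivial cycle of homology $(h,v)$ produces an additional factor $\zs^{h}\ws^{v}$. To convert this cycle-cover sum into a sum indexed by OCRSFs, I would group cycle covers by their set of non-trivial cycles, and reorganize the remaining trivial-cycle part using the row-sum-zero identity $\sum_{y\neq x}K^0_{x,y}=0$. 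For a fixed OCRSF $F$ with trees $T_1,\ldots,T_n$, the two orientations of the cycle of each $T_i$ account for the two contributions $+1$ and $-\zs^{h(T_i)}\ws^{v(T_i)}$ in the factor $(1-\zs^{h(T_i)}\ws^{v(T_i)})$, while summing over all ways of extending the tree branches of $F$ into full cycle covers (via doubled edges or larger trivial loops) telescopes by iterated use of the kernel relation, leaving only the canonical configuration in which each non-cycle vertex points through its unique outgoing edge of $F$. Specializing the resulting abstract identity to $M=K^0$ and writing $K^0_{x,y}=\overline{f_x}K_{x,y}f_y$ along oriented edges of each OCRSF (with $K$ real, so that along tree branches this is equivalent to $f_x\overline{f_y}K_{x,y}$ up to the Leibniz sign) yields the weight appearing in the statement.

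I expect the main obstacle to be the sign bookkeeping. Three independent sources of signs must combine correctly: the Leibniz permutation sign $\mathrm{sgn}(\sigma)$; the product of Kasteleyn signs $\varepsilon_{x,y}$ around each cycle, controlled by the clockwise-odd property for trivial cycles and producing a definite sign per non-trivial cycle of given homology class; and the sign from cycle-orientation reversal that generates the minus in each factor $(1-\zs^{h(T)}\ws^{v(T)})$. A secondary point is that Forman's theorem is classically stated for Laplacians with nonnegative off-diagonal weights, whereas $K^0$ is complex-valued; however, the proof uses only the zero-diagonal and row-sum-zero structure, so it carries over to this setting once the signs are properly tracked.
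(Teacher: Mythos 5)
Your starting observation is correct and is indeed the structural heart of the matter: $K^0\mathbf{1}=D^*KD\mathbf{1}=D^*(Kf)=0$, so $K^0$ is a complex, oriented-edge-weighted Laplacian-type matrix, and a Forman-type expansion of $\det K_1^0(\zs,\ws)$ would give the theorem. The gap is that the step carrying all the weight is only asserted. The version of Forman's theorem you need (toroidal, weights in $\CC$ depending on the orientation of the edge, twisted by $\zs,\ws$) is not quoted in the paper — Forman is invoked only for positive unoriented weights — so your Leibniz-plus-telescoping argument is not a reduction to a known result; it \emph{is} the theorem. And as sketched it does not yet go through: since the diagonal of $K^0$ is literally zero, the Leibniz expansion of $\det K_1^0(\zs,\ws)$ contains only full cycle covers and no tree-branch terms at all; to make OCRSFs appear you must rewrite $0=\sum_{y\neq x}K^0_{x,y}$ at selected vertices and then exhibit a sign-reversing cancellation showing that, for a fixed family of non-trivial cycles, the trivial-cycle completions cancel against exactly the forest completions with the correct signs. "Telescopes by iterated use of the kernel relation, leaving only the canonical configuration" names the desired conclusion without specifying the involution or verifying the signs. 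The signs are not benign here: non-trivial cycles of $\GD_1$ may have odd length, and the clean factor $(1-\zs^{h(T)}\ws^{v(T)})$ emerges only because the Leibniz sign $(-1)^{n+1}$ of an $n$-cycle combines with the factor $(-1)^{n}$ coming from the skew-symmetry of $K$, giving a parity-independent $-1$; this is exactly the computation the paper performs and that your outline defers to "sign bookkeeping".

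There is also a concrete incorrect claim in your weight bookkeeping. A Forman-type expansion applied directly to $K_1^0(\zs,\ws)$ weights the oriented edge $(x,y)$ by the matrix entry $K^0_{x,y}=\overline{f_x}\,f_y\,K_{x,y}$, whereas the theorem asserts the conjugate weight $f_x\overline{f_y}K_{x,y}$. These are not "equivalent up to the Leibniz sign": for an individual OCRSF each vertex is the tail of exactly one edge but the head of a variable number of edges, so the phases do not cancel and the two products genuinely differ (the weights are non-real). Passing from one form to the other requires an extra argument; the paper does it by using that $K_1^0(\zs,\ws)$ is skew-Hermitian on $\TT^2$ together with the evenness of $|V(\GD_1)|$, so that the Cauchy--Binet expansion can be taken at $(\zs^{-\frac12},\ws^{-\frac12})$, and any Leibniz-based proof must build in an analogous transposition/conjugation step while keeping track of the simultaneous flip of the homology exponents. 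For comparison, the paper reaches the statement by a fully executed, different mechanism: the incidence factorization $K^0=MN$, Cauchy--Binet, kernel vectors supported on trivial cycles to kill all non-CRSF edge subsets, and then a per-CRT block computation along leaves and around the cycle; those are precisely the details your proposal still needs to supply.
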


In order to prove Theorem \ref{thm:matrix-tree}, we introduce two
modified incidence matrices $M$ and 
$N$ associated to the graph $\GD$, defined as follows. Rows of $M$ are indexed
by vertices of $\GD$, columns by 
unoriented edges of $\GD$, and:
\begin{equation*}
M_{x,e}=\begin{cases}
K_{x,y}&\text{ if the edge $e$ is incident to $x$, and $e=xy$}\\
0&\text{ if $e$ is not incident to $x$}.
\end{cases} 
\end{equation*}
Rows of $N$ are indexed by unoriented edges of $\GD$, columns by vertices of
$\GD$, and:
\begin{equation*}
N_{e,x}=\begin{cases}
f_x \overline{f_y} &\text{ if the edge $e$ is incident to $x$, and $e=xy$}\\
0&\text{ if $e$ is not incident to $x$}.
\end{cases} 
\end{equation*}
The next lemma relates the Kasteleyn matrix $K^0$, and the incidence matrices
$M$ and~$N$.
\begin{lem}\label{lem:incidence}
The following identity holds:
\begin{equation}\label{eq:incidence}
K^0=MN. 
\end{equation}
\end{lem}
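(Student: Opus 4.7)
The plan is to verify the matrix identity $K^0 = MN$ entrywise, for each pair $(x,x')$ of vertices of $\GD$, by reducing the sum defining $(MN)_{x,x'}$ to the single (or zero) edges common to $x$ and $x'$.

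First I would write out the product: by definition,
\begin{equation*}
(MN)_{x,x'} = \sum_{e \in E(\GD)} M_{x,e}\, N_{e,x'},
\end{equation*}
and observe that for a term to contribute, the edge $e$ must be incident to both $x$ and $x'$. This splits into three cases. If $x \neq x'$ and $x \not\sim x'$, no such edge exists, both sides are $0$, and there is nothing to check. If $x \neq x'$ and $x \sim x'$, then exactly one edge $e = xx'$ contributes; unpacking the definitions gives $M_{x,e} = K_{x,x'}$ and $N_{e,x'} = f_{x'}\overline{f_x}$, so $(MN)_{x,x'} = \overline{f_x}\, K_{x,x'}\, f_{x'}$, which is exactly the $(x,x')$-entry of $K^0 = D^* K D$.

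The only nontrivial case is the diagonal $x = x'$. Here $(K^0)_{x,x} = 0$ since $K$ has no diagonal entries, whereas
\begin{equation*}
(MN)_{x,x} = \sum_{y \sim x} K_{x,y}\, f_x\, \overline{f_y} = f_x \sum_{y \sim x} K_{x,y}\, \overline{f_y}.
\end{equation*}
This is where Lemma \ref{lem:kernel} enters: the vector $f$ lies in the kernel of $K$, so $\sum_{y \sim x} K_{x,y} f_y = 0$. Since $K$ is real (indeed $K_{x,y} = \eps_{x,y}\nu_{xy}$ with $\eps_{x,y} \in \{-1,0,1\}$ and $\nu_{xy} > 0$), conjugating the relation yields $\sum_{y \sim x} K_{x,y}\, \overline{f_y} = 0$ as well, so $(MN)_{x,x} = 0$. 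This closes the diagonal case and establishes \eqref{eq:incidence}.

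There is no real obstacle: the identity is essentially a bookkeeping check, and the only substantive ingredient is Lemma \ref{lem:kernel}, which has already been recorded. The mild subtlety to be careful about is that the entries of $f$ are complex while those of $K$ are real, so one must pass through the complex conjugate in the diagonal case; apart from that, every other step is a direct unfolding of the definitions of $M$, $N$ and $K^0$.
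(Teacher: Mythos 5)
Your proof is correct and follows essentially the same route as the paper: an entrywise check split into the non-adjacent, adjacent, and diagonal cases, with the diagonal handled by combining Lemma \ref{lem:kernel} with the realness of $K$ to pass to the conjugated relation $\sum_{y\sim x}K_{x,y}\overline{f_y}=0$. Nothing is missing; your treatment of the non-adjacent case (any two distinct non-neighboring vertices, not just those at distance greater than two) is if anything slightly cleaner than the paper's phrasing.
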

\begin{proof}
Let $x$ and $y$ be two vertices of $\GD$. If $x$ and $y$ are at distance more
than $2$, then clearly $(MN)_{x,y}=0=K^0_{x,y}$. If $x$ and $y$ are neighbors,
we have:
\begin{align*}
(MN)_{x,y}=M_{x,e}N_{e,y}=K_{x,y} f_y \overline{f_x}=(D^*KD)_{x,y}=K^0_{x,y}.
\end{align*}
If $x=y$, then:
\begin{align*}
(MN)_{x,x}&= \sum_{e \text{ incident to } x} M_{x,e}N_{e,x}
=\sum_{y\sim x} K_{x,y} f_x\overline{f_y}\\
&=f_x\sum_{y\sim x} \overline{K_{x,y}f_y}, \quad\quad\text{ (since $K$ is
real)}\\
&=0, \quad\quad\text{ (by Lemma \ref{lem:kernel})}\\
&=K^0_{x,x}.
\end{align*}
\end{proof}

\begin{proof}[Proof of Theorem \ref{thm:matrix-tree}]
The proof is similar to that of the matrix-tree theorem on the torus, see
\cite{BoutillierdeTiliere:iso_perio} for example. Taking the
Fourier transform of Equation 
\eqref{eq:incidence} yields,
$K_1^0(\zs,\ws)=M_1(\zs^{\frac{1}{2}},\ws^{\frac{1}{2}})N_1(\zs^{\frac{1}{2}},
\ws^ { \frac{1}{2}}),$ where $M_1(\zs,\ws)$ and $N_1(\zs,\ws)$ are the Fourier
transform of the matrices $M$ and $N$.
Moreover, since for $(\zs,\ws)\in\TT^2$ the matrix $K_1^0(\zs,\ws)$ is
skew-hermitian, we have:
\begin{equation*}
\Pdimer^0(\zs,\ws)=\det\left(M_1(\zs^{-\frac{1}{2}},\ws^{-\frac{1}{2}})N_1(\zs^{
-\frac{1}{2}},\ws^{-\frac{1}{2}})\right),
\end{equation*}
where we also use the fact that $|V(\GD_1)|=6|E(G_1)|$ is
even. We now use Cauchy-Binet's formula to expand $\Pdimer^0(\zs,\ws)$:
\begin{equation}\label{eq:Cauchy_Binet}
\Pdimer^0(\zs,\ws)=\sum_{\scriptstyle\begin{array}{c}
                S\subset E(\GD_1)\\
                |S|=|V(\GD_1)|
               \end{array}
}\det(M_1(\zs^{-\frac{1}{2}},\ws^{-\frac{1}{2}})^S)
\det(N_1(\zs^{-\frac{1}{2}},\ws^{-\frac{1}{2}})_S). 
\end{equation}
Recall that $\gamma_{\hs}$, $\gamma_{\vs}$ are two paths in the dual graph of
$\GD_1$
winding once around the torus horizontally and vertically respectively.
Suppose first that $S$ contains a trivial cycle, denoted by $C$, which
does not cross the horizontal or the vertical cycle $\gamma_{\hs}$,
$\gamma_{\vs}$. Let
us show that in this case,  
$\det(M_1(\zs^{-\frac{1}{2}},\ws^{-\frac{1}{2}})^S)=~0$. Define
$\vphi:V(\GD_1)\rightarrow\CC$ by:
\begin{equation*}
\vphi_x=
\begin{cases}
1&\text{ if $x\in C$}\\
0&\text{ else}.
\end{cases}
\end{equation*}
Then, for every unoriented edge $e\in S$, we have:
\begin{align*}
(\vphi^t &M_1(\zs^{-\frac{1}{2}},\ws^{-\frac{1}{2}})^S)_{e}=\sum_{x\in
V(\GD_1)}\vphi_x M_1(\zs^{-\frac{1}{2}},\ws^{-\frac{1}{2}})_{x,e}\\
&=
\begin{cases}
0& \text{ if } e\notin C\\
\vphi_x M_1(\zs^{-\frac{1}{2}},\ws^{-\frac{1}{2}})_{x,e}+\vphi_y
M_1(\zs^{-\frac{1}{2}},\ws^{-\frac{1}{2}})_{y,e}=K_{x,y}+K_{y,x}=0&\text{ if
}e=xy\in C.
\end{cases}
\end{align*}
Thus, $\det(M_1(\zs^{-\frac{1}{2}},\ws^{-\frac{1}{2}})^S)=0$. If $C$ crosses the
horizontal and/or vertical cycles $\gamma_{\hs}$, $\gamma_{\vs}$, then the
vector
$\vphi$ in the kernel of $(M_1(\zs^{-\frac{1}{2}},\ws^{-\frac{1}{2}})^S)^t$ can
be defined in a similar way. Therefore, the only contribution to the sum
\eqref{eq:Cauchy_Binet} comes from graphs whose number of edges equals the
number of vertices and which contain no trivial cycle, {\em i.e.} from CRSFs.

Let us now compute the contribution of a CRSF $F$. After a possible reordering
of the rows and columns of $M_1(\zs^{-\frac{1}{2}},\ws^{-\frac{1}{2}})$ and
$N_1(\zs^{-\frac{1}{2}},\ws^{-\frac{1}{2}})$, we can suppose that both these
matrices are block diagonal, each diagonal block corresponding to a connected
component of $F$, {\em i.e.} a cycle rooted tree. 

The determinant of a CRT $T$ in $M_1(\zs^{-\frac{1}{2}},\ws^{-\frac{1}{2}})$
(resp. $N_1(\zs^{-\frac{1}{2}},\ws^{-\frac{1}{2}})$) can be evaluated by
expanding it along columns (resp. rows) corresponding to leaves of the CRT.
What remains then is the evaluation of the determinant reduced to the cycle.
More precisely, suppose that edges of the branches are oriented from the leaves
to the non-trivial cycle. Then, the contribution of the branches to
$\Pdimer^0(\zs,\ws)$, is: 
\begin{equation*}
\prod_{(x,y)\in \text{ branch of }T}
K_{x,y}f_x\overline{f_y}.
\end{equation*}
Recall the definition of the reference number $H_0(T)=(h_0(T),v_0(T))$ of the
CRT $T$, given in Section \ref{subsec321}, and let
$x_1,\cdots,x_n$, be a labeling
of the vertices of its non-trivial cycle in the direction given by $H_0(T)$.
Then, the contribution of $T$ to $M_1(\zs^{-\frac{1}{2}},\ws^{-\frac{1}{2}})$
is:
\begin{equation*}
\left(\prod_{i=1}^n
K_{x_i,x_{i+1}}\right)\zs^{-\frac{h_0(T)}{2}}\ws^{-\frac{v_0(T)}{2}}+(-1)^{n+1}
\left(\prod_{i=1}^n
K_{x_{i+1},x_i}\right)\zs^{\frac{h_0(T)}{2}}\ws^{\frac{v_0(T)}{2}},
\end{equation*}
and the contribution to $N_1(\zs^{-\frac{1}{2}},\ws^{-\frac{1}{2}})$ is:
\begin{equation*}
\left(\prod_{i=1}^n
f_{x_i}\overline{f_{x_{i+1}}}\right)\zs^{\frac{h_0(T)}{2}}\ws^{\frac{v_0(T)}{2}}
+(-1)^
{n+1}\left(\prod_{i=1}^n
\overline{f_{x_i}}f_{x_{i+1}}\right)\zs^{-\frac{h_0(T)}{2}}\ws^{-\frac{v_0(T)}{2
}} . 
\end{equation*}
Since the Kasteleyn matrix $K$ is skew-symmetric, we have $(-1)^{n+1}
\prod_{i=1}^{n} K_{x_{i+1},x_i}=-\prod_{i=1}^n K_{x_i,x_{i+1}}$, 
thus the contribution of the CRT $T$ is:
\begin{equation*}
\left(\prod_{i=1}^n
f_{x_i}\overline{f_{x_{i+1}}}K_{x_i,x_{i+1}}\right)(1-\zs^{h_0(T)}\ws^{v_0(T)})
+\left(\prod_{i=1}^n
f_{x_{i+1}}\overline{f_{x_i}}K_{x_{i+1},x_i}\right)(1-\zs^{-h_0(T)}\ws^{-v_0(T)}
).
\end{equation*}
Let $T_1$ and $T_2$ be the two OCRTs corresponding to $T$, then the contribution
of the CRT~$T$ can be rewritten as:
\begin{equation*}
\sum_{i=1}^2 \left(\prod_{e=(x,y)\in
T_i}f_x\overline{f_y}K_{x,y}\right)\left(1-\zs^{h(T_i)}\ws^{v(T_i)}\right). 
\end{equation*}
Taking the product over all CRTs of the CRSF $F$, and summing over all CRSFs
yields:
\begin{align*}
\Pdimer^0(\zs,\ws)&=\sum_{F\in\{\text{CRSF of }\GD_1\}} \prod_{T\in\{\text{CRT
of F}\}}\sum_{i=1}^2 \left(\prod_{e=(x,y)\in
T_i}f_x\overline{f_y}K_{x,y}\right)\left(1-\zs^{h(T_i)}\ws^{v(T_i)}\right),\\
&= \sum_{F\in \F(\GD_1)}\prod_{T\in F}\left(\prod_{e=(x,y)\in
T}f_x\overline{f_y}K_{x,y}\right)\left(1-\zs^{h(T)}\ws^{v(T)}\right).
\end{align*}
\end{proof}


\subsection{Characterization of OCRSFs contributing to
$\Pdimer^0(\zs,\ws)$}\label{sec:charac}

In this section, we characterize OCRSFs of $\GD_1$ which contribute to
$\Pdimer^0(\zs,\ws)$. Indeed, it turns out that the contributions of some of
them cancel out. 

More precisely, let $\Lb$ be a subset of oriented edges of
$G_1$, defining a subset of oriented long edges $\Lb$ of $\GD_1$. An OCRSF $F$
of $\GD_1$ is said to be {\em compatible with $\Lb$}, if the long edges of $F$
are exactly those of $\Lb$. We first characterize OCRSFs
compatible with $\Lb$, and then OCRSFs compatible with $\Lb$ which actually
contribute to $\Pdimer^0(\zs,\ws)$. 

\subsubsection{OCRSFs compatible with $\Lb$}\label{sec:OCRSFdecograph}

An oriented edge $(\xb,\yb)$ of $\Lb$ corresponds to a unique oriented edge
$(v_k(\xb),v_\l(\yb))$ of~$\GD_1$. We refer to the vertex $v_k(\xb)$ as a {\em
root vertex} of $\Lb$. This defines for every decoration $\xb$, a set of root
vertices denoted by $R_{\xb}(\Lb)$, and a set of non-root vertices of type
`$v$', $R_{\xb}(\Lb)^c:=\{v_1(\xb),\cdots,v_{d_{\xb}}(\xb)\}\setminus
R_{\xb}(\Lb)$, see Figure \ref{fig:fig20} below for an example. 

\begin{figure}[ht]
\begin{center}
\includegraphics[width=3.7cm]{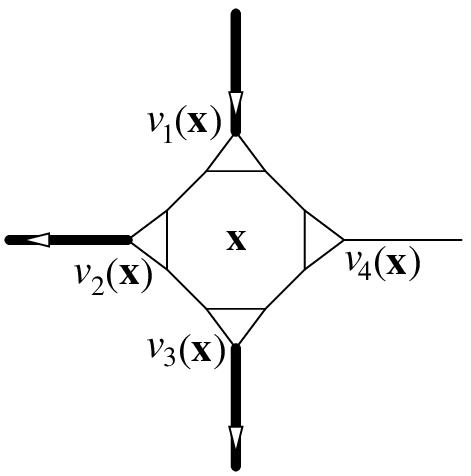}
\caption{In this example $\mathbf{L}$ is drawn in bold lines,
$R_{\xb}(\Lb)=\{v_2(\xb),v_3(\xb)\}$ and
$R_{\xb}(\Lb)^c~=~\{v_1(\xb),v_4(\xb)\}$.}
\label{fig:fig20}
\end{center}
\end{figure}

\begin{rem}
Suppose that one of the decoration $\xb$ of $\GD_1$ contains no 
root vertex of~$\Lb$, and suppose that there is an OCRSF $F$ of $\GD_1$
compatible with $\Lb$. 
Then, since by Remark~\ref{rem:OCRSF}, the OCRSF $F$ has exactly one outgoing
edge at every vertex, 
this implies that $F$ restricted to the decoration $\xb$ has a number of edges
equal to the number of 
vertices of the decoration, thus it must contain a trivial cycle and cannot be
an OCRSF. 
As a consequence, we only consider subset of oriented edges $\Lb$ of $G_1$ such
that:
\begin{center}
Every decoration of $\GD_1$ has at least one root vertex of
$\Lb$.\hspace{1cm}$(\ast)$
\end{center}
\end{rem}
 
\begin{lem}\label{lem:1}
A subset of oriented edges $F$ of $\GD_1$ is an OCRSF compatible with $\Lb$,
iff the following conditions hold:
\begin{enumerate}
\item long edges of $F$ are exactly those of $\Lb$,
\item $F$ contains no cycle consisting of long and short edges,
\item the restriction of $F$ to every decoration $\xb$, is an
oriented spanning forest with $|R_{\xb}(\Lb)|$ connected components whose roots
are distinct elements of~$R_{\xb}(\Lb)$.
\end{enumerate}
\end{lem}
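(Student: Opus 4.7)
The plan is to prove both implications by directly verifying the characterization of OCRSFs recalled in Remark~\ref{rem:OCRSF}: a subset of oriented edges such that each vertex has exactly one outgoing edge, and with no trivial cycle. Two structural observations about $\GD_1$ will be used repeatedly: (a) long edges of $\GD_1$ are pairwise non-adjacent (every pair is separated by a decoration), so any cycle involving a long edge must involve at least one short edge as well; (b) any cycle entirely contained in a decoration is null-homotopic in $\GD_1$, hence trivial.

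For the direct implication, suppose $F$ is an OCRSF compatible with $\Lb$. Condition (1) is the definition of compatibility, and (2) follows from observation (a) together with the absence of trivial cycles in $F$. For (3), I restrict $F$ to a fixed decoration $\xb$ and examine out-degrees: each root vertex $v\in R_{\xb}(\Lb)$ has its unique outgoing edge equal to the corresponding long edge of $\Lb$, so it has no outgoing edge in the restriction; every other vertex of the decoration, including the non-root vertices of type `$v$' (whose incident long edge is absent from $\Lb$ and hence from $F$), must have its unique outgoing edge be short, lying inside the restriction. The restriction therefore has $3d_{\xb}-|R_{\xb}(\Lb)|$ edges on $3d_{\xb}$ vertices, and by (b) combined with the no-trivial-cycle property it is a forest, so it splits into exactly $|R_{\xb}(\Lb)|$ connected components. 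If some component contained no vertex of $R_{\xb}(\Lb)$, every vertex inside it would have an outgoing edge inside it, and following these edges would produce a cycle within the decoration, contradicting (b); pigeonhole then forces each component to contain exactly one root of $R_{\xb}(\Lb)$.

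For the converse, assume (1), (2), and (3). The out-degree count at each vertex is checked case by case: a root vertex has exactly one outgoing long edge by (1) and no outgoing short edge because it is a root in the decoration forest of (3); a non-root vertex of type `$v$' has its candidate outgoing long edge absent from $\Lb$, hence from $F$, and exactly one outgoing short edge coming from (3); vertices of type `$w$' and `$z$' have only short incident edges and inherit their unique outgoing edge from (3). Any trivial cycle in $F$ must either stay inside a single decoration, contradicting the forest condition in (3), or use both long and short edges by (a), contradicting (2). Hence $F$ has no trivial cycle, it has exactly one outgoing edge per vertex, and it is compatible with $\Lb$ by (1).

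The one genuinely combinatorial step is the component count inside each decoration in (3) and the pigeonhole-plus-no-trivial-cycle argument identifying each component with a distinct root; the rest reduces to bookkeeping on out-degrees, being careful to distinguish root from non-root vertices of type `$v$' and to exploit the fact that long edges in $\GD_1$ cannot form cycles on their own.
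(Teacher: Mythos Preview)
Your proof is correct and follows essentially the same route as the paper: both rely on the characterization of OCRSFs from Remark~\ref{rem:OCRSF} (one outgoing edge per vertex, no trivial cycle) together with the geometric fact that long edges alone cannot form a trivial cycle in $\GD_1$. The paper packages the decoration-level analysis by first stating the equivalence of Condition~3 with the out-degree Condition~$3'$ (Remark~\ref{rem:3}) and then invoking it, whereas you prove that equivalence in-line via the edge count $3d_{\xb}-|R_{\xb}(\Lb)|$ and the pigeonhole argument; this is exactly the content of Remark~\ref{rem:3}, so there is no genuine difference in strategy. One cosmetic point: in the forward direction your appeal to observation~(a) for Condition~(2) is superfluous, since (2) concerns \emph{trivial} cycles and follows immediately from the absence of trivial cycles in an OCRSF; your use of (2) in the converse direction shows you are reading it correctly.
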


\begin{rem}\label{rem:3}
Condition $3.$ is equivalent to Condition $3'$: the restriction of $F$ to every
decoration $\xb$ contains no trivial cycle, and is such that vertices
of $R_\xb(\Lb)$ have no outgoing edge of $F$, and every other vertex has exactly
one outgoing edge of $F$.
\end{rem}

\begin{proof}
Let $F$ be a subset of oriented edges of $\GD_1$. Then by the
geometry of the graph $\GD_1$, $F$ cannot have trivial cycles consisting of long
edges only. Thus, Conditions~$1,2,3'$ are equivalent to saying that the oriented
edge configuration $F$ is compatible with $\Lb$, contains no trivial cycle, and
is such that every vertex of $\GD_1$ has exactly one outgoing edge of $F$. By
Remark \ref{rem:OCRSF}, this is equivalent to saying that $F$ is an OCRSF
compatible with $\Lb$.
\end{proof}

\subsubsection{Restriction to decorations of OCRSFs contributing to
$\Pdimer^0(\zs,\ws)$}\label{sec:restrictedOCRSFdecograph}

In this section, we characterize the restriction to decorations of OCRSFs of
$\GD_1$ compatible with $\Lb$, which contribute to $\Pdimer^0(\zs,\ws)$. 
To this purpose we first introduce the following definition, see also Figure
\ref{fig:fig19}.

\begin{defi}\label{def:restricted}
A subset of oriented edges of the decoration $\xb$, is called an
\emph{essential configuration compatible with $\mathbf{L}$, of type \emph{cw} 
(resp. \emph{cclw})}, if it consists of:
\begin{enumerate}
\item all inner edges oriented clockwise (resp.
counterclockwise).
\item one of the three following $2$-edge configurations at the triangle of
every non-root vertex $v_i\in R_{\xb}(\Lb)^c$.
\begin{align*}
-
&\{(w_i,v_i),(v_i,z_i)\},\,\{(w_i,z_i),(v_i,z_i)\},\,\{(v_i,w_i),(w_i,z_i)\}
\text{ in the \emph{cw} case},\\
-
&\{(z_i,v_i),(v_i,w_i)\},\,\{(v_i,z_i),(z_i,w_i)\},\,\{(z_i,w_i),(v_i,w_i)\}
\text{ in the \emph{cclw} case},
\end{align*}
\item one of the two following $1$-edge configurations at the triangle of every
root
vertex $v_i\in R_{\xb}(\Lb)$: 
\begin{align*}
- & \{(w_i,v_i)\},\{(w_i,z_i)\} \text{ in the \emph{cw} case},\\
- &\{(z_i,v_i),\{(z_i,w_i)\}\text{ in the \emph{cclw} case},&
\end{align*}
with the additional constraint that the triangle of at least one root vertex
contains the configuration $(w_i,v_i)$ in the \emph{cw} case, and the
configuration $(z_i,v_i)$ in the \emph{cclw} case.
\end{enumerate}

\begin{figure}[ht]
\begin{center}
\includegraphics[width=7cm]{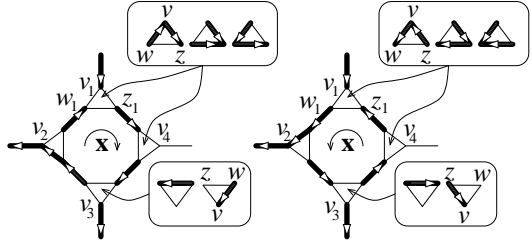}
\caption{Essential configuration of the decoration $\xb$ of type {\em cw}
(left) and of type {\em cclw} (right).}
\label{fig:fig19}
\end{center}
\end{figure}

When the type is not specified, we refer to the above as an \emph{essential
configuration compatible with $\Lb$}, and when 
no confusion occurs, we omit the specification \emph{compatible with~$\Lb$}.
\end{defi}

\begin{prop}\label{prop:contributing}$\,$
\begin{enumerate}
\item For every decoration $\xb$, an essential configuration
compatible with $\Lb$ is an oriented spanning forest 
with $|R_{\xb}(\Lb)|$ connected components whose roots are distinct elements of
$R_{\xb}(\Lb)$.
\item Let $F$ be an OCRSF of $\GD_1$ compatible with $\Lb$, which contributes to
$\Pdimer^0(\zs,\ws)$, then the restriction of 
$F$ to every decoration $\xb$, is an essential configuration compatible with
$\Lb$.
\end{enumerate}
\end{prop}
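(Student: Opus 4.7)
The first claim is a direct verification. The decoration $\xb$ has $3d_\xb$ vertices and $d_\xb$ inner edges. An essential configuration contains all inner edges, $2$ edges at each of the $d_\xb - |R_\xb(\Lb)|$ non-root triangles, and $1$ edge at each of the $|R_\xb(\Lb)|$ root triangles, for a total of $3d_\xb - |R_\xb(\Lb)|$ edges. Using the coherent orientation of inner edges prescribed by Definition~\ref{def:restricted}(1), a case inspection of the patterns in (2) and (3) confirms that every non-root vertex of $\xb$ has exactly one outgoing edge inside the decoration, while each vertex of $R_\xb(\Lb)$ has none (its outgoing edge being the long edge of $\Lb$). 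Since all inner edges are oriented in the same rotational sense, any trivial cycle would have to wind once around the decoration through its $w$- and $z$-vertices; the requirement in (3) that at least one root triangle carry $(w_i, v_i)$ (in the cw case) or $(z_i, v_i)$ (in the cclw case) breaks this winding by redirecting the path to the dead-end root vertex $v_i$. An edge count then identifies the configuration as an oriented spanning forest with exactly $|R_\xb(\Lb)|$ components, whose roots are the out-degree-zero vertices, namely~$R_\xb(\Lb)$.

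For the second claim, fix $F$ outside a decoration $\xb$; by Lemma~\ref{lem:1}, the restriction $F_\xb$ is an oriented spanning forest of $\xb$ with roots $R_\xb(\Lb)$. The contribution of $F$ to $\Pdimer^0(\zs,\ws)$ factors as $W(F_\xb)\cdot W^{\mathrm{ext}}$, where $W(F_\xb) = \prod_{(x,y)\in F_\xb} f_x\overline{f_y}K_{x,y}$, so it suffices to prove that the sum of $W(F_\xb)$ over \emph{non-essential} spanning forests of $\xb$ vanishes. The local enumeration carried out for part~1 shows that if all inner edges are oriented coherently, then every valid triangle configuration is essential; consequently, a non-essential $F_\xb$ necessarily contains at least one inner edge oriented against the majority, which forces a non-essential pattern at the two adjacent triangles. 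The plan is to build a sign-reversing involution $\iota$ on non-essential forests, which flips such an anomalous inner edge and simultaneously reshuffles the two affected triangle patterns while keeping the configuration a spanning forest with the same roots. The algebraic input is Lemma~\ref{lem:kernel} in its conjugate form $\sum_i K_{x,x_i}\overline{f_{x_i}}=0$ (valid since $K$ is real): at the junction vertex of the move, the two possible rearrangements contribute local factors that are negatives of one another modulo the common external weight, producing the desired cancellation of $W(F_\xb)$ values in pairs.

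The main obstacle is the combinatorial definition of $\iota$ on the full set of non-essential oriented spanning forests of $\xb$. Reversing a single inner edge generally forces the two adjacent triangles to adopt new configurations, and one must carefully choose among several available local moves to guarantee that no trivial cycle is introduced, that the out-degrees remain correct, and that the roots stay in $R_\xb(\Lb)$. The constraint in Definition~\ref{def:restricted}(3) --- that at least one root triangle of an essential configuration carry the $(w_i,v_i)$ (resp.\ $(z_i,v_i)$) pattern --- is exactly what rules out the one coherent-inner-edge family (all root triangles of the "wrong" type) which would otherwise produce a winding cycle and thus already fails to be a spanning forest. Finally, the sign tracking under $\iota$ relies on the Kasteleyn orientation convention of Section~\ref{sec:dimercharact} (triangles of decorations oriented clockwise) and on the $\bmod\,4\pi$ consistency of the rhombus half-angles given by Lemma~\ref{lem:angles4pi}, ensuring that the phases of $f_x\overline{f_y}K_{x,y}$ compose unambiguously around each local modification and yield the expected opposite sign.
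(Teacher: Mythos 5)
Your Part~1 is fine: it is the same direct verification the paper performs (the edge count $d_{\xb}+2|R_{\xb}(\Lb)^c|+|R_{\xb}(\Lb)|=3d_{\xb}-|R_{\xb}(\Lb)|$, the out-degree check, and the observation that the ``at least one root triangle carries $(w_i,v_i)$, resp.\ $(z_i,v_i)$'' clause is exactly what kills the winding cycle), i.e.\ essentially the content of Lemma~\ref{lem:2}.

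Part~2, however, has a genuine gap: the cancellation that carries the whole statement is only announced, never carried out. You say ``the plan is to build a sign-reversing involution $\iota$'' and then list as ``the main obstacle'' precisely the definition of $\iota$; so the central step is missing. Moreover, the plan itself is misaligned with the algebra. The kernel relation $f_{v_i}=f_{w_i}+f_{z_i}$ is a \emph{three-term} identity, and the cancellation it produces is among \emph{triples}, not pairs: this is exactly how the paper argues, grouping the three possible $2$-edge configurations $T_1,T_2,T_3$ at the triangle of a root vertex, whose weights sum to a common factor times $-\overline{f_{v_i}}+\overline{f_{z_i}}+\overline{f_{w_i}}=0$. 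Your structural claim about non-essential restrictions is also off: such a restriction need not contain an inner edge ``oriented against the majority'' --- it may instead be missing inner edges altogether. The invariant that actually makes a local cancellation sufficient is obtained by counting: by Lemma~\ref{lem:1} the restriction has exactly $3d_{\xb}-|R_{\xb}(\Lb)|$ edges, so any non-essential restriction must have some \emph{root} triangle carrying two edges; cancelling those (in triples) and then invoking the count plus Lemma~\ref{lem:2} finishes the proof. Your proposal contains neither the counting step nor a substitute for it.

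Finally, the reduction ``fix $F$ outside $\xb$ and show that $\sum W(F_{\xb})$ over non-essential decoration forests vanishes'' is not legitimate as stated: the weight of $F$ is not a product of a decoration factor and an external factor, because the homology term $\prod_{T\in F}(1-\zs^{h(T)}\ws^{v(T)})$ and even the admissibility of the glued configuration (absence of trivial cycles running through the decoration and the exterior) depend on the choice of $F_{\xb}$. Any cancelling move must therefore be checked to preserve global OCRSF-ness and the oriented non-trivial cycles; the paper does this for its three-way swap at a single root triangle via Part~$1$ of Lemma~\ref{lem:replacing1edge} (the three configurations funnel $w_i$ and $z_i$ into $v_i$, so nothing else changes), whereas your $\iota$, which reverses an inner edge and reshuffles two triangles, would change how vertices of the decoration are routed to the roots and could a priori create trivial cycles or alter homologies --- an issue your sketch does not address.
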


\begin{proof}
Proposition \ref{prop:contributing} is a direct consequence of the next two
lemmas.
\end{proof}

\begin{lem}\label{lem:deco}
Let $F$ be an OCRSF of $\GD_1$ compatible with $\Lb$, which contributes to
$\Pdimer^0(\zs,\ws)$. Then 
the restriction of $F$ to every decoration $\xb$ contains all inner edges of
the decoration, one edge at triangles of 
root vertices, and two edges at triangles of non-root vertices.
\end{lem}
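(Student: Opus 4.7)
My plan is to prove Lemma~\ref{lem:deco} by establishing the contrapositive via a cancellation argument rooted in the kernel identity of Lemma~\ref{lem:kernel}. Since $K$ is real, taking complex conjugates in $(Kf)_x=0$ yields $\sum_i K_{x,x_i}\overline{f_{x_i}}=0$ at every vertex $x$ of $\GD$. The immediate consequence is the following \emph{local cancellation}: if three OCRSFs $F^{(1)},F^{(2)},F^{(3)}$ of $\GD_1$ agree everywhere except at the outgoing edge of one single vertex $x$, which cycles through $x$'s three neighbors $x_1,x_2,x_3$, then their weights in Theorem~\ref{thm:matrix-tree} share a common factor and the residual factor is $f_x\sum_{i=1}^{3}\overline{f_{x_i}}K_{x,x_i}=0$, so the contributions of $F^{(1)},F^{(2)},F^{(3)}$ to $\Pdimer^0(\zs,\ws)$ sum to zero.

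I would then argue the following: whenever the restriction of $F$ to a decoration $\xb$ does not have the form claimed, there exists an internal vertex $x\in\{w_k(\xb),z_k(\xb):1\leq k\leq d_\xb\}$ such that all three alternatives for $x$'s outgoing edge yield valid OCRSFs compatible with $\Lb$. Then $F$ belongs to a triple whose net contribution to $\Pdimer^0(\zs,\ws)$ vanishes, so $F$ cannot contribute.

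The case analysis would proceed by failure mode. By Lemma~\ref{lem:1}, the restriction of $F$ to $\xb$ is an oriented spanning forest with $|R_\xb(\Lb)|$ components and therefore has exactly $3d_\xb-|R_\xb(\Lb)|$ edges. The claimed distribution --- $d_\xb$ inner edges together with one triangle edge at each root triangle and two triangle edges at each non-root triangle --- realizes precisely this count, so any deviation must either omit an inner edge (compensated by an extra triangle edge somewhere) or misdistribute the $1/2$ pattern. For each deviation I would locate an appropriate vertex $x$: if the inner edge incident to $w_k$ is absent, the three candidates $(w_k,v_k),(w_k,z_k),(w_k,z_{k+1})$ are the relevant alternatives; if a root (resp.\ non-root) triangle has an anomalous number of triangle edges, one of $w_k$ or $z_k$ admits three freely choosable alternatives among its neighbors in $\GD_1$.

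The main obstacle is verifying that each of the three alternatives remains a valid OCRSF, i.e., that no new trivial cycle is created. Locally within the triangle $t_k(\xb)$ only short potential cycles arise, and these are easily excluded by keeping track of which vertex is currently a sink of the forest restriction. Globally, any new trivial cycle would have to traverse the newly introduced short edge and return through the remaining edges of $F$, possibly via long edges of $\Lb$ and neighboring decorations. I expect to rule such cycles out by exploiting the forest-like structure of the restriction of $F$ to $\xb$ together with the fact that non-trivial cycles of $F$ are parallel and disjoint from the trivial-cycle candidates produced by the local swap; this is where the bulk of the technical work would lie.
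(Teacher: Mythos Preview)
Your cancellation identity is correct and is a natural thing to try, but the paper's proof follows a different and considerably shorter route, and your plan has gaps that look hard to close.

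\textbf{What the paper does.} The paper isolates a single failure mode: a root triangle $t_i$ carrying two edges. The three admissible $2$-edge configurations there,
\[
T_1=\{(w_i,v_i),(z_i,v_i)\},\quad T_2=\{(w_i,z_i),(z_i,v_i)\},\quad T_3=\{(z_i,w_i),(w_i,v_i)\},
\]
form a canonical triple. Using Part~1 of Lemma~\ref{lem:replacing1edge} one checks that the three associated OCRSFs have identical oriented non-trivial cycles, and the sum of their weights factors through $-\overline{f_{v_i}}+\overline{f_{z_i}}+\overline{f_{w_i}}=0$, which is the \emph{definition} of $f$, not the kernel identity. Once root triangles are forced down to at most one edge, a pure edge count finishes the lemma: the restriction to $\xb$ has exactly $3d_\xb-|R_\xb(\Lb)|$ edges, while $d_\xb$ inner edges plus one per root triangle plus two per non-root triangle also totals $3d_\xb-|R_\xb(\Lb)|$, so every inequality is an equality. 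No separate treatment of ``missing inner edge'' or ``misdistributed triangles'' is needed.

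\textbf{Where your plan runs into trouble.} Your triple is built by cycling the outgoing edge of a single internal vertex $x$ through its three neighbours. Two problems arise.

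First, the third alternative need not be a valid OCRSF. Take a root triangle carrying $T_1$ and try $x=w_i$. Two of the options, $(w_i,v_i)$ and $(w_i,z_i)$, stay inside $t_i$; the third, $(w_i,z_{i+1})$, is the inner edge. If in $F$ the vertex $z_{i+1}$ already sends its outgoing edge back as $(z_{i+1},w_i)$, the swap creates the length-two trivial cycle $w_i\to z_{i+1}\to w_i$, so your triple does not exist. Whether or not this happens depends on the configuration elsewhere in the decoration, so you cannot guarantee a free vertex $x$ purely from the local failure mode. The paper's triple avoids this because $T_1,T_2,T_3$ modify only edges internal to $t_i$, and Lemma~\ref{lem:replacing1edge} certifies no trivial cycle appears and the non-trivial cycle structure is preserved.

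Second, even when three valid OCRSFs exist, you need the factor $\prod_{T}(1-\zs^{h(T)}\ws^{v(T)})$ to agree across the triple; otherwise the common-factor argument fails. You also need your ``there exists $x$'' to be made canonical so that the triples partition the set of bad configurations rather than overlap; as written, different failure modes in the same $F$ could point to different choices of $x$.

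The paper's counting argument makes all of this unnecessary: one cancellation (at root triangles) plus arithmetic suffices, and you never have to chase global cycles through several decorations.
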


\begin{proof}
Let $F$ be an OCRSF of $\GD_1$ compatible with $\Lb$. By Lemma \ref{lem:1}, the
restriction of $F$ to every decoration $\xb$, 
is an oriented spanning forest with $|R_{\xb}(\Lb)|$ components,
whose roots are distinct elements of $R_{\xb}(\Lb)$. 
This implies that:
\begin{enumerate}
\item[$a.$] Triangles of decorations contain at most $2$ edges of $F$.
\item[$b.$] Each vertex of $\GD_1$ has a unique outgoing edge of $F$, and each
root vertex of $R_{\xb}(\Lb)$ has as outgoing edge of $F$ the unique incident
long edge (which belongs to $\Lb$ by definition of root vertices).
\item[$c.$] The restriction of $F$ to every decoration $\xb$ contains
$3d_{\xb}-|R_{\xb}(\Lb)|$ edges.
\end{enumerate}
Let us first prove that the contribution to $\Pdimer^0(\zs,\ws)$ of OCRSFs
having two edges at triangles of root vertices cancel out. Let $\xb$ be a
decoration of $\GD_1$, and $v_i\in R_{\xb}(\Lb)$ be a root vertex of $\xb$.
Then, there are three possible oriented $2$-edge configurations at the triangle
$t_i$ of the vertex $v_i$, which satisfy the necessary requirement of Point $b.$
above, see Figure \ref{fig:fig1}:
\begin{equation*}
T_1=\{(w_i,v_i),(z_i,v_i)\},\quad T_2=\{(w_i,z_i),(z_i,v_i)\},\quad
T_3=\{(z_i,w_i),(w_i,v_i)\}.
\end{equation*}

\begin{figure}[ht]
\begin{center}
\includegraphics[width=5cm]{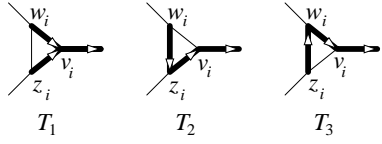}
\caption{The three possible oriented $2$-edge configurations at triangles of
root vertices.}
\label{fig:fig1}
\end{center}
\end{figure}

Suppose that there exists an OCRSF $F^1$ compatible with $\Lb$, having the edge
configuration $T_1$ at the triangle $t_i$.
 Let $F^2$ (resp. $F^3$) be the OCRSF $F^1$ modified so as to have the edge
configuration $T_2$ (resp. $T_3$) at the triangle $t_i$. Then, it is
straightforward to check that $F^1$, $F^2$ and $F^3$ are OCRSFs compatible with
$\Lb$, with oriented 
non-trivial cycles in bijection (one way to prove this is to 
use Part $1$ of Lemma
\ref{lem:replacing1edge}). By Theorem~\ref{thm:matrix-tree}, the 
contribution $C(F^1,F^2,F^3)$ of $F^1,F^2,F^3$ to $\Pdimer^0(\zs,\ws)$ is:
\begin{equation*}
C(F^1,F^2,F^3):=\sum_{j=1}^3 \left(\prod_{e=(x,y)\in
F^j}f_x\overline{f_y}K_{x,y}\right)\prod_{T\in F^j}(1-z^{h(T)}w^{v(T)}).
\end{equation*}
Since $F^1,F^2,F^3$ have oriented non-trivial cycles in bijection, the term
$\prod_{T\in F^j}(1-z^{h(T)}w^{v(T)})$ is independent of $j$. Denote by $C$ the
contribution of edges of $F^1$ which are not edges of the triangle $t_i$.
Then: 
\begin{multline*}
C(F^1,F^2,F^3)= C \prod_{T\in F^1}(1-z^{h(T)}w^{v(T)})\cdot
\\
\cdot\left(
f_{w_i}\overline{f_{v_i}}K_{w_i,v_i}f_{z_i}\overline{f_{v_i}}K_{z_i,v_i}
+
f_{w_i}\overline{f_{z_i}}K_{w_i,z_i}f_{z_i}\overline{f_{v_i}}K_{z_i,v_i}
+
f_{z_i}\overline{f_{w_i}}K_{z_i,w_i}f_{w_i}\overline{f_{v_i}}K_{w_i,v_i}
\right).
\end{multline*}

Recalling that by our choice of Kasteleyn orientation, edges of the triangles
are oriented clockwise, we have that $C(F^1,F^2,F^3)$ is equal to:
\begin{align*}
&C \prod_{T\in F^1}(1-z^{h(T)}w^{v(T)})
\left[
f_{w_i}\overline{f_{v_i}}f_{z_i}\overline{f_{v_i}}(-1)+
(-1)f_{w_i}\overline{f_{z_i}}(-1)f_{z_i}\overline{f_{v_i}}+
f_{z_i}\overline{f_{w_i}}f_{w_i}\overline{f_{v_i}}
\right]\\
&=C \prod_{T\in F^1}(1-z^{h(T)}w^{v(T)})\left[
f_{z_i}f_{w_i}\overline{f_{v_i}}(-\overline{f_{v_i}}+\overline{f_{z_i}}
+\overline{f_{w_i}})\right]\\
&=0\quad\text{(by definition of the vector $f$)}.
\end{align*}

Since this holds for every decoration $\xb$, and every root vertex
$v_i\in R_{\xb}(\Lb)$, we deduce that the contribution to $\Pdimer^0(\zs,\ws)$
of OCRSFs compatible with $\Lb$, having two edges at triangles of root vertices,
cancel out. 

As a consequence, if an OCRSF compatible with $\Lb$ contributes to
$\Pdimer^0(\zs,\ws)$, then the restriction of $F$ at the 
decoration $\xb$ contains at most, all inner edges of the decoration, one edge
at triangles of root vertices, and two edges at 
triangles of non-root vertices; that is it contains at most
$d_{\xb}+|R_{\xb}(\Lb)|+2|R_{\xb}(\Lb)^c|=3d_{\xb}-|R_{\xb}(\Lb)|$ 
edges. Since by Point $c.$ above, it must contain exactly
$3d_{\xb}-|R_{\xb}(\Lb)|$ edges, we deduce that all constraints must 
be met.
\end{proof}

\begin{lem}\label{lem:2}
A subset of edges of the decoration $\xb$, is an oriented spanning
forest with $|R_{\xb}(\Lb)|$ components whose 
roots are distinct elements of $R_{\xb}(\Lb)$, satisfying the constraints of
Lemma \ref{lem:deco}, iff it is an essential
configuration compatible with $\Lb$.
\end{lem}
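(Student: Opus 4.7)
My plan is to establish both implications by an explicit case analysis at each triangle, complemented by a propagation argument along the cyclic sequence of triangles of the decoration.

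For the backward direction, if $F$ is an essential configuration in the sense of Definition \ref{def:restricted}, then the structural constraints of Lemma \ref{lem:deco} (all inner edges, one edge per root triangle, two edges per non-root triangle) hold by inspection. To establish that the restriction is an oriented spanning forest with $|R_{\xb}(\Lb)|$ components rooted at $R_{\xb}(\Lb)$, I would proceed in two steps. First, a finite check over the three non-root and two root configurations (for both the cw and cclw types) shows that every non-root vertex $v_i$, $w_i$, $z_i$ has out-degree exactly one inside the decoration, while every root $v_i\in R_{\xb}(\Lb)$ has out-degree zero there, its only outgoing edge being the long edge of $\Lb$. Second, I would rule out directed cycles inside the decoration: the inner edges together with the $w_k\to z_k$ (resp. $z_k\to w_k$) triangle edges form a ring going around the decoration, and the only way for an outgoing edge to leave that ring is through a root carrying configuration $(w_i,v_i)$ in the cw case (resp. $(z_i,v_i)$ in the cclw case). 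Since Definition \ref{def:restricted} imposes at least one such root, the ring is broken, and out-degree one everywhere plus no cycle yields the desired oriented spanning forest.

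For the forward direction, suppose $F$ restricted to $\xb$ satisfies the constraints of Lemma \ref{lem:deco} and is an oriented spanning forest rooted at $R_{\xb}(\Lb)$. Then every non-root vertex has out-degree exactly one in the decoration, while each root has out-degree zero there. At each triangle $t_i$ the unoriented edge set of $F$ must be one of three possibilities for non-roots (respectively three for roots), and once the orientations of the two incident inner edges $w_{i-1}z_i$ and $w_iz_{i+1}$ are fixed, the out-degree constraints pin down the orientations of the triangle edges. A direct enumeration of these cases shows that the admissible local configurations coincide exactly with the cw and cclw non-root (resp. root) configurations of Definition \ref{def:restricted}, and moreover that the two inner edges incident to any given triangle must have the same cw/cclw orientation. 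Since adjacent triangles share an inner edge, this matching orientation propagates all the way around the cyclic sequence of triangles, forcing all triangles to be of the same type and all inner edges to be oriented consistently. Finally, the additional constraint that at least one root has configuration $(w_i,v_i)$ in the cw case (resp. $(z_i,v_i)$ in the cclw case) follows from the no-cycle hypothesis: otherwise the ring described above would close into a directed trivial cycle inside the decoration.

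The main obstacle, and the most laborious step, is the local case analysis in the forward direction: one must enumerate every admissible combination of two triangle edges (or one, at root triangles) together with the two possible orientations of each incident inner edge, and verify that these combinations match exactly the essential configurations of Definition \ref{def:restricted}. The cyclic propagation of the type and the ring-breaking argument are then elementary.
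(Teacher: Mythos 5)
Your proof is correct and follows essentially the same route as the paper, which (via Remark \ref{rem:3}) reformulates the forest condition as out-degree/no-trivial-cycle constraints at the decoration and then declares the equivalence with Definition \ref{def:restricted} clear by inspection; you simply carry out that inspection explicitly (local case analysis at each triangle, propagation of the type along the inner-edge ring, and the cycle-breaking role of the root configuration). One small caveat: the would-be directed trivial cycle around the decoration is not made only of inner edges and $w_i\to z_i$ (resp. $z_i\to w_i$) edges, since at a non-root triangle carrying $\{(w_i,v_i),(v_i,z_i)\}$ it detours through $v_i$; what matters is that any such cycle must pass through every triangle and is blocked exactly at a root triangle carrying $(w_i,v_i)$ (resp. $(z_i,v_i)$), so your conclusion stands.
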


\begin{proof}
By Remark \ref{rem:3}, Lemma \ref{lem:2} amounts to proving that an oriented
subset of the decoration $\xb$ contains all inner 
edges of the decoration, one edge at triangles of root vertices, two edges at
triangles of non-root vertices, and satisfies 
Condition $3'$, iff it is an essential configuration compatible with $\Lb$. This
is clear once we return to Definition~\ref{def:restricted}.
\end{proof}

\subsubsection{Essential OCRSFs of $\GD_1$}

Sections \ref{sec:OCRSFdecograph} and \ref{sec:restrictedOCRSFdecograph} above
naturally yield the 
following definitions.

\begin{defi}
Let $\tau\in\{cw,ccwl\}^{V(G_1)}$ be an assignment of type \emph{cw} or
\emph{cclw} to the vertices of $G_1$. 
Define the set of \emph{essential OCRSFs of $\GD_1$ compatible with $\Lb$ and
$\tau$}, to be the set of OCRSFs compatible with $\Lb$ whose
restriction to every decoration $\xb$, is an essential
configuration of type $\tau(\xb)$. We denote this set by
$\F^{\tau,\Lb}(\GD_1)$.
\end{defi}

As a consequence of Lemma \ref{lem:1} and Proposition \ref{prop:contributing},
we have the following characterization of  $\F^{\tau,\Lb}(\GD_1)$.

\begin{cor}\label{rem:4}$\,$
A subset of oriented edges of $\GD_1$ is an essential OCRSF compatible with
$\Lb$ and $\tau$, iff:
\begin{enumerate}
\item long edges of $F$ are exactly those of $\Lb$,
\item $F$ contains no trivial cycle consisting of short and long edges,
\item the restriction of $F$ to every decoration $\xb$ of $\GD_1$, is an
essential
configuration compatible with $\Lb$, of type $\tau(\xb)$.
\end{enumerate}
\end{cor}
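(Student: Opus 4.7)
The plan is to derive the characterization as a direct combination of Lemma \ref{lem:1} and Part 1 of Proposition \ref{prop:contributing}: the former describes OCRSFs of $\GD_1$ compatible with $\Lb$ by local conditions on each decoration, while the latter identifies essential configurations (in the sense of Definition \ref{def:restricted}) with oriented spanning forests on the decoration having $|R_{\xb}(\Lb)|$ components rooted at $R_{\xb}(\Lb)$. Unfolding the definition of $\F^{\tau,\Lb}(\GD_1)$ then translates one set of local conditions into the other.

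For the forward direction, I would assume $F\in\F^{\tau,\Lb}(\GD_1)$. By the very definition of this set, $F$ is an OCRSF of $\GD_1$ compatible with $\Lb$, so items 1 and 2 of the corollary follow immediately from the corresponding items of Lemma \ref{lem:1}. Item 3 is nothing other than the additional defining requirement on $\F^{\tau,\Lb}(\GD_1)$: the restriction of $F$ to every decoration $\xb$ is an essential configuration compatible with $\Lb$ of type $\tau(\xb)$.

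For the converse, suppose a subset of oriented edges $F$ of $\GD_1$ satisfies the three items. Items 1 and 2 already coincide with items 1 and 2 of Lemma \ref{lem:1}. To produce item 3 of Lemma \ref{lem:1} from item 3 of the corollary, I would invoke Part 1 of Proposition \ref{prop:contributing}, which asserts that on each decoration $\xb$ an essential configuration compatible with $\Lb$ is an oriented spanning forest with exactly $|R_{\xb}(\Lb)|$ components whose roots are distinct elements of $R_{\xb}(\Lb)$. Lemma \ref{lem:1} then concludes that $F$ is an OCRSF compatible with $\Lb$, and together with the type-$\tau(\xb)$ prescription on each decoration this places $F$ in $\F^{\tau,\Lb}(\GD_1)$.

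The only mild point to check is the wording of item 2: Lemma \ref{lem:1} forbids every cycle made of long and short edges, whereas the corollary forbids only the trivial ones. This is harmless, since an OCRSF admits precisely the non-trivial cycles of its CRT components (which by the geometry of $\GD$ cannot consist of long edges alone), so the only cycles that have to be excluded to guarantee the oriented-forest structure inside and across decorations are the trivial ones. Thus no obstacle arises, and the argument reduces to the bookkeeping described above.
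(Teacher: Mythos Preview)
Your proposal is correct and follows exactly the route the paper indicates: the corollary is stated in the paper simply ``as a consequence of Lemma \ref{lem:1} and Proposition \ref{prop:contributing}'' with no further details, and you have faithfully unpacked that sentence. Your remark on the wording of item 2 is also apt: the proof of Lemma \ref{lem:1} already shows that the intended meaning there is ``no \emph{trivial} cycle consisting of long and short edges,'' so the two conditions coincide and no extra argument is needed.
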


\begin{defi}
Define the set of \emph{essential OCRSFs of $\GD_1$ compatible with $\Lb$},
denoted $\F^{\Lb}(\GD_1)$, by:
\begin{equation*}
\F^{\Lb}(\GD_1)=\bigcup_{\tau\in\{cw,cclw\}^{V(G_1)}}\F^{\tau,\Lb}(\GD_1), 
\end{equation*}
and the set of \emph{essential OCRSFs of $\GD_1$}, denoted $\F^0(\GD_1)$, by:
\begin{equation*}
\F^0(\GD_1)=\bigcup_{\Lb\in\boldsymbol{\mathcal{L}}} \F^{\Lb}(\GD_1),
\end{equation*}
where $\boldsymbol{\mathcal{L}}$ is the set of oriented edge configurations of
$G_1$ 
satisfying $(\ast)$. 

\end{defi}

\begin{cor}
\begin{equation*}
\Pdimer^0(\zs,\ws)=\sum_{F\in\F^0(\GD_1)}
\left(\prod_{e=(x,y)\in F}f_x\overline{f_y}K_{x,y}\right)\prod_{T\in F}
(1-\zs^{h(T)}\ws^{v(T)}).
\end{equation*}
\end{cor}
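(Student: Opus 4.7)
The plan is to derive this corollary directly from Theorem \ref{thm:matrix-tree} by discarding those OCRSFs of $\GD_1$ whose contribution to $\Pdimer^0(\zs,\ws)$ vanishes, using the analysis carried out in Section \ref{sec:charac}. The main work has already been done; what remains is to reorganize the sum.

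First I would start from the expression
$$
\Pdimer^0(\zs,\ws)=\sum_{F\in\F(\GD_1)}\left(\prod_{e=(x,y)\in F}f_x\overline{f_y}K_{x,y}\right)\prod_{T\in F}(1-\zs^{h(T)}\ws^{v(T)})
$$
given by Theorem \ref{thm:matrix-tree}, and partition this sum according to the set $\Lb$ of oriented long edges of $F$. Each OCRSF $F$ projects onto a unique $\Lb$ with which it is compatible in the sense of Section \ref{sec:charac}. By the remark preceding Lemma \ref{lem:1}, if $\Lb$ fails condition $(\ast)$, then no OCRSF is compatible with $\Lb$, so only $\Lb\in\boldsymbol{\mathcal{L}}$ contribute and one may write
$$
\Pdimer^0(\zs,\ws)=\sum_{\Lb\in\boldsymbol{\mathcal{L}}}\;\sum_{F\text{ compatible with }\Lb}\left(\prod_{e=(x,y)\in F}f_x\overline{f_y}K_{x,y}\right)\prod_{T\in F}(1-\zs^{h(T)}\ws^{v(T)}).
$$

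Next, for a fixed $\Lb\in\boldsymbol{\mathcal{L}}$, I would invoke Proposition \ref{prop:contributing}, part $2$: any compatible OCRSF whose contribution to $\Pdimer^0(\zs,\ws)$ does not vanish has its restriction to every decoration $\xb$ equal to an essential configuration compatible with $\Lb$ of some type $\tau(\xb)\in\{\emph{cw},\emph{cclw}\}$. The cancellation underlying this proposition was obtained inside the proof of Lemma \ref{lem:deco} by grouping triples $(F^1,F^2,F^3)$ that differ only through the $2$-edge configuration $T_1,T_2,T_3$ at a single root-vertex triangle, and using the kernel relation $-\overline{f_{v_i}}+\overline{f_{w_i}}+\overline{f_{z_i}}=0$ guaranteed by Lemma \ref{lem:kernel}. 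This grouping defines a well-defined partition of the compatible OCRSFs that are not essential, and each group sums to zero.

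Consequently, after removing the vanishing contributions, the only surviving terms correspond to OCRSFs whose restriction to every decoration is an essential configuration for some type assignment $\tau\in\{\emph{cw},\emph{cclw}\}^{V(G_1)}$; by Corollary \ref{rem:4}, these are exactly the elements of $\F^{\tau,\Lb}(\GD_1)$. Taking the union over $\tau$ yields $\F^{\Lb}(\GD_1)$, and the further union over $\Lb\in\boldsymbol{\mathcal{L}}$ gives $\F^0(\GD_1)$, producing the claimed formula. The only delicate point to verify is that the canceling triples partition the non-essential compatible OCRSFs without overcounting, but this is immediate from the local nature of the three configurations $T_1,T_2,T_3$ at each root triangle in the proof of Lemma \ref{lem:deco}; thus no serious obstacle appears, and the corollary follows as a clean bookkeeping consequence of the work already done.
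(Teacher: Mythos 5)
Your argument is correct and follows essentially the same route as the paper: the paper's proof simply splits the sum of Theorem \ref{thm:matrix-tree} into essential OCRSFs and their complement and invokes Point $2$ of Proposition \ref{prop:contributing} to kill the complement, which is exactly what you do, with the extra (harmless) step of organizing the sum by $\Lb$ and unpacking the triple-cancellation mechanism already contained in the proof of Lemma \ref{lem:deco}.
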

\begin{proof}
Write, $\F(\GD_1)=\F^0(\GD_1)\cup (\F^0(\GD_1))^c$ in the formula for
$\Pdimer^0(\zs,\ws)$ given by Theorem 
\ref{thm:matrix-tree}. Then use Point $2.$ of Proposition
\ref{prop:contributing}, to deduce that
the contribution of OCRSFs of $(\F^0(\GD_1))^c$ cancel out in
$\Pdimer^0(\zs,\ws)$.
\end{proof}

\subsection{Statement of main result}\label{sec:statementmain}

We can now give a precise statement of Theorem
\ref{thm:main}.
\begin{thm}\label{thm:main2}
Consider a critical Ising model defined on an infinite, $\ZZ^2$-periodic
isoradial graph $G$. Then, one can explicitly construct essential
OCRSFs of $\GD_1$ counted by the critical dimer
characteristic polynomial $\Pdimer^0(\zs,\ws)$, from OCRSFs of $G_1$ counted by
the critical Laplacian characteristic polynomial $\Plap(\zs,\ws)$.
\end{thm}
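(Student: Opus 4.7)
The plan is to produce an explicit map $\Phi$ that sends each OCRSF $\boldsymbol{F}\in\F(G_1)$ to a collection of essential OCRSFs in $\F^0(\GD_1)$, and then to check that the weights carried by $\Plap(\zs,\ws)$ and $\Pdimer^0(\zs,\ws)$ are matched block by block through this map. Given $\boldsymbol{F}$, first read off its oriented edges as a subset $\Lb$ of oriented long edges of $\GD_1$. Because every vertex of $G_1$ has exactly one outgoing edge in $\boldsymbol{F}$, every decoration $\xb$ of $\GD_1$ has exactly one root vertex, so condition $(\ast)$ holds automatically and $|R_\xb(\Lb)|=1$ everywhere.

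Next, for each choice of orientation type $\tau(\xb)\in\{cw,cclw\}$ and, at the triangle of each non-root vertex $v_i$, for each of the three admissible $2$-edge configurations of Definition \ref{def:restricted}, insert the resulting edges. At the unique root vertex the constraint at the end of Definition \ref{def:restricted}.3 forces the $1$-edge configuration $(w_i,v_i)$ (resp.\ $(z_i,v_i)$) in the $cw$ (resp.\ $cclw$) case. This produces a candidate $F\subset E(\GD_1)$ whose long edges are $\Lb$ and whose restriction to every decoration is an essential configuration of type $\tau(\xb)$ compatible with $\Lb$. By Corollary \ref{rem:4}, $F$ belongs to $\F^{\tau,\Lb}(\GD_1)\subset\F^0(\GD_1)$ provided that mixing these short-edge choices with the long edges $\Lb$ creates no trivial cycle. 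Verifying this global cycle-avoidance condition is precisely what the \emph{licit primal/dual edge moves} announced in Section \ref{sec:general} will accomplish: they will let me modify the short-edge part of $F$ locally without leaving $\F^{\tau,\Lb}(\GD_1)$, and so organize the $3^{|R_\xb(\Lb)^c|}$ local choices at each decoration into a globally consistent family.

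For the weight match, I would then prove that, for a fixed pair $(\Lb,\tau)$ coming from $\boldsymbol{F}$, the contribution
\[
\sum_{F\in\F^{\tau,\Lb}(\GD_1)}\left(\prod_{e=(x,y)\in F}f_x\overline{f_y}K_{x,y}\right)\prod_{T\in F}\bigl(1-\zs^{h(T)}\ws^{v(T)}\bigr)
\]
factorizes into a product over decorations (thanks to the local nature of the short-edge choices) times a cycle factor carried by the long edges of $\Lb$. The long-edge contribution already supplies $\prod_e K_{x,y}$ with $|K_{x,y}|=\cot(\theta_e/2)$; summing the local short-edge contributions at a decoration $\xb$ should collapse, using the definition of $f$ in terms of the rhombus half-angles and the kernel relation $Kf=0$ of Lemma \ref{lem:kernel} at the triangle of each non-root vertex, into the factor $\prod_{\eb\in\boldsymbol{F}\cap\xb}\tan\theta_{\eb}$ expected from $\Plap(\zs,\ws)$, up to a universal constant that is independent of $\boldsymbol{F}$. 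Because the non-trivial cycles of $F$ follow precisely the non-trivial cycles of $\boldsymbol{F}$ through the long edges of $\Lb$, the homology factors $(1-\zs^{h(T)}\ws^{v(T)})$ coincide on both sides.

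The main obstacle is the global cycle-avoidance mentioned above. The local essential configurations at neighbouring decorations are chosen independently, yet together with $\Lb$ they must avoid trivial cycles built out of short edges linked by one or two long edges; it is this interaction that prevents a naive bijection and that forces the introduction of the licit edge moves of Section \ref{sec:general}. A secondary, purely computational, obstacle is the precise identification of the local short-edge sum at a decoration with $\prod_{\eb\in\boldsymbol{F}\cap\xb}\tan\theta_{\eb}$: this rests on the explicit expression \eqref{def:f} for $f$ together with the signs coming from our chosen Kasteleyn orientation, and must be carried out case by case according to the type $\tau(\xb)$ and the position of the root vertex within the decoration.
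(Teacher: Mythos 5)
There is a genuine gap: the family you attach to an OCRSF $\boldsymbol{F}$ of $G_1$ — namely the essential OCRSFs of $\GD_1$ whose long-edge set is exactly $\Lb=\boldsymbol{F}$, with one root per decoration — is far too small to account for everything counted by $\Pdimer^0(\zs,\ws)$. The set $\F^0(\GD_1)$ contains essential OCRSFs compatible with \emph{any} $\Lb\in\boldsymbol{\mathcal{L}}$, in particular those where a vertex of $G_1$ has several outgoing long edges; such configurations are not obtained by your construction, yet they contribute to the polynomial. The paper's construction attaches to $(\Fb,\Fb^*)$ not only the compatible configurations (its initial step, your entire map) but also everything obtained from them by \emph{adding} arbitrary subsets of extra long edges of $E(G_1)\setminus\Fb$ through licit primal/dual moves; you have misread the role of these moves. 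They are not a device for checking absence of trivial cycles inside the fixed-$\Lb$ family — in your setting, with one root per decoration, Lemma \ref{lem:lem12} already shows no trivial cycle can arise as long as $\Fb$ has none, so the ``main obstacle'' you name is not an obstacle at all. The moves exist to enlarge the family beyond $\Lb=\Fb$, and their licitness condition is formulated on \emph{pairs} of dual OCRSFs, so the dual configurations $\Fb^*$, $F^*$ (which you never mention) are indispensable both for defining the moves, for controlling the homology factors under them, and for the disjointness statement of Theorem \ref{thm:correspondence} (the distinguished root $v_1$ is read off from the unique edge leaving the dual center $c^*$).

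The weight identity you assert also fails on your restricted family. Summing the local short-edge choices over a decoration of a configuration with $\Lb=\Fb$ gives (up to a constant) $\prod_{\eb\in\Fb}\sin\theta_{\eb}$, not a universal constant times $\prod_{\eb\in\Fb}\tan\theta_{\eb}$; the discrepancy is the $\Fb$-dependent factor $\prod_{\eb\in\Fb}\cos\theta_{\eb}$. In the paper this is repaired precisely by the sum over added long edges: each edge $\eb_i\notin\Fb$ contributes a factor $1-(1+\cos\theta_{i})=-\cos\theta_{i}$ once the ``absent'' and ``added'' alternatives are summed, and the product $\prod_{\eb\notin\Fb}\cos\theta_{\eb}$ is exactly what converts the $\sin$-weights into $\tan$-weights times a constant depending only on $G_1$ (Theorem \ref{prop:weightpres}). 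So both halves of your plan — coverage of $\F^0(\GD_1)$ and matching of weights — require the long-edge-adding moves you set aside, together with the dual bookkeeping that makes them well defined.
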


Let us give an idea of the construction. First observe that the number of
OCRSFs of $\GD_1$ is much greater than the number of OCRSFs of $G_1$ so that
there is certainly no one-to-one mapping between these set of configurations.
Rather, to every OCRSF of $G_1$, we
assign a family of OCRSFs of $\GD_1$, which have the same reference number and
the the same signed number of cycles, and such that the sum of the weights of
OCRSFs in this family is equal to the weight of the original OCRSF of $G_1$.
The family of OCRSF of $\GD_1$ is constructed by successively adding long
edges, and this construction is done by induction on the number of long edges
added, thus allowing us to keep control over properties of OCRSFs in this
family. To be more precise, we actually need to work on the graph $G_1$ and its
dual graph $G_1^*$ at the same time.
The operations used to add long edges are called \emph{licit
primal/dual} moves, and are valid in a context more general than isoradial and
Fisher graphs, so that we dedicate it Section
\ref{sec:general}. The proof of Theorem \ref{thm:main2} is the subject of
Section \ref{sec:CORRES}. In Section \ref{sec:explicitcorrespondence}, we
construct a family of CRSFs of $\GD_1$ from each CRSF of $G_1$. In
Theorem \ref{thm:correspondence} we prove that we exactly obtain
all CRSFs of $\GD_1$ counted by the dimer characteristic polynomial; and in
Section \ref{sec:weightpreserving}, we show that this construction is weight
preserving.

\section{Primal/dual edge moves on pairs of dual OCRSFs}\label{sec:general}

Definitions and results of this section are valid in a context more general
than isoradial and Fisher graphs. But, in order not to introduce too many
notations, we let $G_1$ be any graph embedded in the torus.

In Section \ref{sec:defiOCRSFpol}, we first define a general \emph{OCRSF
characteristic polynomial}, allowing for weights which depend on oriented
edges, and we prove a useful rewriting of this polynomial, using
pairs of dual OCRSFs. In Section
\ref{sec:replacing2edges}, we define and prove properties of \emph{licit
primal/dual edge moves}, which are edge
moves performed on pairs of dual OCRSFs, and are one of the key
ingredients of the construction of Theorem \ref{thm:main2}. Licit primal/dual
edge moves rely on a natural edge operation
performed on one OCRSF only, which is the
subject of Section \ref{sec:replacingedges}.

\subsection{OCRSF characteristic polynomial}\label{sec:defiOCRSFpol}

We use definitions and notations introduced in Section
\ref{subsec321}. Suppose that a complex
weight function $\rho$ is assigned to oriented edges of $G_1$, that is every
oriented edge $(x,y)$ has a weight $\rho_{(x,y)}\in\CC$. The
{\em OCRSF characteristic polynomial}, corresponding to the 
weight function $\rho$, denoted $\POCRSF(\zs,\ws)$, is defined by:
\begin{equation}\label{equ:OCRSFpoly}
\POCRSF(\zs,\ws)= \sum_{F\in\F(G_1)} \rho(F)\prod_{T\in
F}(1-\zs^{h(T)}\ws^{v(T)}),
\end{equation}
where $\rho(F)=\prod_{(x,y)\in F}\rho_{(x,y)}$.
\begin{rem}
When $\rho$ is a positive weight function on unoriented edges of $G_1$, then the
OCRSF characteristic polynomial is simply the
Laplacian characteristic polynomial.
\end{rem}
We now prove a useful rewriting of $\POCRSF(\zs,\ws)$. In order to do
this, we need a few more facts and definitions.
Denote by $G_1^*$ the dual graph of $G_1$ and let us, for a moment, consider
edge configurations as unoriented. 
It is a general fact that if $F$ is a CRSF of $G_1$, then the complementary of
the dual edge configuration, consisting exactly of 
the dual edges of the edges absent in $F$, is a CRSF of $G_1^*$, with non
trivial cycles parallel to those of $F$,
such that primal and dual non trivial cycles alternate along the torus. It is
referred to as the \emph{dual CRSF of $F$}.

Let $F$ and $F^*$ be OCRSFs of $G_1$ and $G_1^*$ respectively, then $F$ and
$F^*$ are said to be {\em dual} of eachother, 
if their unoriented versions are. This means that to a pair of dual CRSFs corresponds 
$4^{|\scriptstyle \text{non-trivial cycles}|}$ pairs of
dual OCRSFs. We denote by $\F(G_1,G_1^*)$ the set of pairs of dual OCRSFs,
that is:
$$
\F(G_1,G_1^*)=\{(F,F^*)\in\F(G_1)\times\F(G_1^*)\,:\,F,\,F^*\text{ are dual
OCRSFs}\}.
$$

Let $(F,F^*)$ be a pair of dual OCRSFs of $G_1$ and $G_1^*$. 
Recall that the homology class $H(F)$ of $F$ can be rewritten as:
$$
H(F)=N(F)H_0(F),
$$
where $H_0(F)=(h_0(F),v_0(F))$ is the reference number of $F$, and $N(F)$ is the
signed number of non-trivial cycles of $F$ defined in Section \ref{subsec321};
and similarly for the homology class of $F^*$. Observe that $F$ and $F^*$ have
the same reference number, so that it makes sense to refer to $H_0(F)$ as the
\emph{reference number} of the pair $(F,F^*)$. Let us denote by $N(F,F^*)$ the
sum $N(F)+N(F^*)$. Then, we have the following
rewriting of $\POCRSF$.

\begin{lem}\label{lem:rewriting}
The OCRSF characteristic polynomial of $G_1$ can be rewritten as:
\begin{equation*}
\POCRSF(\zs,\ws)=\sum_{
(F,F^*)\in \F(G_1,G_1^*)}
\rho(F)\,
(-\zs^{h_0(F)}\ws^{v_0(F)})^{\frac{1}{2}N(F,F^*)}.
\end{equation*}
\end{lem}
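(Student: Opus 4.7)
My plan is to start from the definition (\ref{equ:OCRSFpoly}) of $\POCRSF(\zs,\ws)$ and, for each fixed OCRSF $F\in\F(G_1)$, rewrite the product $\prod_{T\in F}(1-\zs^{h(T)}\ws^{v(T)})$ as a sum over orientations of the dual CRSF of $F$. Summing over $F$ then regroups contributions into the announced sum over pairs $(F,F^*)\in\F(G_1,G_1^*)$, with the weight $\rho(F)$ carried through unchanged since it does not depend on $F^*$.

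Fix an OCRSF $F$ with $n$ non-trivial cycles; as they are all parallel with common reference $H_0(F)=(h_0,v_0)$, one has $h(T)=N(T)h_0$ and $v(T)=N(T)v_0$ for every tree $T\in F$. Setting $X:=\zs^{h_0}\ws^{v_0}$ and $\epsilon_i:=N(T_i)\in\{\pm1\}$, the contribution of $F$ becomes $\rho(F)\prod_{i=1}^{n}(1-X^{\epsilon_i})$. The key algebraic identity to establish is
\[
\prod_{i=1}^{n}(1-X^{\epsilon_i})\;=\;\sum_{\delta\in\{\pm1\}^{n}}(-X)^{\frac{1}{2}\left(\sum_i\epsilon_i\,+\,\sum_j\delta_j\right)}.
\]
The indices $\delta_j$ are to be read as the signs $N(T^*_j)$ on the $n$ non-trivial cycles of the dual CRSF: by the parallelism-and-alternation property of primal/dual non-trivial cycles on the torus recalled just above the statement, the dual CRSF of the underlying unoriented CRSF of $F$ has exactly the same number $n$ of non-trivial cycles as $F$, and orientations of its cycles are in bijection with $\delta\in\{\pm1\}^n$.

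The identity itself is a short computation. Letting $q:=\#\{i:\epsilon_i=-1\}$ and using $1-X^{-1}=-(1-X)/X$, the left-hand side equals $(-1)^q X^{-q}(1-X)^n$. On the right, pulling out $(-X)^{a/2}$ with $a:=\sum_i\epsilon_i=n-2q$ and recognizing the remaining $\delta$-sum as the $n$-fold expansion $\bigl[(-X)^{1/2}+(-X)^{-1/2}\bigr]^n=(-X)^{-n/2}(1-X)^n$ produces the same expression. A parity check --- $N(F)$ and $N(F^*)$ each have the parity of $n$, so $N(F)+N(F^*)$ is always even --- guarantees that the exponent $\tfrac12(N(F)+N(F^*))$ in the target formula is always an integer, so the fractional powers above are mere bookkeeping and the final answer is a Laurent polynomial in $\zs,\ws$.

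Substituting the identity into $\sum_F\rho(F)\prod_i(1-X^{\epsilon_i})$ and identifying each $\delta\in\{\pm1\}^n$ with the corresponding oriented version $F^*$ of the dual CRSF of $F$ yields exactly the right-hand side of the lemma. The main conceptual point I expect to be subtle is the equality of the primal and dual non-trivial cycle counts, which legitimizes the identification of the $\delta$-sum with the sum over $F^*$ dual to $F$; everything else is the purely algebraic identity above.
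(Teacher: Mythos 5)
Your proposal is correct and follows essentially the same route as the paper: you fix $F$, use parallelism to replace each cycle's homology by $N(T_i)H_0(F)$, and reinterpret the expanded product $\prod_i(1-X^{\epsilon_i})$ as a sum over the $2^n$ orientations of the dual CRSF, which has the same number $n$ of non-trivial cycles by the alternation fact recalled just before the lemma. The only cosmetic difference is that the paper carries out this reindexing by an explicit bijection (choosing $T_i^*$ co- or contra-oriented with $T_i$ according to an expansion index in $\{0,1\}$), whereas you verify the equivalent identity in closed form via $(-1)^qX^{-q}(1-X)^n$, needing only the equality of primal and dual cycle counts and the common reference number.
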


\begin{proof}
Denote by $T_1,\cdots,T_n$ the OCRT components of a generic OCRSF $F$ of $G_1$.
By expanding the product of Equation \eqref{equ:OCRSFpoly}, the polynomial
$\POCRSF(\zs,\ws)$ can be rewritten as:
\begin{align*}\label{equ:proof_rewriting1}
\POCRSF(\zs,\ws)&=\sum_{F\in \F(G_1)}\rho(F)\prod_{i=1}^n 
\sum_{\eps_{i}\in\{0,1\}}\left(-\zs^{h(T_i)}\ws^{v(T_i)}\right)^{\eps_{i}}
\nonumber\\
&=\sum_{F\in
\F(G_1)}\rho(F)\sum_{(\eps_1,\cdots,\eps_n)\in\{0,1\}^n}\prod_{i=1}^n 
\left(-\zs^{h(T_i)}\ws^{v(T_i)}\right)^{\eps_{i}}.
\end{align*}

Fix an OCRSF $F$ of $G_1$, and let $(\eps_1,\cdots,\eps_n)\in\{0,1\}^n$.
Then to $F$ assign the following dual OCRSF $F^*$: 
if $\eps_i=1$, take $T_i^*$ to be co-oriented with $T_i$, and if $\eps_i=0$, take $T_i^*$ to be contra-oriented. Thus, 
there is a one-to-one correspondence between dual OCRSFs of $F$, and sequences $(\eps_1,\cdots,\eps_n)\in\{0,1\}^n$, and:
\begin{align*}
&\eps_{i}=\pm\frac{1}{2}(N(T_i)+N(T_i^*)),\\
&\eps_{i}(h(T_i),v(T_i))=\frac{1}{2}\left(h(T_i)+h(T_i^*),v(T_i)+v(T_i^*)\right).
\end{align*}
This implies that:
\begin{align*}
(-1)^{\eps_i}&=(-1)^{\frac{1}{2}(N(T_i)+N(T_i^*))},\\
\eps_{i}(h(T_i),v(T_i))&=\frac{1}{2}(N(T_i)+N(T_i^*))(h_0(T_i),v_0(T_i)).
\end{align*}
The proof is concluded by recalling that $\sum_{i=1}^n
(N(T_i),N(T_i^*))=(N(F),N(F^*))=N(F,F^*)$ and that for every
$i\in\{1,\cdots,n\}$, $H_0(F)=H_0(T_i)$.
\end{proof}

%

\subsection{Replacing one edge of an OCRSF}\label{sec:replacingedges}

Let $F$ be an OCRSF of $G_1$, and consider an oriented edge 
$e=(x,y)$ of $E(G_1)\setminus F$. Then, since $F$ is an OCRSF, the vertex $x$
has a unique outgoing edge $e_x$ of $F$. 
Define $F_{\{e,e_x\}}$ to be the edge configuration $F$ where the edge $e_x$ is replaced by the edge $e$:
\begin{equation*}
F_{\{e,e_x\}}:=F\cup\{e\}\setminus\{e_x\}. 
\end{equation*}
In this section we describe features of the edge configuration $F_{\{e,e_x\}}$, which by construction 
contains one outgoing edge at every vertex.

Denote by $T_1,\cdots,T_n$ the CRT components of $F$, and
let $\gamma_i$  be the non-trivial cycle of the CRT component 
$T_i$. Without loss of generality let us suppose $n\geq 2$, the case $n=1$ simply being a boundary case of the case 
$n=2$. Then, the edge $e$ belongs to a unique cylinder $\Cscr$ obtained by cutting along two neighboring non trivial 
cycles of $F$, say $\gamma_1,\gamma_2$, see Figures \ref{fig:fig9}, \ref{fig:fig10}, \ref{fig:fig11}. Since $F$ is an OCRSF, 
the vertex $x$ (resp. $y$) is connected by an oriented edge-path $p_x$ (resp.
$p_y$) of $F$ to one of the non-trivial cycles 
$\gamma_1$ or $\gamma_2$. Let $F^*$ be a dual OCRSF 
of $F$ (at this point the orientation of its non-trivial cycles does not matter), then there is a unique 
non-trivial cycle $\gamma^*$ of $F^*$ contained in the cylinder $\Cscr$. Cutting along $\gamma^*$ separates 
the cylinder $\Cscr$ into two disjoint cylinders $\Cscr_1$, $\Cscr_2$. In order to analyze the features 
of the edge configuration $F_{\{e,e_x\}}$, we first describe the edge configuration $F\cup\{e\}$, split 
according to the following cases.

\begin{enumerate}
\item[Case $1$:] The edge $e$ belongs to the cylinder $\Cscr_1$, see Figure \ref{fig:fig9} (the case where $e$ belongs to 
the cylinder $\Cscr_2$ is symmetric). Then, the vertices $x$ and $y$ are connected to the path 
$\gamma_1$, and the paths $p_x$ and $p_y$ are contained in the cylinder $\Cscr_1$. In Case $1$, 
we suppose moreover that the paths $p_x$, $p_y$ merge at a vertex $m$ before hitting $\gamma_1$ 
or exactly when hitting $\gamma_1$. Then, the edge configuration consisting of the edge $e$ and the 
part of $p_x$ and $p_y$ from $x$ (resp. $y$) to $m$ consists of:
\begin{enumerate}
\item[Case $1a$:] either a trivial cycle,
\item[Case $1b$:] or a non-trivial cycle parallel to $\gamma_1$.
\end{enumerate}

\begin{figure}[h]
\begin{center}
\includegraphics[height=3.7cm]{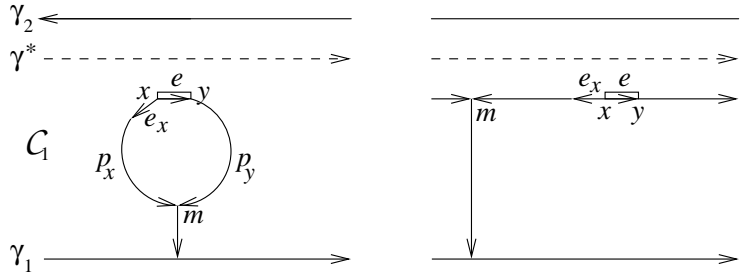}
\end{center}
\caption{The edge configuration $F\cup\{e\}$ in Case $1a$ (left) and Case $1b$ (right).}
\label{fig:fig9}
\end{figure}

\item[Case 2:] As in Case $1$, the edge $e$ belongs to the cylinder $\Cscr_1$, but this time the paths 
$p_x, p_y$ do not merge before hitting $\gamma_1$, see Figure \ref{fig:fig10}. Then cutting along 
$\{e\}\cup \{p_x\}\cup \{p_y\}$ separates the cylinder $\Cscr_1$ into two connected components, one homeomorphic to a disc and the other 
to a cylinder. We let $m$ be the unique boundary vertex of the component 
homeomorphic to a disc, with two incoming edges.\\
\vspace{2cm}

\begin{figure}[ht]
\begin{center}
\includegraphics[height=3.7cm]{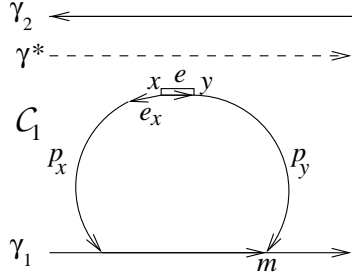}
\end{center}
\caption{The edge configuration $F\cup\{e\}$ in Case $2$.}
\label{fig:fig10}
\end{figure}

\item[Case 3:] The edge $e$ crosses the non-trivial cycle $\gamma^*$, see Figure \ref{fig:fig11}. 
Then, up to relabeling, the vertex $x$ (resp. $y$) is connected to the path $\gamma_1$ (resp. $\gamma_2$), 
and the path $p_x$ (resp. $p_y$) is contained in the cylinder $\Cscr_1$ (resp. $\Cscr_2$). 
We let $m$ be the vertex at which the path $\gamma_x$ hits the non-trivial cycle $\gamma_1$.\\

\begin{figure}[ht]
\begin{center}
\includegraphics[height=3.2cm]{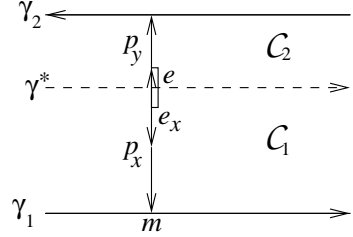}
\end{center}
\caption{The edge configuration $F\cup\{e\}$ in Case $3$.}
\label{fig:fig11}
\end{figure}
\end{enumerate}
As a consequence, we obtain the following characterization of the edge
configuration $F_{\{e,e_x\}}$. 

\begin{lem}\label{lem:replacing1edge}
Let $F$ be an OCRSF of $G_1$, $e=(x,y)$ be an edge of $E(G_1)\setminus F$, and
$e_x$ be the unique edge of $F$ exiting the vertex $x$. 
Then using the above case splitting, we have:
\begin{enumerate}
\item In all cases, as long as $x\neq m$, then the edge configuration $F_{\{e,e_x\}}$ is an OCRSF whose oriented non-trivial 
cycles are in bijection with those of $F$, so that $F$
and $F_{\{e,e_x\}}$ have the same reference number and $N(F_{\{e,e_x\}})=N(F)$.
\item When $x=m$, \emph{i.e.} the path $p_x$ is reduced to the point $m$, refer to Figure \ref{fig:fig13}:
\begin{itemize}
\item In Cases $1a$ and $2$, a trivial cycle is created so that $F_{\{e,e_x\}}$
is not an OCRSF.
\item In Case $1b$: when $x=m\notin\gamma_1$, then a non-trivial cycle parallel to $\gamma_1$ is created, so that 
$F_{\{e,e_x\}}$ is an OCRSF with the same reference number as $F$, satisfying
$N(F_{\{e,e_x\}})=N(F)\pm 1$. When $x=m\in\gamma_1$, then a non-trivial cycle
parallel to $\gamma_1$ is created 
and $\gamma_1$ is broken up. Thus,
$F_{\{e,e_x\}}$ is an OCRSF with the same reference number as $F$, satisfying 
$N(F_{\{e,e_x\}})=N(F)\pm 2$ or $N(F_{\{e,e_x\}})=N(F)$.
\item In Case $3$: as long as the number $n$ of non-trivial cycles is $\geq 2$,
then the non-trivial cycle $\gamma_1$ is broken up,
so that $F_{\{e,e_x\}}$ is an OCRSF with the same reference number as $F$,
satisfying
$N(F_{\{e,e_x\}})=N(F)\pm 1$. When $n=1$, then the unique non-trivial cycle 
is broken up, and a non-trivial cycle orthogonal to $\gamma_1$ is created. Thus
$F_{\{e,e_x\}}$ has a different reference number than $F$.
\end{itemize}
\begin{figure}[ht]
\begin{center}
\includegraphics[width=13cm]{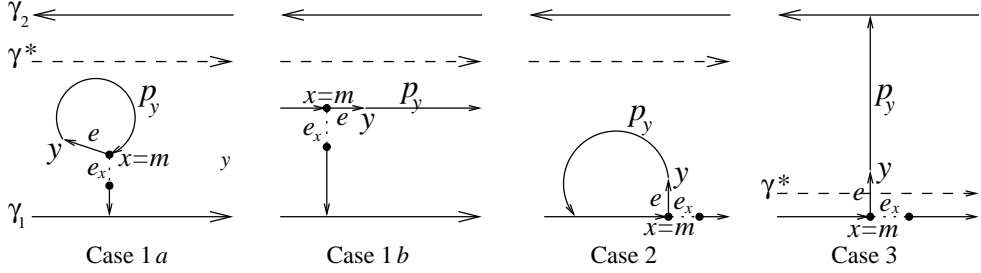}
\end{center}
\caption{The edge configuration $F_{\{e,e_x\}}$, when $x=m$.}
\label{fig:fig13}
\end{figure}
\end{enumerate}
\end{lem}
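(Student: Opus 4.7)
The plan is a direct topological case analysis following the Case splitting (1a, 1b, 2, 3) established immediately before the lemma statement, together with a single unifying local argument. Two elementary local observations drive every case: first, deleting $e_x$ either detaches from $F$ the subtree $S_x$ of vertices whose unique outgoing path in $F$ to a non-trivial cycle begins with $e_x$ (when $e_x$ is a tree edge), or breaks the non-trivial cycle $\gamma_j$ on which it sits (when $e_x \in \gamma_j$); second, adding $e=(x,y)$ gives $x$ a new outgoing edge, so the structure of $F_{\{e,e_x\}}$ is determined by where $y$ sits relative to $S_x$ and to any broken cycle. The condition $x=m$ (\emph{i.e.}\ $p_x$ reduced to the point $m$) is precisely what separates the degenerate local configurations from the benign ones.

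For Part 1, the hypothesis $x \neq m$ translates uniformly into two properties: $x$ does not lie on any non-trivial cycle of $F$, so $e_x$ is a tree edge; and $y \notin S_x$, because $p_y$ does not pass through $x$ (in Case 1 the merge point $m$ lies strictly past $x$ along $p_x$, so $p_y$ uses only shared edges past $x$; in Case 2 the paths $p_x, p_y$ are disjoint; in Case 3 they sit in different half-cylinders $\Cscr_1, \Cscr_2$). Under these two properties, adding $e$ cleanly re-attaches $S_x$ to the component of $y$ without creating or destroying a cycle, so the oriented non-trivial cycles of $F$ are preserved in bijection with those of $F_{\{e,e_x\}}$, yielding both the reference number and $N$ claims.

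For Part 2, I analyse each sub-case using the explicit loop $L_e$ naturally associated with $e$: $L_e=\{e\}\cup p_y[y \to x]$ in Case 1; $L_e = \{e\}\cup p_y \cup \gamma_1[b \to x]$ in Case 2; and, in Case 3 with $n=1$, $L_e$ consists of $e$, $p_y$, and the remaining arc of $\gamma_1$. The key local computation is that $e_x$ is not an edge of $L_e$ --- either because $e_x$ lies on the shared portion $p_x[m\to\gamma_1]$, or because $e_x$ lies on the complementary arc of $\gamma_1$ to the one used in $L_e$ --- so $L_e$ survives as a cycle in $F_{\{e,e_x\}}$. In Cases 1a and 2 this surviving $L_e$ is trivial, so $F_{\{e,e_x\}}$ is not an OCRSF. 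In Case 1b, $L_e$ is non-trivial parallel to $\gamma_1$: if $x \notin \gamma_1$ then $\gamma_1$ is untouched and a single parallel non-trivial cycle is gained, so $N$ changes by $\pm 1$; if $x \in \gamma_1$ then $e_x \in \gamma_1$ breaks $\gamma_1$, which is replaced by $L_e$, and the change in $N$ is a sum of two independent $\pm 1$ contributions lying in $\{-2,0,2\}$. In Case 3 with $n \geq 2$, breaking $\gamma_1$ at $e_x$ simply reroutes its former vertices through $e$ and $p_y$ into $\gamma_2$ with no new cycle, giving a change of $\pm 1$ in $N$; while with $n=1$, the loop $L_e$ crosses the dual cycle $\gamma^*$ exactly once, so its homology class is orthogonal to $\gamma_1$, which explains the change of reference number.

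The main technical difficulty will be the bookkeeping in the two sub-cases where $x \in \gamma_1$ (namely Case 1b with $m \in \gamma_1$ and Case 3), where simultaneously a non-trivial cycle of $F$ is destroyed and another may be created: one must verify from the cylinder/disc picture that no spurious trivial cycle appears in $F_{\{e,e_x\}}$ and that the homology class of the new non-trivial cycle is the claimed one. All other sub-cases reduce to a single elementary tree modification and are routine.
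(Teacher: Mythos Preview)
The paper gives no separate proof of this lemma: it is stated immediately after the description of Cases $1a$, $1b$, $2$, $3$ (with Figures \ref{fig:fig9}--\ref{fig:fig11}) as a direct consequence of that case analysis, and the reader is referred to Figure \ref{fig:fig13} for the $x=m$ situations. Your proposal is exactly the argument the paper leaves implicit --- the same case splitting, driven by the same two local observations (what deleting $e_x$ does to the tree/cycle structure, and how adding $e$ then reroutes $S_x$) --- so the approaches coincide.

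One point in your Part~1 is worth tightening. You assert that $x\neq m$ forces $x$ off every non-trivial cycle, so that $e_x$ is necessarily a tree edge. This is immediate in Cases $1$ and $3$ (there $x\in\gamma_1$ gives $p_x=\{x\}$, hence $m=x$), but in Case $2$ it can fail: if $x\in\gamma_1$, so that $p_x$ is reduced to the point $x=a$, and the disc arc of $\gamma_1$ is oriented from $x$ to $b$, then $m=b\neq x$ while $e_x$ lies on $\gamma_1$. In that degenerate configuration your detach--reattach picture does not apply; instead, removing $e_x$ opens $\gamma_1$, and $e\cup p_y$ together with the \emph{complementary} (cylinder-side) arc of $\gamma_1$ from $b$ to $x$ forms a non-trivial cycle homologous and co-oriented to $\gamma_1$. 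The bijection of oriented non-trivial cycles therefore still holds, via the same ``replace $\gamma_1$ by a homologous cycle'' mechanism you already invoke for Case $1b$ with $x=m\in\gamma_1$, so the patch is immediate.
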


\subsection{Licit primal/dual edge moves}\label{sec:replacing2edges}

Let $(F,F^*)$ be a pair of dual OCRSFs of $G_1$ and $G_1^*$. We consider
an edge $e_1=(x_1,y_1)$ of $E(G_1)\setminus F$, and an edge
$e_2^*=(x_2^*,y_2^*)$ of 
$E(G_1^*)\setminus F^*$. Then, $e_1$ is the dual of an oriented edge
$e_1^*=(x_1^*,y_1^*)$ of $F^*$, and 
$e_2^*$ is the dual of an oriented edge $e_2=(x_2,y_2)$ of $F$. 
\begin{defi}\label{PrimalDualMove}
We say that the pair of primal and dual oriented edge configurations 
$\bigl(F_{\{e_1,e_2\}},F^*_{\{e_2^*,{e_1^*}\}}\bigr)$ 
is obtained from $(F,F^*)$ by a \emph{primal/dual edge move}. This move 
is called \emph{licit}, whenever:
\begin{equation}\label{Condition}
x_1=x_2\quad\text{ and}\quad x_1^*=x_2^*.
\end{equation}
\end{defi}

\begin{prop}\label{prop:replacing2edges}
The pair $(F_{\{e_1,e_2\}},F^*_{\{e_2^*,{e_1^*}\}})$ consists of dual OCRSFs of
$G_1$ and $G_1^*$, iff 
the primal/dual edge move is licit. When this is the case, 
\begin{enumerate}
 \item either the pair $(F_{\{e_1,e_2\}},F^*_{\{e_2^*,{e_1^*}\}})$ has the same
reference number as $(F,F^*)$, and
\begin{equation*}
 N(F_{\{e_1,e_2\}},F^*_{\{e_2^*,{e_1^*}\}})=N(F,F^*),
\end{equation*}
\item or the pairs $(F_{\{e_1,e_2\}},F^*_{\{e_2^*,{e_1^*}\}})$ and $(F,F^*)$
have different reference numbers, and
$$
N(F_{\{e_1,e_2\}},F^*_{\{e_2^*,{e_1^*}\}})=N(F,F^*)=0.
$$
\end{enumerate}
\end{prop}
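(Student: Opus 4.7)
The plan is to split the proof into two parts, corresponding to the biconditional statement of the proposition and the dichotomy on the reference number/signed cycle count. Throughout I would use Lemma \ref{lem:replacing1edge} as the main tool on each side, together with the standard torus duality fact that a subgraph of $G_1$ with $|V(G_1)|$ edges is a CRSF if and only if its complementary dual in $G_1^*$ is.

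For the equivalence, necessity is a direct counting argument. If the new pair is dual OCRSFs, then in $F_{\{e_1,e_2\}}$ every vertex has exactly one outgoing edge. At $x_2$ the outgoing edge $e_2$ was removed and the only edge added is $e_1$, so $e_1$ must emanate from $x_2$; this forces $x_1=x_2$, and the symmetric argument on $G_1^*$ yields $x_1^*=x_2^*$. For sufficiency, assume licitness. As unoriented sets $F_{\{e_1,e_2\}}$ and $F^*_{\{e_2^*,e_1^*\}}$ are complementary duals, and the licit condition preserves the outgoing-edge property on both sides; by the torus duality fact it suffices to rule out a trivial cycle in $F_{\{e_1,e_2\}}$. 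Applying Lemma \ref{lem:replacing1edge} to the primal move, a trivial cycle can only appear in the degenerate sub-cases of Case $1a$ and Case $2$ with $x_1=m$. In these cases the trivial cycle $\gamma$ bounds a disc $D$; since $e_2\notin\gamma$, the dual edge $e_2^*$ does not cross $\gamma$ and must lie entirely on the side of $x_1^*$. Tracking this forces the dual configuration obtained from $F^*$ via Lemma \ref{lem:replacing1edge} to itself land in an analogous degenerate sub-case, contradicting the fact that $F^*$ is an OCRSF with the required structure. Hence the bad cases are excluded and both sides are OCRSFs.

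For the analysis of the reference number and $N(F,F^*)$, I would run a case analysis based on which sub-case of Lemma \ref{lem:replacing1edge} applies to the primal move, with a parallel analysis for the dual move. In all non-exceptional sub-cases ($x_1\neq m$), both $N(F_{\{e_1,e_2\}})=N(F)$ and $N(F^*_{\{e_2^*,e_1^*\}})=N(F^*)$, and the reference number is unchanged, giving the first alternative. In Cases $1b$ and $3$ with $x_1=m$ and $n\geq 2$, individual signed cycle counts on the primal and dual change, but in compensating ways: since primal and dual non-trivial cycles alternate along the torus, breaking or creating a primal non-trivial cycle is paired with an opposite adjustment on the dual side, keeping $N(F,F^*)$ invariant. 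In Case $3$ with $n=1$, the reference number genuinely changes to the orthogonal direction; here I would show that such a move is possible only from a configuration where primal and dual cycles pair up so that $N(F)=-N(F^*)$, giving $N(F,F^*)=0$, with the same balance holding in the resulting pair.

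The main obstacle is the delicate bookkeeping in this last step: confirming that in every sub-case the changes in $N(F)$ and $N(F^*)$ exactly cancel, or, in the reference-number-changing case, that both sums vanish. This requires exploiting the alternation property of primal and dual non-trivial cycles on the torus to correlate the local structural changes on the two sides under the simultaneous swap.
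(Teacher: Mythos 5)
Your frame is the same as the paper's (necessity by counting outgoing edges at $x_1$ and $x_1^*$; sufficiency and the bookkeeping via the case analysis of Lemma \ref{lem:replacing1edge} applied simultaneously to the primal and dual moves), but the two steps that carry the actual content are respectively misargued and left undone. First, your exclusion of the trivial-cycle cases (Cases $1a$ and $2$ with $x_1=m$) does not work as stated: you conclude by saying that the dual configuration \emph{obtained after the move} lands in a degenerate sub-case, ``contradicting the fact that $F^*$ is an OCRSF.'' That is not a contradiction --- the lemma's degenerate sub-cases describe the configuration after the move, and saying the moved dual fails to be an OCRSF is exactly the statement you are trying to refute, not a hypothesis you may contradict. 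The correct argument (the paper's) derives an impossibility in $F^*$ \emph{itself}: the would-be trivial cycle $\{e_1\}\cup\{p_{y_1}\}$ bounds a disc whose boundary, apart from $e_1$, consists of edges of $F$, so no edge of $F^*$ can cross it except the dual $e_1^*$ of $e_1$; combined with licitness (which pins down $x_1^*=x_2^*$ and hence the position and orientation of $e_1^*$ and $e_2^*$ relative to this disc), the oriented path $p_{x_1^*}$ of $F^*$ gets trapped and cannot reach a dual non-trivial cycle, so the configuration cannot occur at all. Your remark that $e_2^*$ lies on the side of $x_1^*$ is a step in this direction, but the contradiction mechanism you name is the wrong one.

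Second, the dichotomy on reference numbers and $N(F,F^*)$ --- which is the substance of Points $1$ and $2$ --- is asserted via the slogan that primal and dual cycles ``alternate, so changes compensate,'' and you yourself flag the verification as the main obstacle. This is precisely what has to be proved, sub-case by sub-case: in Case $1b$ with $x_1=m\notin\gamma_1$ one must show that licitness forces the dual path $p_{y_2^*}$ to hit $x_2^*$ and create a non-trivial cycle \emph{antiparallel} to the primal one (giving $\pm1$ against $\mp1$); in Case $1b$ with $x_1=m\in\gamma_1$ the new primal cycle is co-oriented with $\gamma_1$, so the primal net change is $0$ and the dual move is in the generic case $x_2^*\neq m_2^*$, so nothing changes on the dual side (note your blanket description ``both counts change in compensating ways'' is wrong here); in Case $3$ one must show $e_1^*$ lies on $\gamma^*$ with $\gamma^*$ and $\gamma_1$ antiparallel, which for $n\geq 2$ gives the $\pm1,\mp1$ cancellation and for $n=1$ gives $N(F,F^*)=0$ before and after, with the orthogonal reference number. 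None of these forcing statements follows from alternation alone; they use licitness together with the orientation structure, and they are where the proof lives. As it stands, the proposal is a correct plan with the paper's architecture, but with a genuine gap at both critical points.
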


\begin{proof}
Let us first show that Condition \eqref{Condition} is necessary. By construction of $F_{\{e_1,e_2\}}$, 
the outgoing edge $e_1$ is added at the vertex $x_1$. Since $F$ is an OCRSF, the
vertex $x_1$ has a unique outgoing 
edge $e_{x_1}$ of $F$. Suppose that $x_1\neq x_2$, this implies that $e_2\neq
e_{x_1}$ so that the vertex $x_1$ has two outgoing 
edges in the edge configuration $F_{\{e_1,e_2\}}$. By Remark~\ref{rem:OCRSF},
this forbids $F_{\{e_1,e_2\}}$ from being an OCRSF. 
A similar argument holds for $F^*_{\{e_2^*,e_1^*\}}$, when 
$x_1^*\neq x_2^*$. 

Let us now suppose that $x_1=x_2$ and $x_1^*=x_2^*$, see Figure \ref{fig:fig14}.
\begin{figure}[ht]
\begin{center}
\includegraphics[height=2.7cm]{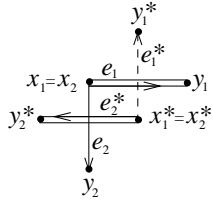}
\end{center}
\caption{The edges $e_1,e_2,e_1^*,e_2^*$, when $x_1=x_2$, $x_1^*=x_2^*$.}
\label{fig:fig14}
\end{figure}

In the sequel, we use the notations of Section \ref{sec:replacingedges}. Suppose
that $x_1\neq m_1$, and $x_2^*\neq m_2^*$. 
Then, by Point $1.$ of Lemma \ref{lem:replacing1edge}, the edge configuration
$F_{\{e_1,e_2\}}$ (resp. $F^*_{\{e_2^*,e_1^*\}}$ ) has the same reference number
as $F$
(resp. $F^*$) and $N(F_{\{e_1,e_2\}})=N(F)$ (resp.
$N(F^*_{\{e_2^*,e_1^*\}})=N(F^*)$), so that Point $1.$ of Proposition
\ref{prop:replacing2edges} is clearly satisfied.

Suppose now that $x_1=m_1$, then in Cases $1a$ and $2$, the edge-path $p_{y_1}$
hits the vertex $x_1$, and 
$\{e_1\}\cup\{p_{y_1}\}$ 
contains a trivial cycle, see Figure \ref{fig:fig15}. In the dual graph, the
edge-path $p_{x_1^*}$ must enter this trivial 
cycle, which is impossible since $p_{x_1^*}$ must also connect $x_1^*$ to one
of the dual non-trivial cycles, so that this case cannot occur.

\begin{figure}[ht]
\begin{center}
\includegraphics[height=3cm]{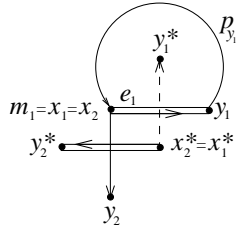}
\end{center}
\caption{In Cases $1a$ and $2$, when $x_1=m_1$, the edge-path $p_{y_1}$ hits
$x_1$, and $\{e_1\}\cup\{p_{y_1}\}$ contains a trivial cycle.}
\label{fig:fig15}
\end{figure}

In Case $1b$, the edge-path $p_{y_1}$ hits the vertex $x_1$, and
$\{e_1\}\cup\{p_{y_1}\}$ contains a non-trivial cycle parallel to 
$\gamma_1$, see Figure \ref{fig:fig16} (left). Suppose now that $x_1=m_1\notin
\gamma_1$, then in the dual graph, the edge path 
$p_{y_2^*}$ must hit $x_2^*$ (implying that $x_2^*=m_2^*$), and
$\{e_2^*\}\cup\{p_{y_2^*}\}$ must contain a non-trivial cycle parallel to
$\gamma_1$, implying that the non-trivial cycles contained in
$\{e_1\}\cup\{p_{y_1}\}$ and $\{e_2^*\}\cup\{p_{y_2^*}\}$ are parallel and in
opposite directions, see Figure \ref{fig:fig16} (right). Thus, by
Point $2.$, Case $1b$ of Lemma \ref{lem:replacing1edge}, we know that
$(F_{\{e_1,e_2\}},F^*_{\{e_2^*,e_1^*\}})$ are dual OCRSFs of $G_1$ and
$G_1^*$, with the same reference number as $(F,F^*)$, such that:
\begin{equation*}
N(F_{\{e_1,e_2\}},F^*_{\{e_2^*,e_1^*\}})=N(F)\pm 1+N(F^*)\mp 1=N(F,F^*). 
\end{equation*}

\begin{figure}[ht]
\begin{center}
\includegraphics[height=3.4cm]{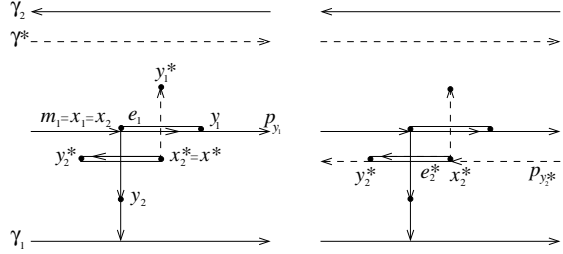}
\end{center}
\caption{Left: In Cases $1b$, when $x_1=m_1$, the edge-path $p_{y_1}$ hits
$x_1$, and $\{e_1\}\cup\{p_{y_1}\}$ contains a non-trivial cycle.
Right: when $x_1=m_1\notin\gamma_1$, the edge path $p_{y_2^*}$ must hit
$x_2^*$, and $\{e_2^*\}\cup\{p_{y_2^*}\}$ contains a non-trivial cycle, with
opposite direction.}
\label{fig:fig16}
\end{figure}

In Case $1b$, when $x_1=m_1\in\gamma_1$, then $\{e_1\}\cup\{p_{y_1}\}$ and
$\gamma_1$ must be co-oriented, so that by Point $2.$, Case $1b$ of 
Lemma \ref{lem:replacing1edge}, we know that $F_{\{e_1,e_2\}}$ is an OCRSF, in
which the non-trivial cycle $\gamma_1$ is broken up 
and the non-trivial cycle $\{e_1\}\cup\{p_{y_1}\}$ is added, implying that
$F_{\{e_1,e_2\}}$ has the same reference number as $F$, and
$N(F_{\{e_1,e_2\}})=N(F)$. 
In the dual graph, we have 
$x_2^*\neq m_2^*$, so that by Point $1.$ of Lemma \ref{lem:replacing1edge}, we
know that $F^*_{\{e_2^*,e_1^*\}}$ is an OCRSF whose non-trivial cycles are in
bijection with those of $F$. Summing the two contributions $N(F_{\{e_1,e_2\}})$
and $N(F^*_{\{e_2^*,e_1^*\}})$, we deduce 
that Point $1.$ of Proposition \ref{prop:replacing2edges} is satisfied.    

In Case $3$, when $x_1=m_1$, the edge $e_1^*$ must belong to $\gamma^*$
(implying that $x_1^*=x_2^*=m_2^*$), and 
the paths $\gamma^*$ and $\gamma_1$ must be in opposite directions, see Figure
\ref{fig:fig17}. If the number of non-trivial cycles of $F$ is $\geq 2$, then by
Point $2.$, Case $3$ of Lemma \ref{lem:replacing1edge}, we know that
$F_{\{e_1,e_2\}}$ (resp. $F^*_{\{e_2^*,e_1^*\}}$) is an OCRSF, in which the
non-trivial cycle $\gamma_1$ (resp. $\gamma^*$) is broken up. As a consequence, 
$(F_{\{e_1,e_2\}},F^*_{\{e_2^*,e_1^*\}})$ are dual OCRSFs of $G_1$ and $G_1^*$
with the same reference number as $(F,F^*)$ such that:
\begin{equation*}
N(F_{\{e_1,e_2\}},F^*_{\{e_2^*,e_1^*\}})=N(F)\pm 1+N(F^*)\mp 1=N(F,F^*). 
\end{equation*}

\begin{figure}[ht]
\begin{center}
\includegraphics[height=3cm]{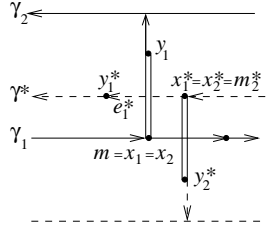}
\end{center}
\caption{In Case $3$, when $x_1=m_1$, the edge $e_1^*$ must belong to
$\gamma^*$, and the paths $\gamma^*$ and $\gamma_1$ must be in opposite
directions}
\label{fig:fig17}
\end{figure}

When the number of non-trivial cycles of $F$ is $1$, since $\gamma^*$ and
$\gamma_1$ must be in opposite directions, we know that $N(F,F^*)=0$. Moreover,
by Point $2.$, Case $3$ of Lemma \ref{lem:replacing1edge}, we know that
$F_{\{e_1,e_2\}}$ (resp. $F^*_{\{e_2^*,e_1^*\}}$) is an OCRSF, in which the
non-trivial cycle $\gamma_1$ (resp. $\gamma^*$) is broken up and in which a
non-trivial cycle orthogonal to the original one is
created. Since these must have opposite directions, we deduce that the pair
$(F_{\{e_1,e_2\}},F^*_{\{e_2^*,e_1^*\}})$ has a different reference number than
$(F,F^*)$,
and
$N(F_{\{e_1,e_2\}},F^*_{\{e_2^*,e_1^*\}})=0=N(F,F^*)$, implying that Point $2.$
of Proposition \ref{prop:replacing2edges} is satisfied.
\end{proof}

\section{Proof of Theorem \ref{thm:main2}}\label{sec:CORRES}

Let $G$ be an infinite, $\ZZ^2$-periodic isoradial graph, $\GD$ be the
corresponding Fisher graph, and $G_1$, $\GD_1$ be their respective fundamental
domains. As a consequence of Lemma \ref{lem:rewriting} and using notations
introduced in Section \ref{sec:notations}, we deduce that the critical
Laplacian characteristic polynomial $\Plap(\zs,\ws)$, and the critical dimer
characteristic polynomial $\Pdimer^0(\zs,\ws)$, can be rewritten as:
\begin{align*}
\Plap(\zs,\ws)&=\sum_{(\Fb,\Fb^*)\in\F(G_1,G_1^*)}
\left(\prod_{\eb=\xb\yb\in
\Fb}\tan\theta_{\xb\yb}\right)(-\zs^{h_0}\ws^{v_0})^{\frac{1}{2}N(\Fb,\Fb^*)},\\
\Pdimer^0(\zs,\ws)&=\sum_{(F,F^*)\in\F^0(\GD_1,\GD_1^*)}
\left(\prod_{e=(x,y)\in F}f_x\overline{f_y}K_{x,y}
\right)
(-\zs^{h_0}\ws^{v_0})^{\frac{1}{2}N(F,F^*)}.
\end{align*}
In Section \ref{sec:explicitcorrespondence}, we prove the first part of Theorem
\ref{thm:main2} by constructing pairs of essential OCRSFs of $\GD_1$ and
$\GD_1^*$ counted by $\Pdimer^0(\zs,\ws)$ from
pairs of dual
OCRSFs of $G_1$ and $G_1^*$ counted by $\Plap(\zs,\ws)$, see Theorem
\ref{thm:correspondence}.
Then, in Section~\ref{sec:weightpreserving}, we prove that this construction
is weight
preserving, thus ending the proof of Theorem~\ref{thm:main2}. 

\subsection{Explicit construction}\label{sec:explicitcorrespondence}

Let us start by giving the general idea of the construction.
Let $(\Fb,\Fb^*)$ be a pair of dual OCRSFs
of $G_1$ and $G_1^*$, then to $(\Fb,\Fb^*)$ we assign a family
$\mathcal{S}(\Fb,\Fb^*)$ consisting of pairs of dual essential OCRSFs of
$\GD_1$ and $\GD_1^*$, such that:
\begin{enumerate}
\item $\bigcup_{(\Fb,\Fb^*)\in
\F(G_1,G_1^*)}\mathcal{S}(\Fb,\Fb^*)=\F^0(\GD_1,\GD_1^*).$
\item When $(\Fb_1,\Fb_1^*)\neq (\Fb_2,\Fb_2^*)$, then
$\mathcal{S}(\Fb_1,\Fb_1^*)\cap \mathcal{S}(\Fb_2,\Fb_2^*)=\emptyset$.
\item For every $(F,F^*)\in \mathcal{S}(\Fb,\Fb^*)$, then
\begin{itemize}
 \item either the reference number of $(F,F^*)$ in $\GD_1,\,\GD_1^*$ is equal to
the reference number
of $(\Fb,\Fb^*)$ in $G_1,\,G_1^*$, and
$N_{\GD_1,\GD_1^*}(F,F^*)=N_{G_1,G_1^*}(\Fb,\Fb^*)$,
\item or the pairs $(F,F^*)$ and $(\Fb,\Fb^*)$ have different reference numbers,
and\\
$N_{\GD_1,\GD_1^*}(F,F^*)=N_{G_1,G_1^*}(\Fb,\Fb^*)=0$,
\end{itemize}
where we have added a subscript to the signed number of cycles $N(\cdot)$ to
indicate on which graph it is
computed.
\end{enumerate}

For every pair $(\Fb,\Fb^*)$ of dual OCRSFs of $G_1$ and
$G_1^*$, loosely stated, the family $\mathcal{S}(\Fb,\Fb^*)$ is constructed as
follows. Let
$\eb_1,\cdots,\eb_m,$ be an arbitrary labeling of 
unoriented edges of $E(G_1)\setminus\Fb$. 
For $k\in\{0,\cdots,m\}$, we let 
$J_k=\{(i_1,\cdots,i_k)\in\{1,\cdots,m\}^k\,|\,1\leq i_1<\cdots<i_k\leq m\}$,
with the convention that
$J_k=\emptyset$, when $k=0$. Then,
$$
\S(\Fb,\Fb^*)=\bigcup_{k=0}^m \bigcup_{(i_1,\cdots,i_k)\in
J_k}\F^{(\Fb,\Fb^*),\{\eb_{i_1},\cdots,\eb_{i_k}\}}(\GD_1,\GD_1^*),
$$ 
where $\F^{(\Fb,\Fb^*),\{\eb_{i_1},\cdots,\eb_{i_k}\}}(\GD_1,\GD_1^*)$ is
constructed by induction on $k$ using licit primal/dual moves, introduced in
Section \ref{sec:general}. In Section
\ref{sec:initial}, we prove all results needed for the initial step of
the induction. Then, in Section \ref{sec:induction_step}, we specify licit
primal/dual moves to pairs of essential OCRSFs of $\GD_1$ and $\GD_1^*$. This
allows us to precisely define the set
$\F^{(\Fb,\Fb^*),\{\eb_{i_1},\cdots,\eb_{i_k}\}}(\GD_1,\GD_1^*)$ in
Section \ref{sec:conclusion}. Finally, we state and prove Theorem
\ref{thm:correspondence} establishing that we exactly obtain all pairs of dual
essential CRSFs of $\GD_1$ and $\GD_1^*$.

\subsubsection{Initial step of the induction} \label{sec:initial}

Let $\Fb$ be a subset of oriented edges of $G_1$ such that each vertex has
exactly one outgoing edge of $\Fb$.
Considering $\Fb$ as a subset of long edges of $\GD_1$ defines, for
every decoration $\xb$, a unique root vertex $v_1(\xb)$.
From now on, whenever no confusion occurs, we omit the argument $\xb$. For the
whole of this section, we let $\tau\in\{cw,cclw\}^{V(G_1)}$ be an assignment
of type \emph{cw} or \emph{cclw} to vertices of $G_1$.

\begin{lem}\label{lem:lem12}
A subset of oriented edges $F$ of $\GD_1$ is an essential OCRSF compatible
with $\Fb$ and $\tau$, iff:
\begin{enumerate}
\item $\Fb$ is an OCRSF of $G_1$,
\item long edges of $F$ are exactly those of $\Fb$,
\item the restriction of $F$ to every decoration $\xb$ contains:
\begin{itemize}
\item all inner edges with the orientation induced by the type $\tau(\xb)$,
\item the edge $(w_1,v_1)$ at the triangle $t_1$ if the
decoration is of type \emph{cw},
or the edge $(z_1,v_1)$ if it is of type $\emph{cclw}$,
\item for every $i\neq 1$, any of the three possible $2$-edge configuration of
Definition~\ref{def:restricted}, at the
triangle $t_i$, with the orientation induced by the type, see also Figure
\ref{fig:fig19}.
\end{itemize}
\end{enumerate}
Moreover, when this is the case, oriented non-trivial cycles of $F$ in $\GD_1$
are in bijection with oriented non-trivial cycles of $\Fb$ in $G_1$.
\end{lem}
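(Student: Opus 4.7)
The plan is to derive the lemma from Corollary \ref{rem:4}, exploiting that in the present setting $|R_{\xb}(\Fb)|=1$ for every decoration, with unique root vertex $v_1(\xb)$. In this case, the ``additional constraint'' in Definition \ref{def:restricted} forces the single $1$-edge configuration at the only root triangle $t_1$ to be $(w_1,v_1)$ in the \emph{cw} case, or $(z_1,v_1)$ in the \emph{cclw} case; coupled with the prescribed orientation of inner edges and the three admissible $2$-edge configurations at the other triangles, this is precisely condition 3 of the lemma. It then remains to establish that conditions 2 and 3 together with ``$\Fb$ is an OCRSF of $G_1$'' are equivalent to conditions 1 and 2 of Corollary \ref{rem:4}, i.e.\ that $F$ is an essential OCRSF compatible with $\Fb$ and $\tau$.

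For this step I would invoke Part 1 of Proposition \ref{prop:contributing}: since there is a single root, the restriction of $F$ to each decoration $\xb$ is an oriented spanning tree rooted at $v_1(\xb)$. As a consequence, in $F$ every vertex of $\GD_1$ has exactly one outgoing edge — via the intra-decoration tree if it is a non-root vertex, via the outgoing long edge if it is $v_1(\xb)$ — and no cycle of $F$ lies entirely inside a single decoration. Hence any cycle of $F$ alternates long edges of $\Fb$ with intra-decoration directed paths leading from an entry vertex to the root $v_1$. This defines a natural bijection between oriented cycles of $F$ in $\GD_1$ and oriented cycles of $\Fb$ in $G_1$, and because each intra-decoration segment is contractible in $\GD_1$, this bijection preserves homology class — in particular it maps trivial cycles to trivial cycles.

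The lemma then follows directly: $F$ contains no trivial cycle iff $\Fb$ contains no trivial cycle, which, given the outgoing-edge hypothesis on $\Fb$ and Remark \ref{rem:OCRSF}, is equivalent to $\Fb$ being an OCRSF of $G_1$; and the restriction of the bijection to non-trivial cycles is exactly the final assertion. The main subtle point will be to verify that the intra-decoration spanning tree is oriented \emph{towards} $v_1(\xb)$ regardless of $\tau(\xb)\in\{cw,cclw\}$, i.e.\ that from every vertex of $\xb$ the unique outgoing path in $F$ eventually reaches the root. This is a direct but slightly tedious check using the explicit edge lists of Definition \ref{def:restricted} together with the type-induced orientation of inner edges; once it is settled, the bijection of cycles and the equivalence of conditions are pure bookkeeping.
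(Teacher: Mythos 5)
Your proposal is correct and follows essentially the same route as the paper: reduce to the characterization of Corollary \ref{rem:4}, observe that the single root vertex $v_1(\xb)$ forces the $1$-edge configuration of Definition \ref{def:restricted} to be $(w_1,v_1)$ (resp.\ $(z_1,v_1)$), and then transfer trivial and non-trivial cycles between $F$ and $\Fb$ using the unique directed path from every vertex of a decoration to its root. The only remark is that your flagged ``subtle point'' (that the intra-decoration tree is oriented towards $v_1(\xb)$) is already guaranteed by Part $1$ of Proposition \ref{prop:contributing} together with Remark \ref{rem:3} (roots have no outgoing edge, every other vertex exactly one, and there is no trivial cycle), so no additional case check on the edge lists is needed.
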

\begin{proof}
Since each vertex of $G_1$ has a unique outgoing edge of $\Fb$, we know that
$\Fb$ is an OCRSF of $G_1$, iff it contains
no trivial cycle. Moreover, we know that $F$ is an essential OCRSF of $\GD_1$ compatible with $\Fb$ and $\tau$, iff it 
satisfies Conditions $1.2.3.$ of Corollary \ref{rem:4}. Since every decoration of $\GD_1$ has a unique
root vertex, Condition $3.$ can be rewritten as Condition $3.$ of Lemma
\ref{lem:lem12} above. Thus, it remains to show that $F$ contains
no trivial cycle consisting of long and short edges of $\GD_1$, iff $\Fb$
contains no trivial cycle of $G_1$, or equivalently
$F$ contains a trivial cycle consisting of long and short edges iff $\Fb$
contains a trivial cycle. Because of the geometry of $\GD_1$, 
the direction from left to right
is straightforward. Conversely, suppose that $\Fb$ contains a trivial cycle.
Then, since every vertex of $G_1$ has a unique outgoing edge of 
$\Fb$ this implies that the trivial cycle must be co-oriented. Now, consider an
oriented configuration of $\GD_1$
satisfying Conditions $2.$ and $3$. above. Since every decoration of $\GD_1$ has
a unique root vertex, this means that every vertex of the 
decoration $\xb$ is connected by a unique path to the root vertex $v_1(\xb)$, in particular this holds for every vertex of type 
`$v$' of the decoration. Thus, if $\Fb$ contains a trivial cycles, then $F$ contains a trivial cycle consisting of 
long and short edges. A similar argument shows that if $F$ is an OCRSF of
$\GD_1$, its non-trivial cycles are in bijection with 
those of $\Fb$ in $G_1$.
\end{proof}

Figure \ref{fig:fig21} below illustrates in an example all essential OCRSFs of
$\GD_1$ compatible with $\Fb$ and $\tau$.\\

\begin{figure}[ht]
\begin{center}
\includegraphics[width=9.3cm]{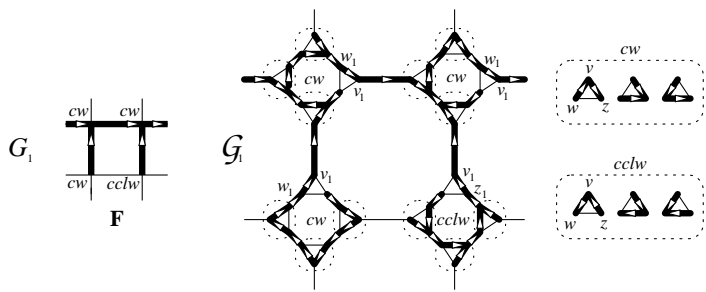}
\end{center}
\caption{Left: an OCRSF $\Fb$ of $G_1$ and an assignment of types to vertices
of $G_1$. Right: essential OCRSFs of $\GD_1$ compatible with $\Fb$ and $\tau$.}
\label{fig:fig21}
\end{figure} 

We now study properties of dual OCRSFs. Observe that
the dual graph $G_1^*$ is a subgraph of 
$\GD_1^*$, see also Figure \ref{fig:fig23} below. Since inner edges of the
decorations are always present in
essential OCRSFs of $\GD_1$, we omit their dual edges in our picture
of~$\GD_1^*$.

\begin{figure}[h]
\begin{center}
\includegraphics[width=6cm]{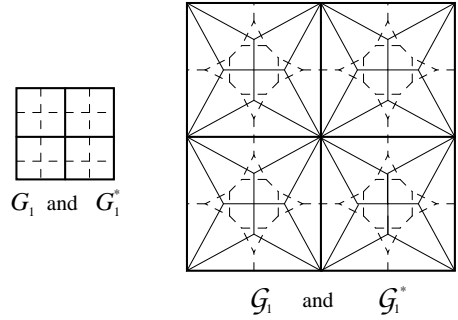}
\end{center}
\caption{Left: the graph $G_1$ (dotted lines) and it dual $G_1^*$ (full
lines). Right: the graph $\GD_1$ (dotted lines)
and its dual $\GD_1^*$ (full lines), without dual
edges of inner edges of the decorations.}
\label{fig:fig23}
\end{figure} 

Let $\Fb$ be an OCRSF of $G_1$ and $F$ be an essential OCRSF of 
$\GD_1$ compatible with $\Fb$ and $\tau$. Suppose for the moment that edges are unoriented, and 
denote by $F^*$ the dual CRSF of $F$ in 
$\GD_1^*$, and by $\Fb^*$ the dual CRSF of $\Fb$ in $G_1^*$. 
Then, the restriction of $F^*$ to dual long edges is $\Fb^*$, and since $G_1^*$
is a subgraph of $\GD_1^*$, the CRSF $\Fb^*$
is a subgraph of~$F^*$.

Moreover, by Lemma \ref{lem:lem12}, we know that
non-trivial cycles of $F$ in $\GD_1$ and $\Fb$ in $G_1$ are in bijection,
implying that 
non-trivial cycles of $F^*$ in $\GD_1^*$ and $\Fb^*$ in $G_1^*$ are also in
bijection.
As a consequence, the non-trivial cycles of $F^*$ in $\GD_1^*$ are exactly the non trivial cycles 
of $\Fb^*$ in $G_1^*$, and branches of $\Fb^*$ in $G_1^*$ are also branches of
$F^*$ in
$\GD_1^*$. Recalling that OCRSFs are obtained from a CRSF by orienting branches towards the non-trivial
cycles, and orienting each of the non-trivial cycles in one of the two possible
ways, we have shown
the following lemma describing oriented versions of $F^*$ and $\Fb^*$.

\begin{lem}\label{lem:lem13}
Let $\Fb$ be an OCRSF of $G_1$ and $F$ be an essential OCRSF of 
$\GD_1$ compatible with $\Fb$ and $\tau$. Then, if $F^*$ is a dual OCRSF of $F$, the restriction
$\Fb^*$ of $F^*$ to dual long edges is an OCRSF of $G_1^*$, and
oriented non-trivial cycles of $F^*$ in $\GD_1^*$ are exactly those of
$\Fb^*$
in $G_1^*$. Conversely, let $\Fb^*$ be a dual OCRSF of $\Fb$, then there is a
unique dual 
OCRSF $F^*$ of $F$ in $\GD_1^*$ such that the restriction of $F^*$ to dual long edges is $\Fb^*$. 
\end{lem}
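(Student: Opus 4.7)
My plan is to reduce everything to the unoriented structural facts established in the paragraph preceding the statement, namely: (i) the restriction $\Fb^*$ of $F^*$ to long dual edges is the unoriented CRSF of $G_1^*$ dual to $\Fb$; (ii) the non-trivial cycles of $F^*$ in $\GD_1^*$ lie inside $G_1^*$ and coincide with those of $\Fb^*$; and (iii) each branch of $\Fb^*$ in $G_1^*$ is contained in a branch of $F^*$ in $\GD_1^*$. Given these, both halves of the lemma become purely a matter of propagating orientations.

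For the direct direction, I would take $F^*$ oriented and define $\Fb^*$ as its restriction to long dual edges. The orientation of each non-trivial cycle of $F^*$ directly gives an orientation of the very same cycle inside $G_1^*$, and since every branch of $F^*$ is oriented toward a non-trivial cycle that already lies in $G_1^*$, the inherited orientation on the long edges making up the branches of $\Fb^*$ automatically points toward that non-trivial cycle. By the characterization of OCRSFs in Remark \ref{rem:OCRSF} (one outgoing edge per vertex, no trivial cycles), $\Fb^*$ is then an OCRSF of $G_1^*$, and its oriented non-trivial cycles are manifestly in bijection with those of $F^*$.

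For the converse, the unoriented $F^*$ is already uniquely determined as the dual CRSF of $F$, so only an orientation remains to be prescribed. I would first orient the non-trivial cycles of $F^*$ by copying the orientations of the corresponding cycles of $\Fb^*$, which is well defined thanks to (ii). Every remaining edge of $F^*$ lies on a branch attached to one of these cycles, and the tree structure forces a unique orientation of that edge toward the cycle. Existence is then clear; uniqueness follows because both the non-trivial cycle orientation and the branch orientation are forced. The only step that needs a genuine sanity check is the compatibility of the constructed orientation with that of $\Fb^*$ on long dual edges; but this reduces to the observation from (iii) that the long-edge part of a branch of $F^*$ is exactly the branch of $\Fb^*$ with the same root on the non-trivial cycle, so "orient toward the cycle" yields the same arrow in both graphs. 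I do not anticipate any real obstacle beyond this bookkeeping.
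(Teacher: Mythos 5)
Your proposal is correct and follows essentially the same route as the paper: the paper's proof is precisely the unoriented facts you cite (restriction of the dual CRSF to long dual edges, common non-trivial cycles, shared branches) followed by the observation that an OCRSF is recovered from its CRSF by orienting branches toward the cycles and choosing an orientation of each non-trivial cycle, so the oriented statement reduces to copying cycle orientations with branch orientations forced. Your extra bookkeeping on the compatibility of the forced branch orientations on long dual edges is exactly the implicit content of the paper's remark that branches of $\Fb^*$ are branches of $F^*$, so nothing is missing.
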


Figure \ref{fig:fig22} below illustrates a dual OCRSF $\Fb^*$ of Figure
\ref{fig:fig21}, and the unique dual OCRSF 
of $F$ of Figure \ref{fig:fig21}, whose restriction to dual long edges is
$\Fb^*$.\\
\vspace{1cm}

\begin{figure}[h]
\begin{center}
\includegraphics[width=6cm]{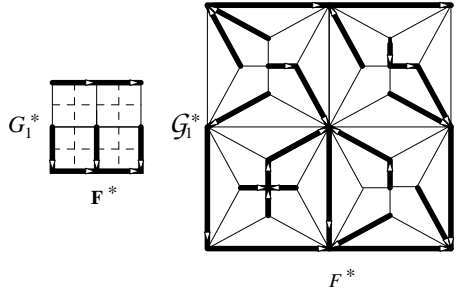}
\end{center}
\caption{Left: A dual OCRSF $\Fb^*$ of $\Fb$ of Figure \ref{fig:fig21}. Right:
the unique dual OCRSF $F^*$ of $F$ of Figure \ref{fig:fig21}
whose restriction to dual long edges is $\Fb^*$.}
\label{fig:fig22}
\end{figure}

\begin{defi}
Let $(\Fb,\Fb^*)$ be a pair of dual OCRSFs of $G_1$ and
$G_1^*$. Then, 
define $\F^{\tau,(\Fb,\Fb^*)}(\GD_1,\GD_1^*)$ to be the set of pairs $(F,F^*)$
of \emph{essential dual OCRSFs of $\GD_1$ and~$\GD_1^*$, compatible with
$(\Fb,\Fb^*)$ and $\tau$}, that is the set of pairs $(F,F^*)$,
such that $F$ is an essential OCRSF of $\GD_1$ compatible with $\Fb$ and $\tau$,
and $F^*$ is the unique dual OCRSF of $F$ whose restriction to dual long edges
is $\Fb^*$.
\end{defi}

As a consequence of Lemmas \ref{lem:lem12} and \ref{lem:lem13}, we have the
following.

\begin{cor}\label{cor:initial}
The set $\F^{\tau,(\Fb,\Fb^*)}(\GD_1,\GD_1^*)$ is well defined and non-empty. 
Moreover, the reference number in $\GD_1$ of every pair $(F,F^*)$ in this set is
equal to
the reference number in $G_1$ of $(\Fb,\Fb^*)$ in $G_1$, and
\begin{equation}\label{equ:NFF}
N_{\GD_1,\GD_1^*}(F,F^*)=N_{G_1,G_1^*}(\Fb,\Fb^*).
\end{equation}
\end{cor}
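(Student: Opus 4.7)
The plan is to verify the three claims of the corollary by straightforwardly packaging what Lemmas~\ref{lem:lem12} and~\ref{lem:lem13} already provide; very little work remains once those lemmas are in place.

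First I would establish non-emptiness by explicitly exhibiting a pair in $\F^{\tau,(\Fb,\Fb^*)}(\GD_1,\GD_1^*)$. Since $(\Fb,\Fb^*)$ is a pair of dual OCRSFs, in particular $\Fb$ is an OCRSF of $G_1$, so Condition~$1$ of Lemma~\ref{lem:lem12} holds. One then builds an essential OCRSF $F$ of $\GD_1$ compatible with $\Fb$ and $\tau$ by taking, in every decoration $\xb$, the prescribed inner edges and the prescribed single edge at the triangle $t_1$ of the root vertex, and choosing at each non-root triangle $t_i$ (for $i\neq 1$) any one of the three admissible $2$-edge configurations with the orientation induced by $\tau(\xb)$, as in Definition~\ref{def:restricted}. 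By Lemma~\ref{lem:lem12}, the resulting $F$ is an essential OCRSF compatible with $\Fb$ and $\tau$. Lemma~\ref{lem:lem13} then supplies a \emph{unique} dual OCRSF $F^*$ of $F$ in $\GD_1^*$ whose restriction to dual long edges equals $\Fb^*$. This proves both that the defining clause of $\F^{\tau,(\Fb,\Fb^*)}(\GD_1,\GD_1^*)$ makes sense (well-definedness) and that the set contains at least one element (non-emptiness).

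Next I would verify the reference number and signed-cycle identities for an arbitrary $(F,F^*)\in\F^{\tau,(\Fb,\Fb^*)}(\GD_1,\GD_1^*)$. The final assertion of Lemma~\ref{lem:lem12} gives a bijection between the oriented non-trivial cycles of $F$ in $\GD_1$ and those of $\Fb$ in $G_1$; since this bijection preserves the homology class of each cycle, we obtain $H_0(F)=H_0(\Fb)$ and $N_{\GD_1}(F)=N_{G_1}(\Fb)$. Lemma~\ref{lem:lem13} states that the oriented non-trivial cycles of $F^*$ in $\GD_1^*$ are \emph{exactly} those of $\Fb^*$ in $G_1^*$, so similarly $H_0(F^*)=H_0(\Fb^*)$ and $N_{\GD_1^*}(F^*)=N_{G_1^*}(\Fb^*)$. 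Because $F,F^*$ are dual (resp.\ $\Fb,\Fb^*$ are dual), the two primal and dual reference numbers coincide on each graph, hence the common reference number of the pair $(F,F^*)$ in $\GD_1,\GD_1^*$ equals that of $(\Fb,\Fb^*)$ in $G_1,G_1^*$. Summing the two signed-cycle identities yields
\begin{equation*}
N_{\GD_1,\GD_1^*}(F,F^*) = N_{\GD_1}(F)+N_{\GD_1^*}(F^*) = N_{G_1}(\Fb)+N_{G_1^*}(\Fb^*) = N_{G_1,G_1^*}(\Fb,\Fb^*),
\end{equation*}
which is precisely \eqref{equ:NFF}.

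There is essentially no obstacle: the entire content is already concentrated in Lemmas~\ref{lem:lem12} and~\ref{lem:lem13}. The only mildly delicate point is to keep the two bijections (one for $F\leftrightarrow\Fb$ in the primal, one for $F^*\leftrightarrow\Fb^*$ in the dual) compatible with the duality between $F$ and $F^*$; this is immediate from the fact that the non-trivial cycles of $F^*$ are exactly those of $\Fb^*$, so that parallelism of primal and dual non-trivial cycles on $\GD_1,\GD_1^*$ is inherited from $G_1,G_1^*$.
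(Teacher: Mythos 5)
Your proposal is correct and follows exactly the route the paper takes: the paper states Corollary~\ref{cor:initial} as an immediate consequence of Lemmas~\ref{lem:lem12} and~\ref{lem:lem13}, and your argument is just a careful unpacking of that — using the characterization of Lemma~\ref{lem:lem12} to exhibit an essential OCRSF $F$ compatible with $\Fb$ and $\tau$, Lemma~\ref{lem:lem13} for the unique dual $F^*$ restricting to $\Fb^*$, and the (homology-preserving) cycle bijections from both lemmas to transfer $H_0$ and $N$.
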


\subsubsection{Essential moves and reverse moves}\label{sec:induction_step}

Let $\Lb$ be a subset of oriented edges of $G_1$ also
considered as a subset of long edges of $\GD_1$, satisfying $(\ast)$ of
Section \ref{sec:OCRSFdecograph} (that is every decoration of $\GD_1$ has at
least one root vertex of $\Lb$). For the whole of this section, we let
$\tau\in\{cw,cclw\}^{V(G_1)}$ be an assignment of types to vertices of $G_1$.
Suppose that there exists an essential OCRSF $F$ of $\GD_1$ compatible with
$\Lb$ and $\tau$, and let $F^*$ be a dual OCRSF of $F$.

In this section, we characterize licit primal/dual moves performed on
$(F,F^*)$, yielding a pair of \emph{essential} dual
OCRSFs of $\GD_1$ and $\GD_1^*$, such that the first component either
has an additional long edge (\emph{essential move}) or a long
edge of $\Lb$ removed (\emph{essential reverse move}).

\textbf{Notation}. For the remainder of the paper, we need to introduce a
specific notation for oriented edges of $G_1$. Fix an arbitrary orientation of
edges of $G_1$, and denote by $\eb$ a generic unoriented edge, by $+\eb$ the
oriented edge compatible with the fixed orientation, and by $-\eb$ the reverse
oriented edge.

\underline{\textbf{Essential moves}}

Assume that $\Lb\neq E(G_1)$, and let $\eb$ be an unoriented edge of
$E(G_1)\setminus \Lb$. We now characterize licit
primal/dual moves performed on $(F,F^*)$, which yield a pair of essential dual 
OCRSFs of $\GD_1$ and $\GD_1^*$, such that the first component is an essential
OCRSF compatible with $\Lb\cup\{\pm\eb\}$ and $\tau$, i.e. belongs to 
$\F^{\tau,\Lb\cup\{\pm\eb\}}(\GD_1)$.

Suppose first that we want to add the edge $+\eb$ to $F$, 
corresponding to a long edge $(v_k(\xb),v_\l(\yb))$ of 
$\GD_1$, for some $k\in\{1,\cdots,d_{\xb}\},\,\l\in\{1,\cdots,d_{\yb}\}$, see
Figure~\ref{fig:fig18}. Assume for the moment 
that $\xb$ has type $\tau(\xb)=cw$, and let us omit the arguments $\xb$ and
$\yb$, recalling that the index $k$ refers to 
$\xb$, and the index $\l$ to $\yb$. 

Since $(v_k,v_\l)$ is absent in $F$, its dual edge $(x^*,y^*)$ is present in~$F^*$. Let us first handle Case A, 
where $x^*$ is to the right of $(v_k,v_\l)$, see Figure \ref{fig:fig18} (first
two columns). Since $F$ is an essential 
OCRSF it has one of the three possible $2$-edge configurations at the triangle
of the vertex $v_k$, with the orientation 
induced by the type $\tau(\xb)=cw$. Since we want the resulting configuration to
be an essential OCRSF compatible 
with $\tau$, this implies that after the move, the triangle $t_k$ must contain
either the edge $(w_k,v_k)$ or 
the edge $(w_k,z_k)$. Then, see Figure \ref{fig:fig18} (first column), our only choice is to remove the edge $(v_k,z_k)$ in 
Cases AI and AII, and the edge $(v_k,w_k)$ in Case AIII. This constraint fully determines the dual edges which are removed/added, 
and we only allow the move when it is \emph{licit}, i.e. when edges involved satisfy Condition \eqref{Condition}, that is in 
Cases AI and AII, see Figure \ref{fig:fig18} (second column). In Case $B$, when
$x^*$ is to the left of $(v_k,v_\l)$, a similar argument holds and yields the
last two columns of Figure \ref{fig:fig18}.\\

\begin{figure}[ht]
\begin{center}
\includegraphics[width=13cm]{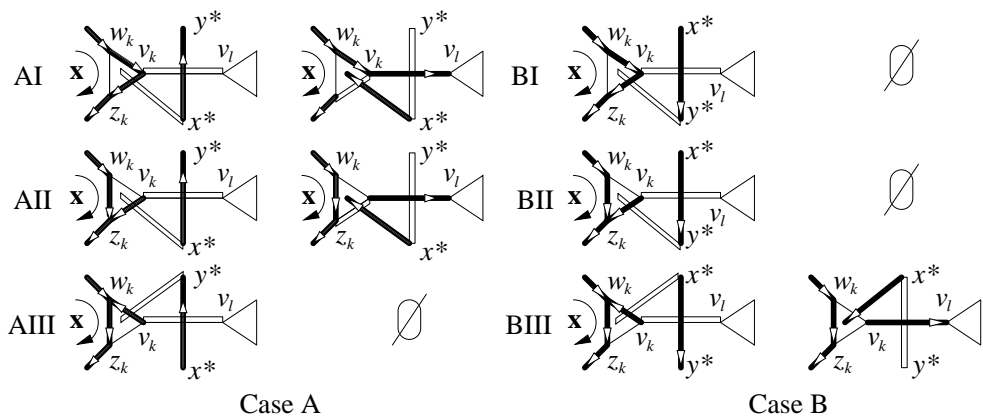}
\end{center}
\caption{Adding the long edge $(v_k,v_\l)$. First two columns: $x^*$ is to the right of $(v_k,v_\l)$. Last two columns: $x^*$ is to the left of $(v_k,v_\l)$.}
\label{fig:fig18}
\end{figure} 

By symmetry, when $\xb$ has type $\tau(\xb)=cclw$, if $x^*$ is to the right
(resp. to the left) of $(v_k,v_\l)$, we get
Case B (resp. Case A) with `$w$'s and `$z$'s exchanged. Using
symmetries again, the case where we add the edge $(v_\l,v_k)$ to the OCRSF $F$
is also
handled by Cases A and B. It is important to note that in each case at most one
primal/dual move is allowed.

\begin{defi}
When the above move is allowed, it is called an \emph{essential move}, and
$(F,F^*)_{\{\pm\eb,\cdot\}}=(F_{\{\pm\eb,\cdot\}},F_{\{\cdot,\pm\eb^*\}}^*)$
denotes the resulting pair of OCRSFs. When it is not allowed, we set by
convention, $(F,F^*)_{\{\pm\eb,\cdot\}}=\emptyset$.
\end{defi}

Then, Proposition \ref{prop:replacing2edges} and the construction of essential
moves immediately yields the following.

\begin{prop}\label{prop:essential}
Let $F$ be an essential OCRSF of $\GD_1$ compatible with $\Lb$ and $\tau$, and
$F^*$ be a dual OCRSF of $F$. Then, the pair $(F,F^*)_{\{\pm\eb,\cdot\}}$
consists of:
\begin{itemize}
\item the empty set when the essential move is not allowed,
\item when the essential move is allowed, a pair of dual essential OCRSFs of
$\GD_1$ and $\GD_1^*$, such that 
$F$ is an essential OCRSF compatible with $\Lb\cup\{\pm\eb\}$ and $\tau$, and 
\begin{enumerate}
 \item either the pair $(F,F^*)_{\{\pm\eb,\cdot\}}$ has the same reference
number as
$(F,F^*)$, and
$$
N((F,F^*)_{\{\pm\eb,\cdot\}})=N(F,F^*),
$$
\item or the pairs $(F,F^*)_{\{\pm\eb,\cdot\}}$ and $(F,F^*)$ have different
reference numbers, and
$$
N((F,F^*)_{\{\pm\eb,\cdot\}})=N(F,F^*)=0.
$$
\end{enumerate}
\end{itemize}
Moreover, the orientation of edges not involved in the move remains unchanged.
\end{prop}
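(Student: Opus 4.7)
The plan is to decompose the proof into two independent verifications: first that the move produces a pair of dual OCRSFs with the prescribed homological behavior, and second that this pair is essential, i.e.\ the restriction to every decoration remains an essential configuration compatible with $\Lb\cup\{\pm\eb\}$ and $\tau$. The first verification follows from the general theory of Section~\ref{sec:general}, while the second is a case check relying on the very definition of essential moves.

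The first step is the case where the move is not allowed, in which $(F,F^*)_{\{\pm\eb,\cdot\}}=\emptyset$ by definition, and there is nothing to prove. So I would assume the move is allowed, which means that the primal and dual edges chosen via Cases AI, AII (or their symmetric counterparts) satisfy Condition~\eqref{Condition} of Definition~\ref{PrimalDualMove}; in other words the essential move \emph{is} a licit primal/dual edge move in the sense of Section~\ref{sec:replacing2edges}. I would then invoke Proposition~\ref{prop:replacing2edges} directly: it guarantees that $(F_{\{\pm\eb,\cdot\}},F^*_{\{\cdot,\pm\eb^*\}})$ is a pair of dual OCRSFs of $\GD_1$ and $\GD_1^*$, and that the dichotomy stated in Points~1 and~2 of that proposition holds verbatim, giving the reference-number and signed-cycle assertions claimed here.

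The second step is to check that the resulting configuration is \emph{essential} and compatible with $\Lb\cup\{\pm\eb\}$ and $\tau$. By construction the set of long edges of $F_{\{\pm\eb,\cdot\}}$ is exactly $\Lb\cup\{\pm\eb\}$, so condition $(\ast)$ is automatic. The main task is to verify that the restriction to every decoration is still an essential configuration of the prescribed type, and that no new trivial cycle made of long and short edges is created. For this I would inspect Figure~\ref{fig:fig18} case by case: in Cases AI, AII the triangle $t_k(\xb)$ transitions between the three admissible $2$-edge configurations of Definition~\ref{def:restricted} (for non-root vertices) and the two admissible $1$-edge configurations (for root vertices), with the orientation dictated by $\tau(\xb)=cw$; by symmetry the same holds when $\tau(\xb)=cclw$ and when one instead adds $-\eb$, as well as at the triangle $t_\l(\yb)$ of the other endpoint. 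The extra condition that at least one root triangle of each decoration carries the configuration $(w_i,v_i)$ (resp.\ $(z_i,v_i)$) is easily maintained because the move preserves such a configuration at any root triangle not involved in the move, and one verifies directly that when the move is allowed the new root triangle either keeps or produces such a configuration. Absence of a new trivial cycle involving the added long edge is ensured by Proposition~\ref{prop:replacing2edges}, which already guarantees that $F_{\{\pm\eb,\cdot\}}$ is an OCRSF.

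The main obstacle will be the patient verification of the case analysis in the second step: the definition of essential moves in Cases AI, AII, BI, BII was tailored precisely so that the $2$-edge configurations at the triangles of non-root vertices and the $1$-edge configurations at the triangles of root vertices remain in the admissible list of Definition~\ref{def:restricted} after the swap, and so that the edges disallowed by Cases AIII and BIII are exactly those that would have violated licitness or the essential property. Once this table-look-up is carried out for the canonical case $\tau(\xb)=cw$ with $+\eb$, all remaining situations reduce to it by the symmetries $cw\leftrightarrow cclw$ and $+\eb\leftrightarrow -\eb$, which exchange the roles of `$w$' and `$z$'. The final sentence—that the orientation of edges not involved in the move is unchanged—is immediate from the definition of the move, which only deletes and inserts the edges explicitly named.
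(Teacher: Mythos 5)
Your proposal is correct and follows the same route as the paper, which states that the proposition ``immediately'' follows from Proposition~\ref{prop:replacing2edges} together with the construction of essential moves: licitness gives the dual-OCRSF structure and the reference-number/signed-cycle dichotomy, while the case analysis behind Figure~\ref{fig:fig18} guarantees the resulting configuration is essential and compatible with $\Lb\cup\{\pm\eb\}$ and $\tau$. Your two-step decomposition simply spells out what the paper leaves implicit, so there is nothing to add.
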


\underline{\textbf{Essential reverse moves}}

Assume now that $\Lb\neq\emptyset$, and let $\epsilon\eb=(\xb,\yb)$ be an
oriented edge of $\Lb$, where $\epsilon\in\{-,+\}$.
We now characterize licit primal/dual moves performed on $(F,F^*)$, which yield
a pair of essential OCRSFs of $\GD_1$ and $\GD_1^*$, such that the first
component is an essential OCRSF compatible with $\Lb\setminus\{\epsilon\eb\}$
and $\tau$, i.e. belongs to $\F^{\tau,\Lb\setminus\{\epsilon\eb\}}(\GD_1)$.

The edge $\epsilon\eb$ corresponds to a long edge $(v_k(\xb),v_\l(\yb))$ of
$\GD_1$,
see Figure \ref{fig:fig18a}. Assume for the moment
that $\xb$ is of type $\tau(\xb)=cw$, and let us omit the arguments $\xb$ and
$\yb$ onwards. Since $F$ 
is an essential OCRSF, it contains either the edge $(w_k,v_k)$ or the edge
$(w_k,z_k)$ at the 
triangle $t_k$. Note that the edge $z_k v_k$ is always absent, so
that its dual 
edge is present in $F^*$. Let us first handle Case C, where this dual edge is oriented towards 
the triangle $t_k$.

Since we want the resulting configuration to be an essential OCRSF compatible
with $\tau$, this 
implies that after the move, the triangle $t_k$ must contain one of the three
possible 
$2$-edge configurations, with the orientation induced by $\tau(\xb)=cw$. Then,
see Figure 
\ref{fig:fig18a}, when the triangle $t_k$ contains the edge $(w_k,v_k)$, our
only choice 
is to add the edge $(v_k,z_k)$ (Case CI). When it contains the edge $(w_k,z_k)$, we can either 
add the edge $(v_k,z_k)$ (Case CII) or the edge $(v_k,w_k)$ (Case CIII). We only allow this move 
when it is licit, that is in Cases CI and CII.

\begin{figure}[ht]
\begin{center}
\includegraphics[width=13cm]{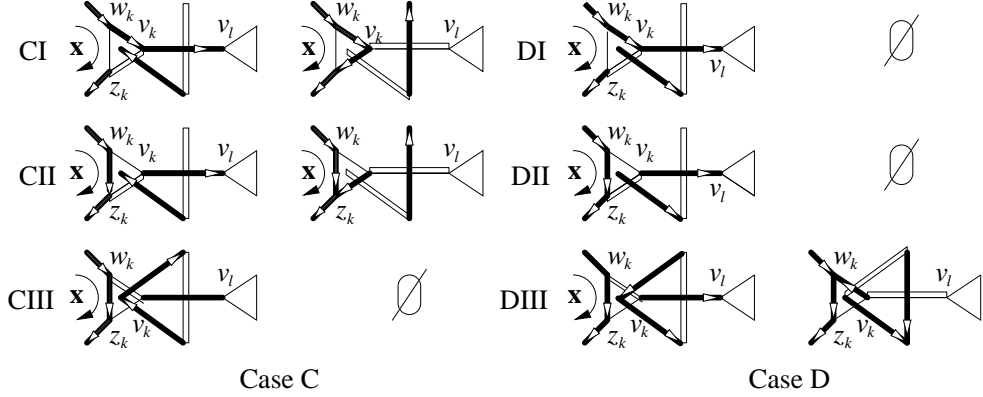}
\end{center}
\caption{Removing the long edge $(v_k,v_\l)$ when the dual of the edge
$z_k v_k$ is oriented towards (away from) the triangle $t_k$ in Case C
(Case D).}
\label{fig:fig18a}
\end{figure}

In Case D, where the dual of the edge $z_k v_k$ is oriented away from the
triangle $t_k$, a similar argument holds, and yields the last two columns of
Figure \ref{fig:fig18a}, the move is only allowed in Case DIII. It is important
to note that in each case, at most one move is allowed. By symmetry, if $\xb$
has type \emph{cclw}, Cases C and D hold, with `$w$'s and `$z$'s exchanged.

\begin{defi}
When the above move is allowed, we refer to it as an \emph{essential
reverse move}, and denote by $(F,F^*)_{\{\cdot,\epsilon\eb\}}=
(F_{\{\cdot,\epsilon\eb\}},F_{\{\epsilon\eb^*,\cdot\}})$ the
resulting
pair of OCRSFs. When the move is not 
allowed, we set by convention $(F,F^*)_{\{\cdot,\epsilon\eb\}}=\emptyset$.
\end{defi}

The analogous of Proposition \ref{prop:essential} also holds for essential
reverse moves, and we do not write it out explicitly. The next lemma relates
essential moves and reverse ones. 

\begin{lem}\label{lem:essentialreverse}
Let $(F,F^*)$ be a pair of dual essential OCRSFs of $\GD_1$ and $\GD_1^*$, such
that $F$ is compatible with $\Lb$ and $\tau$.
\begin{enumerate}
\item Let $\eb$ be an unoriented edge of $E(G_1)\setminus\Lb$, and $\epsilon\in
\{-,+\}$. Then, if an essential move can be performed on $(F,F)^*$, yielding a
pair $(F,F^*)_{\{\epsilon\eb,\cdot\}}$, the essential reverse 
primal/dual move, which removes the long edge $\epsilon\eb$ from
$(F,F^*)_{\{\epsilon\eb,\cdot\}}$, can also 
be performed and:
\begin{equation*}
((F,F^*)_{\{\epsilon\eb,\cdot\}})_{\{\cdot,\epsilon\eb\}}=(F,F^*). 
\end{equation*}
\item Let $\epsilon\eb$ be an oriented edge of $\Lb$, where
$\epsilon\in\{-,+\}$. Then, if an essential reverse primal/dual move can be
performed on $(F,F^*)$, yielding a pair $(F,F^*)_{\{\cdot,\epsilon\eb\}}$, the
essential 
primal/dual move, which adds the long edge $\epsilon\eb$ to
$(F,F^*)_{\{\cdot,\epsilon\eb\}}$, can also 
performed, and:
\begin{equation*}
 ((F,F^*)_{\{\cdot,\epsilon\eb\}})_{\{\epsilon\eb,\cdot\}}=(F,F^*).
\end{equation*}
\end{enumerate}
\end{lem}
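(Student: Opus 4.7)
The proof is a direct case-by-case verification, based entirely on the diagrams of Figures \ref{fig:fig18} (essential moves) and \ref{fig:fig18a} (essential reverse moves). My plan is to show that essential moves and essential reverse moves modify only the triangle $t_k$, the long edge $\epsilon\eb=(v_k,v_\ell)$, and the two dual edges incident to the boundary of $t_k$ which are involved in the move; hence the verification is entirely local and reduces to checking, in each case, that the data of a licit move matches (after the move) the data required for the inverse move.

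For Part $1$, assume without loss of generality that $\tau(\xb)=cw$; the $cclw$ case is symmetric under exchanging `$w$'s and `$z$'s. According to the classification of essential moves, only Cases AI, AII, and the corresponding Case B (i.e.\ when $x^*$ lies to the other side of $(v_k,v_\ell)$) are licit. In each of these, I would read off from Figure \ref{fig:fig18} the configuration of $t_k$ and of the nearby dual edges after the move. In Case AI the triangle $t_k$ ends up containing the single edge $(w_k,v_k)$ with the newly added long edge outgoing at $v_k$, and the dual edge of $z_k v_k$ is now oriented towards $t_k$; this is precisely the starting data of reverse-move Case CI. In Case AII the triangle contains $(w_k,z_k)$, and the dual of $z_k v_k$ is again oriented towards $t_k$; this is the data of Case CII. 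The two Case B instances are handled in the same way and land in Cases CI and CII as well. In each instance, one then checks that performing the designated reverse move re-adds the edge that was removed and removes the long edge that was added, i.e.\ exchanges the same two primal edges and the same two dual edges; this shows $((F,F^*)_{\{\epsilon\eb,\cdot\}})_{\{\cdot,\epsilon\eb\}}=(F,F^*)$.

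For Part $2$, the strategy is identical, run in the other direction. The licit reverse moves, again under $\tau(\xb)=cw$, are Cases CI, CII and DIII. From Figure~\ref{fig:fig18a}, I read off the configuration of $t_k$ and the dual edges after the reverse move, and observe that Case CI produces the initial configuration of an essential Case AI move, Case CII that of Case AII, and Case DIII that of the corresponding Case B essential move (with $x^*$ on the opposite side). Applying the essential move then restores the original configuration, giving $((F,F^*)_{\{\cdot,\epsilon\eb\}})_{\{\epsilon\eb,\cdot\}}=(F,F^*)$.

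The only subtlety, and what I expect to be the main bookkeeping obstacle, is to make sure that the global conditions that make a primal/dual move \emph{licit} (condition \eqref{Condition}, namely $x_1=x_2$ and $x_1^*=x_2^*$) are inherited by the candidate inverse move. This follows because the vertices at which the two short edges share a common tail, both in the primal and in the dual, are the same before and after the move: the move only swaps which of the two edges is present in the configuration, while the pair of edges (and hence their common tail vertices) is unchanged. Once this is recorded, all that remains is a routine tabulation across the four pairs of cases, which can be compressed into a single table or read straight off Figures \ref{fig:fig18} and \ref{fig:fig18a}.
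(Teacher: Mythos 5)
Your overall strategy coincides with the paper's: the proof is a direct verification from Figures \ref{fig:fig18} and \ref{fig:fig18a} that the allowed moves are mutually inverse, and your key observation -- that a move only swaps a fixed pair of primal edges and the corresponding pair of oriented dual edges, so that the licitness condition \eqref{Condition} for the candidate inverse involves the same tail vertices -- is the right mechanism. The problem is that your case matching is wrong precisely where the content of the lemma lies. The licit essential moves are AI, AII and BIII (in Case B exactly \emph{one} subcase is licit, not two), the licit reverse moves are CI, CII and DIII, and the correct inverse pairs are AI/CI, AII/CII and BIII/DIII. In Part 1 you assert that ``the two Case B instances \dots land in Cases CI and CII''; this is false: the unique licit Case B move removes the edge $(v_k,w_k)$ rather than $(v_k,z_k)$, the dual edge of $z_kv_k$ is then oriented away from the triangle $t_k$, which places you in Case D, and the inverse move is DIII.

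This slip also makes your write-up internally inconsistent: in Part 2 you correctly pair CI with AI, CII with AII and DIII with the Case B essential move, and since inversion must be a bijection between the three licit moves on each side, the Case B move cannot in addition invert to CI or CII as claimed in Part 1. Because the entire proof of the lemma consists of exhibiting this matching, the error is a genuine one, although it is repaired simply by reading the licit subcases off the two figures, as the paper does, and recording the three pairs AI/CI, AII/CII, BIII/DIII.
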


\begin{proof}
From Figures \ref{fig:fig18} and \ref{fig:fig18a}, we immediately check that the moves 
AI/CI, AII/CII, BIII/DIII are inverse of eachother. 
\end{proof}

\subsubsection{Explicit construction}\label{sec:conclusion}

Let $(\Fb,\Fb^*)$ be a pair of dual OCRSFs of $G_1$ and $G_1^*$, and let
$\eb_1,\cdots,\eb_m,$ be an arbitrary labeling of 
unoriented edges of $E(G_1)\setminus\Fb$. We let 
$J_k=\{(i_1,\cdots,i_k)\in\{1,\cdots,m\}^k\,|\,1\leq i_1<\cdots<i_k\leq m\}$,
with the convention that $J_k=\emptyset$, when $k=0$. Let
$\tau\in\{cw,cclw\}^{V(G_1)}$ be an assignment of types to vertices of~$G_1$.

We now define by induction, for every $k\in\{0,\cdots,m\}$ and
every $(i_1,\cdots,i_k)\in~J_k$, the set, 
$$\F^{\tau,(\Fb,\Fb^*),\{\eb_{i_1},\cdots,\eb_{i_k};\cdot\}}(\GD_1,\GD_1^*).$$

\underline{Initial step}:
$
\F^{\tau,(\Fb,\Fb'),\emptyset}(\GD_1,\GD_1^*)=\F^{\tau,(\Fb,\Fb^*)}(\GD_1,\GD_1^*),
$
defined in Section \ref{sec:initial}.

\underline{Induction step}: for $j=1,\cdots,k$,
\begin{align*}
\F^{\tau,(\Fb,\Fb^*),\{\eb_{i_1},\cdots,\eb_{i_j};\cdot\}}(\GD_1,\GD_1^*)=\quad\quad\quad\quad\quad&\\
\bigcup_{(F,F^*)\in \F^{\tau,(\Fb,\Fb^*),\{\eb_{i_1},\cdots,\eb_{i_{j-1}};\cdot\}}(\GD_1,\GD_1^*)}&
\{(F,F^*)_{\{+\eb_{i_j},\cdot\}}\cup (F,F^*)_{\{-\eb_{i_j},\cdot\}}\},
\end{align*}
where $(F,F^*)_{\{\pm\eb_{i_j},\cdot\}}$ is defined in Section
\ref{sec:induction_step}. Then, we have:

\begin{prop}\label{lem:lem14}
The set
$\F^{\tau,(\Fb,\Fb^*),\{\eb_{i_1},\cdots,\eb_{i_k};\cdot\}}(\GD_1,\GD_1^*)$ is
non empty, independent
of the order in which edges are added, and consists of distinct pairs of
essential dual OCRSFs of $\GD_1$ and $\GD_1^*$, compatible with $\tau$.
Moreover, for every pair $(F,F^*)$ in this set,
\begin{enumerate}
 \item either the reference number of $(F,F^*)$ in $\GD_1$ is equal to the
reference number of
$(\Fb,\Fb)^*$ in $G_1$, and:
$$
N_{\GD_1,\GD_1^*}(F,F^*)=N_{G_1,G_1^*}(\Fb,\Fb^*),
$$
\item or the reference numbers differ, and 
$$
N_{\GD_1,\GD_1^*}(F,F^*)=N_{G_1,G_1^*}(\Fb,\Fb^*)=0.
$$
\end{enumerate}
\end{prop}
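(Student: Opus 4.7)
The plan is to induct on $k \geq 0$. The base case $k = 0$ is precisely Corollary \ref{cor:initial}: by construction $\F^{\tau,(\Fb,\Fb^*),\emptyset}(\GD_1,\GD_1^*) = \F^{\tau,(\Fb,\Fb^*)}(\GD_1,\GD_1^*)$, which is non-empty, consists of distinct pairs of essential dual OCRSFs compatible with $\tau$, and every pair has the reference number of $(\Fb,\Fb^*)$ with $N_{\GD_1,\GD_1^*} = N_{G_1,G_1^*}(\Fb,\Fb^*)$, so the dichotomy holds trivially in its first alternative.

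For the inductive step, assume the claim at level $k-1$ and fix a pair $(F,F^*)$ in $\F^{\tau,(\Fb,\Fb^*),\{\eb_{i_1},\ldots,\eb_{i_{k-1}};\cdot\}}(\GD_1,\GD_1^*)$; by hypothesis $F$ is an essential OCRSF compatible with some set of oriented long edges $\Lb$ and with $\tau$. Proposition \ref{prop:essential} yields that every non-empty component of $(F,F^*)_{\{+\eb_{i_k},\cdot\}} \cup (F,F^*)_{\{-\eb_{i_k},\cdot\}}$ is a pair of dual essential OCRSFs of $\GD_1$ and $\GD_1^*$ compatible with $\tau$ and with long edge set $\Lb \cup \{\pm\eb_{i_k}\}$. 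Combining the dichotomy in Proposition \ref{prop:essential} with the inductive hypothesis shows that the invariant ``the current pair has the reference number of $(\Fb,\Fb^*)$ with matching signed cycle number, or the reference numbers differ and both signed cycle numbers vanish'' is preserved at each step, establishing the last clause. For distinctness within a single branch, $(F,F^*)_{\{+\eb_{i_k},\cdot\}}$ and $(F,F^*)_{\{-\eb_{i_k},\cdot\}}$ differ in the orientation of the long edge $\eb_{i_k}$; across different branches, Lemma \ref{lem:essentialreverse} produces an essential reverse move inverting the essential move, so two distinct predecessors cannot collapse to the same successor.

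The two subtler points are non-emptiness and order-independence. For non-emptiness, inspection of Figure \ref{fig:fig18} together with the essential-configuration structure of Definition \ref{def:restricted} shows that, given any essential configuration at the triangle of a potential source vertex for $\eb_{i_k}$ and any orientation of its dual edge in $F^*$, at least one of $+\eb_{i_k},\,-\eb_{i_k}$ yields a licit primal/dual move, so some successor exists at each level. For order-independence, it suffices to prove that essential moves at two distinct edges $\eb_a \neq \eb_b$ commute. Each such move only modifies short edges of the single triangle attached to its source long endpoint together with one dual short edge; when $\eb_a \neq \eb_b$ the triangles involved are disjoint, even if $\eb_a$ and $\eb_b$ share a vertex of $G_1$ (in which case they correspond to different triangles $t_k,\,t_\ell$ of the same decoration). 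Hence the two moves act on disjoint collections of short edges and dual short edges, and commute. I expect the main obstacle to be the bookkeeping of this commutativity: specifically, checking that the subcase selected from Figure \ref{fig:fig18} when performing the essential move at $\eb_{i_k}$ is not affected by the preceding moves at $\eb_{i_1},\ldots,\eb_{i_{k-1}}$, which amounts to verifying that earlier moves leave unchanged both the short-edge configuration at the triangle of the source endpoint of $\eb_{i_k}$ and the orientation of the relevant dual edge.
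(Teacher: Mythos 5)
Your overall architecture is the same as the paper's: induction on $k$, base case from Corollary \ref{cor:initial}, inductive step from Proposition \ref{prop:essential}, order-independence from the fact that an essential move only touches the triangle at its source vertex and the corresponding dual edges, and the reference-number/signed-cycle dichotomy propagated exactly as in the paper (which, after reducing via Remark \ref{rem:correspondence} to oriented sets $\{\epsilon_{i_1}\eb_{i_1},\cdots,\epsilon_{i_k}\eb_{i_k}\}$, invokes the same ingredients). Your distinctness argument via Lemma \ref{lem:essentialreverse} and the uniqueness of the reverse move is fine, and in fact more explicit than the paper's.

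The genuine gap is in your non-emptiness step. It is not true that for every pair $(F,F^*)$ at level $k-1$ at least one of $+\eb_{i_k}$, $-\eb_{i_k}$ can be added: whether the move adding $(v_k(\xb),v_\l(\yb))$ is licit depends on the $2$-edge configuration at the triangle of the source vertex \emph{together with} the orientation of the dual of $\eb_{i_k}$ in $F^*$, and that orientation is not free — it is inherited from $\Fb^*$ and left unchanged by all earlier moves. Concretely, take $\tau(\xb)=\tau(\yb)=cw$ with the dual vertex $x^*$ to the right of $(v_k,v_\l)$. Then adding $+\eb_{i_k}$ is licit only in Cases AI, AII of Figure \ref{fig:fig18}, i.e.\ only when the outgoing short edge at $v_k$ is $(v_k,z_k)$; while adding $-\eb_{i_k}$ places you in Case B at the triangle $t_\l$, where only Case BIII is licit, i.e.\ the outgoing short edge at $v_\l$ must be $(v_\l,w_\l)$. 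If $F$ carries the third configuration of Definition \ref{def:restricted} at $t_k$ and the first at $t_\l$, neither orientation can be added, and $(F,F^*)_{\{+\eb_{i_k},\cdot\}}\cup(F,F^*)_{\{-\eb_{i_k},\cdot\}}=\emptyset$ for that particular pair. Non-emptiness must therefore be argued at the level of the whole family, not pair by pair: since $\eb_{i_k}$ has not yet been added, $v_k$ and $v_\l$ are non-root vertices, and because each earlier move modifies only the triangle at its own source vertex, the level-$(k-1)$ set still contains (for any fixed behaviour elsewhere) pairs realizing all three $2$-edge configurations at $t_k$ and at $t_\l$; for the fixed orientation of the dual edge, at least one of these configurations renders the move licit, which is what the paper's (terse) argument uses — and which also yields the stronger statement that every oriented branch is non-empty. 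Your own locality observation supplies the ingredients for this fix, but the claim you actually invoke is false as stated.
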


\begin{rem}\label{rem:correspondence}
Note that the set
$\F^{\tau,(\Fb,\Fb^*),\{\eb_{i_1},\cdots,\eb_{i_k};\cdot\}}(\GD_1,\GD_1^*)$ can
be rewritten as:
\begin{equation*}
\bigcup_{(\epsilon_{i_1},\cdots,\epsilon_{i_k})\in\{-,+\}^k}
\F^{\tau,(\Fb,\Fb^*),\{\epsilon_{i_1}\eb_{i_1},\cdots,\epsilon_{i_k}\eb_{i_k}
;\cdot\}}(\GD_1, \GD_1^*),
\end{equation*}
where
$\F^{\tau,(\Fb,\Fb^*),\{\epsilon_{i_1}\eb_{i_1},\cdots,\epsilon_{i_k}\eb_{i_k}
;\cdot\}}(\GD_1, \GD_1^*)$ is defined by induction as follows. The
initial set is $\F^{\tau,(\Fb,\Fb^*)}(\GD_1,\GD_1^*)$, and for every
$j\in\{1,\cdots,k\}$,
\begin{align*}
\F^{\tau,(\Fb,\Fb^*),\{\epsilon_{i_1}\eb_{i_1},\cdots,\epsilon_{i_j}\eb_{i_j}
;\cdot\}}&(\GD_1, \GD_1^*)\\
&=\bigcup_{(F,F^*)\in\F^{\tau,(\Fb,\Fb^*),\{\epsilon_{i_1}\eb_{i_1},\cdots,
\epsilon_{i_{j-1}}\eb_{i_{j-1}}
;\cdot\}}(\GD_1, \GD_1^*)} \{(F,F^*)_{\{\epsilon_{i_j}\eb_{i_j},\cdot\}}\}.
\end{align*}
\end{rem}

\begin{proof}
As a consequence of Remark \ref{rem:correspondence}, it suffices to show that
for every $(\epsilon_{i_1},\cdots,\epsilon_{i_k})\in\{-,+\}^k$, the set
$\F^{\tau,(\Fb,\Fb^*),\{\epsilon_{i_1}\eb_{i_1},\cdots,\epsilon_{i_k}\eb_{i_k}
;\cdot\}}(\GD_1, \GD_1^*)$ satisfies Proposition \ref{lem:lem14}. Let us first
prove that the initial set
$\F^{\tau,(\Fb,\Fb^*)}(\GD_1,\GD_1^*)$
satisfies all statements. By Lemma~\ref{lem:lem12} and~\ref{lem:lem13}, it is
clearly non empty, and consists of pairs of essential dual OCRSFs of $\GD_1$
and $\GD_1^*$, compatible with $\tau$; and by Corollary \ref{cor:initial},
it satisfies Point $1$.

Returning to the definition of essential moves and to the characterization of
OCRSFs of $\F^{\tau,(\Fb,\Fb^*)}(\GD_1,\GD_1^*)$, we deduce that 
the set
$\F^{\tau,(\Fb,\Fb^*),\{\epsilon_{i_1}\eb_{i_1},\cdots,\epsilon_{i_k}\eb_{i_k}
;\cdot\}}(\GD_1, \GD_1^*)$ is also non-empty. By
Proposition \ref{prop:essential}, we know that it consists of pairs of
essential dual OCRSFs of $\GD_1$ and $\GD_1^*$ compatible with~$\tau$.
Moreover, since the orientation of edges not
involved in the move remains unchanged, we deduce that the induction is in fact
independent of the order in which edges are added. Proposition
\ref{prop:essential} also implies that either Point $1.$ or $2.$ is satisfied.
\end{proof}

\begin{defi}
\begin{align*}
\S^{\tau,(\Fb,\Fb^*)}(\GD_1,\GD_1^*)&=\bigcup_{k=0}^m \bigcup_{
\{\{\eb_{i_1},\cdots,\eb_{i_k}\}\,:\,
(i_1,\cdots,i_k)\in J_k\}
} \F^{\tau,(\Fb,\Fb^*),\{\eb_{i_1},\cdots,\eb_{i_k};\cdot\}}(\GD_1,\GD_1^*)\\
\S^{(\Fb,\Fb^*)}(\GD_1,\GD_1^*)&=\bigcup_{\tau\in\{cw,ccwl\}^{V(G_1)}}
\S^{\tau,(\Fb,\Fb^*)}(\GD_1,\GD_1^*).
\end{align*}
\end{defi}

The next theorem proves that we exactly obtain all pairs of dual CRSFs of
$\GD_1$ and $\GD_1^*$.

\begin{thm}\label{thm:correspondence}
$\,$
\begin{enumerate}
\item $\displaystyle\bigcup_{\scriptstyle{
(\Fb,\Fb^*)\in\F(G_1,G_1^*)}}\S^{(\Fb,\Fb^*)}(\GD_1,\GD_1^*)
=\F^0(\GD_1,\GD_1^*)$.
\item When $(\Fb_1,\Fb_1^*)\neq (\Fb_2,\Fb_2^*)$, then $\S^{(\Fb_1,\Fb_1^*)}(\GD_1,\GD_1^*)
\cap\S^{(\Fb_2,\Fb_2^*)}(\GD_1,\GD_1^*)=
\emptyset$.
\end{enumerate}
\end{thm}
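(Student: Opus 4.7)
The first inclusion in Part~$1$, $\bigcup_{(\Fb,\Fb^*)}\S^{(\Fb,\Fb^*)}(\GD_1,\GD_1^*)\subseteq \F^0(\GD_1,\GD_1^*)$, will be immediate from Proposition~\ref{lem:lem14}: every set $\F^{\tau,(\Fb,\Fb^*),\{\eb_{i_1},\ldots,\eb_{i_k};\cdot\}}(\GD_1,\GD_1^*)$ consists of pairs of dual essential OCRSFs of $\GD_1$ and $\GD_1^*$, and $\F^0(\GD_1,\GD_1^*)$ is by definition the union of all such dual pairs ranging over legal $(\Lb,\tau)$.

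For the opposite inclusion, my plan is to argue by induction on the excess $n(F):=|\Lb|-|V(G_1)|$, where $\Lb$ is the set of long edges of $F$. In the base case $n(F)=0$ every decoration has exactly one root vertex, so $\Lb$ itself (with the orientation inherited from $F$) is an OCRSF $\Fb$ of $G_1$; setting $\Fb^*$ to be the restriction of $F^*$ to dual long edges, Lemmas~\ref{lem:lem12} and~\ref{lem:lem13} supply $\Fb^*\in\F(G_1^*)$, duality of $(\Fb,\Fb^*)$, and $(F,F^*)\in \F^{\tau,(\Fb,\Fb^*)}(\GD_1,\GD_1^*)\subseteq \S^{(\Fb,\Fb^*)}(\GD_1,\GD_1^*)$. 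For $n(F)\geq 1$, the crux will be the following key claim: \emph{whenever a decoration of $\GD_1$ contains at least two root vertices of $\Lb$, at least one of them admits a licit essential reverse move in the sense of Section~\ref{sec:induction_step}.} Granting the claim, one reverse move produces $(F',F'^*)\in\F^0(\GD_1,\GD_1^*)$ with $n(F')=n(F)-1$; the inductive hypothesis furnishes $(\Fb,\Fb^*)$ with $(F',F'^*)\in \S^{(\Fb,\Fb^*)}(\GD_1,\GD_1^*)$, and Lemma~\ref{lem:essentialreverse}.$2$ then places $(F,F^*)$ in $\S^{(\Fb,\Fb^*)}(\GD_1,\GD_1^*)$ as well.

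For Part~$2$ my plan is a confluence argument. If $(F,F^*)\in \S^{(\Fb_1,\Fb_1^*)}\cap \S^{(\Fb_2,\Fb_2^*)}$, then Lemma~\ref{lem:essentialreverse}.$1$ shows each $(\Fb_i,\Fb_i^*)$ is reached from $(F,F^*)$ by a sequence of essential reverse moves terminating at $n=0$. Since essential moves localised at distinct triangles of $\GD_1$ affect disjoint primal and dual edges, they commute; combining this local commutativity with the order-independence of the forward construction (Proposition~\ref{lem:lem14}) and the fact that each reverse move strictly decreases $n$, one obtains that the terminal pair of the reverse procedure is independent of the chosen sequence, so $(\Fb_1,\Fb_1^*)=(\Fb_2,\Fb_2^*)$.

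The hard part will be the key claim above. Its proof requires a case analysis at a decoration $\xb$ with $|R_{\xb}(\Lb)|\geq 2$, crossing the $v$-directed and non-$v$-directed triangle configurations allowed at each root vertex by Definition~\ref{def:restricted}.$3$ with the two possible orientations of the dual short edge $(v_k z_k)^*$ that govern Cases~C and~D of Section~\ref{sec:induction_step}. A helpful reduction is that a non-$v$-directed root vertex automatically lies in CII or DIII (both licit), so only the scenario where \emph{every} root vertex of the decoration is $v$-directed is delicate; in that case CII and DIII are excluded, and only CI remains, which demands the dual of $(v_k z_k)^*$ to be oriented towards $t_k$. I expect this final step to be settled by a short flow argument on the dual CRSF $F^*$ around the decoration, showing that at least one of the $v$-directed root vertices must have its dual oriented in the required direction whenever more than one root vertex is present.
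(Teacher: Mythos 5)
Your plan for Part~$1$ follows the same mechanism as the paper (peel off the excess long edges by essential reverse moves, land in $\F^{\tau,(\Fb,\Fb^*)}(\GD_1,\GD_1^*)$, then re-add them by Lemma~\ref{lem:essentialreverse}), but the one step you defer --- your ``key claim'' --- is precisely the crux, and ``I expect this to be settled by a short flow argument'' is not a proof. The argument that closes it is the following local analysis of $F^*$ at a decoration $\xb$: since inner edges are always in $F$, the only $F^*$-edges at the central dual vertex $c^*$ are the duals of the edges $w_kz_k$, and $c^*$ has exactly one outgoing edge, say towards $t_1^*$; this forces $(w_1,v_1)\in F$ (so $v_1$ is a $v$-directed root), and for every \emph{other} $v$-directed root $v_i$ the edge $t_i^*c^*$ lies in $F^*$ and must point towards $c^*$, whence the dual of $z_iv_i$ points towards $t_i^*$ and move CI is licit; every non-$v$-directed root admits CII or DIII, as you observed. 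So at most one root per decoration (the canonical one singled out by $c^*$) can fail to admit a licit reverse move, which is exactly your claim --- but you must write this out, since it is also what guarantees that a reverse move never destroys the constraint in Definition~\ref{def:restricted} (at least one root with configuration $(w_i,v_i)$ survives, namely the canonical one), i.e.\ that $(F',F'^*)$ is again \emph{essential} so the induction hypothesis applies.

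For Part~$2$ you take a genuinely different route: a Newman-type confluence argument for the reverse-move rewriting system, instead of the paper's direct identification of the base pair. Your route can be made to work, but as written it glosses over three checks: (i) commutation needs that availability and the case (CI/CII/DIII) of a reverse move at one root are untouched by a move at another root of the \emph{same} decoration --- true, because licitness is exactly the local Condition~\eqref{Condition} by Proposition~\ref{prop:replacing2edges} and the two moves involve disjoint primal and dual edges, but this must be said; (ii) the endpoints $(F_i,F_i^*)$ must be normal forms, i.e.\ no reverse move is licit once every decoration has a single root (removing that root's long edge forces a trivial cycle inside the decoration, so by Proposition~\ref{prop:replacing2edges} the move is not licit); (iii) the base pair is read off the normal form via Lemmas~\ref{lem:lem12} and~\ref{lem:lem13}. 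The paper avoids the confluence machinery altogether: it observes that an essential move always inserts the dual of $z_kv_k$ oriented \emph{towards} $t_k^*$, whereas at the canonical root of each decoration of $(F,F^*)$ this dual edge points \emph{away} from $t_1^*$; hence the canonical long edges can never have been added by moves, so they lie in both $\Fb_1$ and $\Fb_2$, and since each vertex of $G_1$ has exactly one outgoing edge in an OCRSF this forces $\Fb_1=\Fb_2$, after which uniqueness of the reverse move removing a given oriented long edge yields $\Fb_1^*=\Fb_2^*$. Your abstract argument buys a reusable uniqueness principle, the paper's buys an explicit description of the base $(\Fb,\Fb^*)$ directly from $(F,F^*)$ --- which, note, is again the same $c^*$ analysis you would need anyway for your key claim.
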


\begin{proof}
The inclusion $\subset$ of Point $1.$ is a direct consequence of Proposition
\ref{lem:lem14}. 

Consider a pair $(F,F^*)$ of dual essential OCRSFs of $\GD_1$ and
$\GD_1^*$, and let us show that there exists a pair of dual OCRSFs $(\Fb,\Fb^*)$
of $G_1$ and $G_1^*$, such that $(F,F^*)\in\S^{(\Fb,\Fb^*)}(\GD_1,\GD_1^*)$.
By definition of essential OCRSFs of $\GD_1$, there exists a subset
of edges $\Lb$ of $\boldsymbol{\mathcal{L}}$,
and $\tau\in\{cw,cclw\}^{V(G_1)}$, such that $F$ is an essential OCRSF
compatible with $\Lb$ and $\tau$, and $F^*$ is a dual
OCRSF of $F$. Recall that the set $\Lb$ defines, for every decoration $\xb$, a
set of root vertices $R_{\xb}(\Lb)$. 
Let us fix a decoration
$\xb$, assume that $\tau(\xb)=cw$ (the case where $\tau(\xb)=cclw$ being
similar), and omit the argument $\xb$ in the sequel.
By Corollary \ref{rem:4}, the restriction of $F$ to the decoration $\xb$
consists of:
\begin{enumerate}
\item[-] all inner edges oriented clockwise,
\item[-] one of the three possible $2$-edge configuration at the triangle of every non-root vertex, with the
appropriate orientation,
\item[-] one of the two following $1$-edge configurations at the triangle of
every root vertex 
$v_i\in R_{\xb}(\Lb)$: 
\begin{equation*}
\{(w_i,v_i)\},\{(w_i,z_i)\},
\end{equation*}
with the additional constraint that the triangle of at least one root vertex
contains the configuration $(w_i,v_i)$.
\end{enumerate}
Let us now study properties of the dual OCRSF $F^*$ at the decoration $\xb$, see
also Figure~\ref{fig:fig24}. Denote by $c^*$ the dual vertex at the center of
the decoration, and by
$t_1^*,\cdots,t_{d_{\xb}}^*$ the dual vertices
at the center of the triangles $t_1,\cdots,t_{d_{\xb}}$.
As inner edges are always present in the OCRSF $F$, we omit their dual
edges in our representation of the dual graph. 

\begin{figure}[ht]
\begin{center}
\includegraphics[width=4cm]{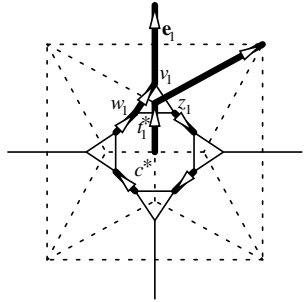}
\end{center}
\caption{The restriction of $F$ and $F^*$ to a decoration.}
\label{fig:fig24}
\end{figure}

Since $F^*$ is an OCRSF, there is exactly one edge
$(c^*,t_i^*)$ exiting the vertex $c^*$, we set by convention $i=1$.
Then, since there must also be one edge exiting
the vertex $t_1^*$, this means that $F$ must contain the edge $(w_1,v_1)$,
implying that $v_1$ is a root 
vertex of $\Lb$, and that the long edge $\epsilon_{1}\eb_1$, where
$\epsilon_1\in\{-,+\}$, whose initial vertex is $v_1$, belongs to~$\Lb$. 
Repeating this for every decoration of $\GD_1$ defines a subset of
oriented edges $\Fb$ of~$G_1$, 
\begin{equation}\label{equ:defF}
\Fb=\bigcup_{\xb\in
V(G_1)}\{\epsilon_1(\xb)\eb_1(\xb)\},
\end{equation}
such that every vertex of $G_1$ has exactly one outgoing edge of this subset.

Suppose that the decoration $\xb$ has another root vertex $v_i$ ($i\neq 1$),
and let $\epsilon_i\eb_i$ be the edge of $\Lb$ whose initial vertex is $v_i$.
Then, at the triangle $t_i$, the OCRSF $F$ consists of:
\begin{itemize}
\item either the edge $(w_i,z_i)$; then, the dual of the edges $w_i v_i$, $z_i
v_i$ belong to $F^*$. A priori, there are two
possible orientations for the dual edges, and which one it is is fixed by $F^*$.
In each of the two cases, either the essential reverse move CII or DIII can be
performed on the pair $(F,F^*)$, yielding a pair
$(F,F^*)_{\{\cdot,\epsilon_i\eb_i\}}$ of dual essential OCRSFs of $\GD_1$ and
$\GD_1^*$.
\item or the edge $(w_i,v_i)$; then, the dual edge $(t_i^*,c^*)$ belongs to
$F^*$ and must be oriented towards $c^*$. Indeed, otherwise the
vertex $c^*$ would have two exiting edges of $F^*$ which contradicts the fact of
being an OCRSF. 
This implies that the dual of the edge $z_i v_i$, which also belongs to $F^*$
is oriented towards $t_i^*$. As
a consequence, the essential reverse move CI can be performed on the
pair $(F,F^*)$, yielding a pair $(F,F^*)_{\{\cdot,\epsilon_i\eb_i\}}$ of dual 
essential OCRSFs of $\GD_1$ and $\GD_1^*$.
\end{itemize}
Since the orientation of edges not involved in the move remains unchanged by
essential reverse moves, we can repeat this for every root vertex different
from $v_1$ of the decoration $\xb$, and for every decoration $\xb$ of $\GD_1$.
By the analogous of Proposition \ref{prop:essential} for essential reverse
moves, this yields a pair $(\bar{F},\bar{F}^*)$ of dual essential OCRSFs of
$\GD_1$ and $\GD_1^*$, such that $\bar{F}$ is compatible with $\Fb$ and $\tau$.
Then, by Lemma \ref{lem:lem12}, this implies that $\Fb$ is an OCRSF of $G_1$.
Let $\Fb^*$
be the restriction to dual long edges of $\bar{F}^*$. Then, by
Lemma~\ref{lem:lem13}, $\Fb^*$ is a dual OCRSF of $\Fb$, and we deduce that:
$$
(\bar{F},\bar{F}^*)\in\F^{\tau,(\Fb,\Fb^*)}(\GD_1,\GD_1^*).
$$ 
By Lemma \ref{lem:essentialreverse}, if an essential reverse move can be
performed on a pair of dual OCRSFs, removing an oriented long edge, then the
essential move adding the same long edge can be performed to recover the
original pair. Applying this recursively, we deduce that:
$$
(F,F^*)\in\F^{\tau,(\Fb,\Fb^*),\{\Lb\setminus \Fb\}}(\GD_1,\GD_1^*)\subset
\S^{(\Fb,\Fb^*)}(\GD_1,\GD_1^*),
$$ 
thus proving Point $1.$ of Theorem \ref{thm:correspondence}. 

Let us now prove Point $2.$ Suppose that there are two distinct pairs
$(\Fb_1,\Fb_1^*)$ and $(\Fb_2,\Fb_2^*)$ of dual OCRSFs of $G_1$ and $G_1^*$,
such that:
$$
(F,F^*)\in \S^{(\Fb_1,\Fb_1^*)}(\GD_1,\GD_1^*)\cap
\S^{(\Fb_2,\Fb_2^*)}(\GD_1,\GD_1^*).
$$
Let us return to properties of $(F,F^*)$ at a fixed decoration $\xb$ of $\GD_1$.
In $F^*$, the dual edge of the edge $z_1 v_1$ is
oriented away from $t_1^*$. Now, suppose that the edge
$\epsilon_1\eb_1$ is added by an essential move. This implies that in
the dual graph, the dual of the edge $z_1v_1$ must be added.
Referring to Figure \ref{fig:fig18}, which describes possible essential moves,
we see that this dual edge is then oriented towards $t_1^*(\xb)$, which is a
contradiction. Thus, $\epsilon_1\eb_1$ must be an edge of the original OCRSF.
Since essential moves do not change the orientation of edges not
involved in the move, we repeat this argument for every decoration, and
deduce that:
$$
\Fb_1=\Fb_2=\Fb,
$$ 
where $\Fb$ is defined in Equation \eqref{equ:defF}. As a consequence
$(F,F^*)\in \F^{\tau,(\Fb,\Fb_1^*),\{\Lb\setminus\Fb\}}(\GD_1,\GD_1^*)\cap
\F^{\tau,(\Fb,\Fb_2^*),\{\Lb\setminus\Fb\}}(\GD_1,\GD_1^*)$, where
$\Fb_1^*$ and $\Fb_2^*$ are two distinct dual OCRSFs of $\Fb$.
This means that
there exists $(F_1,F_1^*)\in\F^{\tau,(\Fb,\Fb_1^*)}(\GD_1,\GD_1^*)$ and
$(F_2,F_2^*)\in\F^{\tau,(\Fb,\Fb_2^*)}(\GD_1,\GD_1^*)$, such that $(F,F^*)$ is
obtained from each of $(F_i,F_i)^*$ by successively adding the same set
of oriented long edges $\Lb\setminus\Fb$ with essential moves. By Part $1.$
of Lemma \ref{lem:lem14}, if an essential move can be performed to add an
oriented long edge, then the reverse move can also be performed to recover the
original pair. Moreover, looking at Figure \ref{fig:fig19}, describing possible
essential reverse moves, we see that there is at most one essential reverse move
for removing a given oriented long edge. Thus, the same essential reverse moves
are performed on $(F,F^*)$ to recover $(F_1,F_1^*)$ and $(F_2,F_2^*)$, thus
implying that this is in fact the same pair. Since $\Fb_i^*$ is the restriction
to dual long edges of $F_i^*$ ($i=1,2$), we deduce that $\Fb_1^*=\Fb_2^*$.
\end{proof}

\subsection{The construction is weight preserving}
\label{sec:weightpreserving}

In this section, we state and prove Theorem \ref{prop:weightpres},
which establishes that the construction is weight preserving: we
show that the sum of the weights of all pairs of essential dual OCRSFs of
$\GD_1$ and $\GD_1^*$ in $\S^{(\Fb,\Fb^*)}(\GD_1,\GD_1^*)$ is equal, up to a
constant which only depends on the graph $G_1$, to the weight of the pair of
dual OCRSFs $(\Fb,\Fb^*)$ of $G_1$ and $G_1^*$. As a consequence we recover, by
an explicit computation, Theorem $2$ of \cite{BoutillierdeTiliere:iso_perio},
see Corollary \ref{cor:beaced}. Note that the constant could not be explicited
in the proof of Theorem $2$, but 
could only be recovered a posteriori in \cite{BoutillierdeTiliere:iso_gen}.

\begin{thm}\label{prop:weightpres}
Let $(\Fb,\Fb^*)$ be a pair of dual OCRSFs of $G_1$ and $G_1^*$. Then,
\begin{align*}
\sum_{(F,F^*)\in\S^{(\Fb,\Fb^*)}(\GD_1,\GD_1^*)}\left(
\prod_{(x,y)\in
F}f_x\overline{f_y}K_{x,y}\right)&(-\zs^{h_0(F)}\ws^{v_0(F)})^{\frac{1}{2}N(F,
F^*) } = \\
&=\mathbf{C}\left(\prod_{(\xb,\yb)\in
\Fb}\tan\theta_{\xb\yb}\right)(-\zs^{h_0(\Fb)}\ws^{v_0(\Fb)})^{\frac{1}{2}N(\Fb,
\Fb^*) },
\end{align*}
where $\mathbf{C}=2^{4|E(G_1)|+|V(G_1)|}\prod_{\xb\yb\in
E(G_1)}\sin^2\bigl(\frac{\theta_{\xb\yb}}{2}\bigr)\cos\theta_{\xb\yb}$.
\end{thm}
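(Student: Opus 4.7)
The plan is to separate the topological and combinatorial contributions of the identity. As a first step, I invoke Proposition~\ref{lem:lem14} to pull the cycle weight out of the sum: for every pair $(F,F^*)\in\S^{(\Fb,\Fb^*)}(\GD_1,\GD_1^*)$, either $(F,F^*)$ and $(\Fb,\Fb^*)$ have the same reference number with coinciding signed number of cycles, or their reference numbers differ but both signed numbers of cycles vanish (so the cycle factor is trivially $1$). In either case, $(-\zs^{h_0(F)}\ws^{v_0(F)})^{N(F,F^*)/2}$ equals $(-\zs^{h_0(\Fb)}\ws^{v_0(\Fb)})^{N(\Fb,\Fb^*)/2}$, which pulls out of the sum. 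The theorem thus reduces to the edge-weight identity
\begin{equation*}
\sum_{(F,F^*)\in\S^{(\Fb,\Fb^*)}(\GD_1,\GD_1^*)}\prod_{(x,y)\in F}f_x\overline{f_y}K_{x,y}=\mathbf{C}\prod_{(\xb,\yb)\in\Fb}\tan\theta_{\xb\yb}.
\end{equation*}

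Next, I unfold $\S^{(\Fb,\Fb^*)}$ as in Remark~\ref{rem:correspondence}, as a disjoint union over type assignments $\tau\in\{cw,cclw\}^{V(G_1)}$, over sets of oriented long edges added via essential moves, and over the initial essential OCRSFs from $\F^{\tau,(\Fb,\Fb^*)}(\GD_1,\GD_1^*)$. Since $F^*$ is uniquely determined by $F$ and $\Fb^*$ via Lemma~\ref{lem:lem13}, and since essential moves update $F^*$ in a forced way, the sum reduces to a sum over $F$ alone. The short edges of $F$ live entirely within decorations, so the product $\prod_{(x,y)\in F}f_x\overline{f_y}K_{x,y}$ factorizes over the decorations of $\GD_1$, each decoration $\xb$ contributing a local weight obtained by summing over its admissible essential configurations (which depend on $\tau(\xb)$ and on the set $R_\xb$ of root vertices). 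The distinct decorations couple only through the long-edge factors $f_{v_k(\xb)}\overline{f_{v_l(\yb)}}K_{v_k(\xb),v_l(\yb)}$ shared between neighboring decorations.

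The core of the argument is the explicit local computation. Using the definition~\eqref{def:f} and the recursion $f_{v_k}=f_{w_k}+f_{z_k}$, the sum over the three admissible two-edge configurations at a non-root triangle collapses telescopically into a single product. The sum over the two admissible one-edge configurations at a root triangle $t_k$ combines with the long-edge factor exiting at $v_k$, and using $f_{w_k}=e^{-i\alpha_{w_k}/2}$, $f_{z_k}=-e^{-i\alpha_{z_k}/2}$, $\alpha_{w_k}-\alpha_{z_k}=2\theta_k$, and the critical dimer weight $\nu(\theta_\eb)=\cot(\theta_\eb/2)$, the two endpoint contributions of each edge $\eb\in\Fb$ should combine to produce $\tan\theta_\eb$ times the universal factor $\sin^2(\theta_\eb/2)\cos\theta_\eb$. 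The sum over $\tau$ contributes the factor $2^{|V(G_1)|}$ by the $cw/cclw$ symmetry, while the sum over the presence and orientation choices for edges of $E(G_1)\setminus\Fb$ added via essential moves contributes $2^{4|E(G_1)|}$.

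The main obstacle will be the trigonometric bookkeeping in the last step. In particular, one must verify that the phases $e^{\pm i\alpha/2}$ at the two endpoints of each long edge combine coherently into a real quantity (using the $\RR/4\pi\ZZ$ convention secured by Lemma~\ref{lem:angles4pi}); that the Kasteleyn signs arising from the triangle orientations and from the inter-decoration shift~\eqref{eq:angle_chg_deco} align to produce the constant $\mathbf{C}$ without residual phases; and that essential moves in the two possible orientations of each added edge contribute equally to the sum, so that the unoriented Laplacian weight $\tan\theta_\eb$ emerges naturally from the summation. As a bonus, the resulting explicit value of $\mathbf{C}$ yields an independent proof of Theorem~$2$ of \cite{BoutillierdeTiliere:iso_perio} with an explicit proportionality constant, as recorded in Corollary~\ref{cor:beaced}.
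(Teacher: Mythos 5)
Your overall skeleton agrees with the paper's: you first use Proposition \ref{lem:lem14} to pull the factor $(-\zs^{h_0}\ws^{v_0})^{\frac{1}{2}N}$ out of the sum, you fix a type assignment $\tau$ and reduce to a weight identity, and you collapse the three $2$-edge configurations at non-root triangles via the kernel relation $f_{v}=f_{w}+f_{z}$. However, your accounting of the constant $\mathbf{C}$ contains a genuine error that the plan cannot recover from. You claim that ``the sum over the presence and orientation choices for edges of $E(G_1)\setminus\Fb$ added via essential moves contributes $2^{4|E(G_1)|}$''. This is not what that sum does, and it cannot: there are only $|E(G_1)|-|V(G_1)|$ such edges, so no power of $2$ of order $4|E(G_1)|$ can come from them. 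In the actual computation, the factor $2^{4|E(G_1)|}$ is produced entirely by the \emph{initial} local weights: each long edge of $\Fb$ contributes $f_{v_k}\overline{f_{v_\l}}K_{v_k,v_\l}=-2i\sin\theta$, each non-root triangle sum collapses to $\pm 4i\sin\bigl(\frac{\theta_j}{2}\bigr)e^{\mp i\frac{\theta_j}{2}}$, and each root triangle to $2i\sin\bigl(\frac{\theta_1}{2}\bigr)e^{\mp i\frac{\theta_1}{2}}$; multiplying over decorations gives $2^{4|E(G_1)|}$ together with $\prod_{\xb\yb}\sin^2\bigl(\frac{\theta_{\xb\yb}}{2}\bigr)$ and $\prod_{\Fb}\sin\theta$. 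What the sum over added edges actually contributes is the product $\prod_{\eb\in E(G_1)\setminus\Fb}(-\cos\theta_{\eb})$: one must show that adding a given edge $\eb$ by essential moves (in both admissible orientations, with the constrained one- or two-edge configurations at the newly created root triangles) multiplies the weight by $-(1+\cos\theta_{\eb})$ relative to leaving it absent, so that summing over subsets telescopes as $\prod_{\eb\notin\Fb}\bigl(1-(1+\cos\theta_{\eb})\bigr)$. This ratio computation is the heart of the proof and is exactly the mechanism that produces the $\cos\theta_{\xb\yb}$ factors in $\mathbf{C}$ for edges outside $\Fb$ (those for edges of $\Fb$ arise from rewriting $\sin\theta=\tan\theta\cos\theta$); your plan has no source for these cosines at all, so the bookkeeping would not close.

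A related, smaller slip: at the initial step the root triangle configuration is \emph{forced} (a single $1$-edge configuration, $(w_1,v_1)$ in the \emph{cw} case), not a sum over two choices; the situation where one or two $1$-edge configurations are admissible occurs only at root vertices created by essential moves, and whether it is one or two depends on the orientation of the relevant dual edge (Cases A/B of Figure \ref{fig:fig18}). That case analysis, together with the check that the result is independent of the types $\tau(\xb),\tau(\yb)$, is precisely what makes the uniform ratio $-(1+\cos\theta)$ come out, and it is the part your proposal defers to ``trigonometric bookkeeping'' while predicting the wrong outcome for it. The sum over $\tau$ giving $2^{|V(G_1)|}$ is correct, as is the reduction of the $\zs,\ws$ factor.
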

\begin{proof}
Using notations introduced in Section \ref{sec:conclusion}, let us recall
that:
$$
\S^{(\Fb,\Fb^*)}(\GD_1,\GD_1^*)=
\bigcup_{\tau\in\{cw,cclw\}^{V(G_1)}}\bigcup_{k=0}^m
\bigcup_{\{\{\eb_{i_1},\cdots,\eb_{i_k}\}:(i_1,\cdots,i_k)\in J_k\}}
\F^{\tau,(\Fb,\Fb^*),\{\eb_{i_1},\cdots,\eb_{i_k}\}}(\GD_1,\GD_1^*).
$$ 
By Proposition \ref{lem:lem14}, we know that for every pair $(F,F^*)$ of dual
essential
OCRSFs of $\S^{(\Fb,\Fb^*)}(\GD_1,\GD_1^*)$:
$$
(-\zs^{h_0(F)}\ws^{v_0(F)})^{\frac{1}{2}N(F,F^*)}=
(-\zs^{h_0(\Fb)}\ws^{v_0(\Fb)})^{\frac{1}{2}N(\Fb,\Fb^*)},
$$
so that we only need to handle the weights which do not involve the
coefficients $\zs$ and~$\ws$. Consider $\tau\in\{cw,cclw\}^{V(G_1)}$, a fixed
assignment of types to
vertices of $G_1$, and let us denote by $\Wc_k$ the
weighted sum (excluding $\zs$ and $\ws$) of configurations of
$\F^{\tau,(\Fb,\Fb^*),\{\eb_{i_1},\cdots,\eb_{i_k}\}}(\GD_1,\GD_1^*)$. We now
compute $\Wc_k$ by induction on $k$. 

In all computations below, we use the definition of the Kasteleyn orientation
of Section~\ref{sec:dimercharact}, the definition of the function $(f_x)_{x\in
V(\GD_1)}$ and of the rhombus half-angles in $\RR/4\pi\ZZ$ of Section
\ref{sec:421}.

\underline{\textbf{Computation of $\Wc_0$}}. Recall that
$\F^{\tau,(\Fb,\Fb^*)}(\GD_1,\GD_1^*)$ consists of all pairs $(F,F^*)$ of dual
OCRSFs of $\GD_1$ and $\GD_1^*$, such that $F$ is an essential OCRSF compatible
with $\Fb$ and $\tau$, characterized in Lemma \ref{lem:lem12}, and $F^*$ is the
unique dual OCRSF whose restriction to dual long edges is $\Fb^*$. Recall also
that, for every decoration $\xb$ of $\GD_1$, $v_1(\xb)$ denotes the unique root
vertex induced by $\Fb$.
\begin{enumerate}
 \item \underline{Contribution of long edges of $\Fb$}. Let $(\xb,\yb)$ be an
oriented edge of $\Fb$, and let $(v_k(\xb),v_\l(\yb))$ be the corresponding
long edge of $\GD_1$. Denote by $\theta_{\xb\yb}$ the
rhombus half-angle of $\xb\yb$. Then the contribution of this edge to $\Wc_0$
is:
$$
f_{v_k}\overline{f_{v_\l}}K_{v_k,v_\l}.
$$
By definition of the Kasteleyn matrix $K$, and of the vectors $f$, we have:
\begin{align*}
&K_{v_k,v_\l}=\eps_{v_k,v_\l}\cot\frac{\theta_{\xb\yb}}{2}\\
&f_{v_k}=e^{-i\frac{\alpha_{w_k}}{2}}-e^{-i\frac{\alpha_{z_k}}{2}},
\quad
f_{v_\l}=e^{-i\frac{\alpha_{w_\l}}{2}}-e^{-i\frac{\alpha_{z_\l}}{2}}.
\end{align*}
Moreover, by definition of the rhombus half-angles in $\RR/4\pi\ZZ$:
\begin{equation*}
f_{v_\l}=i\eps_{v_k,v_\l}f_{v_k},
\end{equation*}
so that $\overline{f_{v_\l}}=-i\eps_{v_k,v_\l}\overline{f_{v_k}}$. As
a consequence:
\begin{align}\label{eq:longedge}
f_{v_k}\overline{f_{v_\l}}K_{v_k,v_\l}&=
-i\cot\frac{\theta_{\xb\yb}}{2}\,|f_{v_k}|^2
=-i\cot\frac{\theta_{\xb\yb}}{2}\,2(1-\cos\theta_{\xb\yb})\nonumber\\
&=-2i\sin\theta_{\xb\yb}.
\end{align}

\item \underline{Contribution of inner edges of decorations}. Let $\xb$ be
a decoration of $\GD_1$, and let $w_j(\xb)z_{j+1}(\xb)$ be a a generic inner
edge of $\xb$. 

$\circ$ If $\tau(\xb)=cw$, then the edge is oriented from $z_{j+1}$ to
$w_j$, and its contribution to $\Wc_0$ is:
\begin{equation*}
\eps_{z_{j+1},w_j}f_{z_{j+1}}\overline{f_{w_j}}. 
\end{equation*}
Moreover, by definition of the angles in $\RR/4\pi\ZZ$, we have
$f_{z_{j+1}}=-\eps_{w_j,z_{j+1}}f_{w_j}$, so that the contribution is:
\begin{equation*}
-\eps_{z_{j+1},w_j}\eps_{w_j,z_{j+1}}f_{w_j}\overline{f_{w_j}}=1.
\end{equation*}
$\circ$ If $\tau(\xb)=cclw$, then the edge is oriented from $w_j$ to
$z_{j+1}$, and its contribution to $\Wc_0$ is:
\begin{equation*}
\eps_{w_j,z_{j+1}}f_{w_j}\overline{f_{z_{j+1}}}=-1. 
\end{equation*}

\item \underline{Contribution of triangles of root vertices}. Let $\xb$ be a
decoration of $\GD_1$, and $t_1(\xb)$ be the triangle of the root vertex
$v_1(\xb)$. Denote by $\theta_1(\xb)$ the rhombus half-angle of the long edge
incident to $v_1(\xb)$.

$\circ$ If $\tau(\xb)=cw$, then the triangle $t_1$ contains the edge
$(w_1,v_1)$, so that its contribution to $\Wc_0$ is:
\begin{equation*}
f_{w_1}\overline{f_{v_1}}K_{w_1,v_1}=e^{-i\frac{\alpha_{w_1}}{2}}(e^{i\frac{
\alpha_{w_1}}{2}}-e^{i\frac{\alpha_{z_1}}{2}})=1-e^{-i\theta_1}=
2i\sin\Bigl(\frac{\theta_1}{2}\Bigr)e^{-i\frac{\theta_1}{2}}.
\end{equation*}
$\circ$ If $\tau(\xb)=cclw$, then the triangle $t_1$ contains the edge
$(z_1,v_1)$, so that its contribution to $\Wc_0$ is:
\begin{equation*}
f_{z_1}\overline{f_{v_1}}K_{z_1,v_1}=e^{-i\frac{\alpha_{z_1}}{2}}(e^{i\frac{
\alpha_{w_1}}{2}}-e^{i\frac{\alpha_{z_1}}{2}})=-1+e^{i\theta_1}
=2i\sin\Bigl(\frac{\theta_1}{2}\Bigr)e^{i\frac{\theta_1}{2}}.
\end{equation*}

\item \underline{Contribution of triangles of non-root vertices}. Let $\xb$ be
a decoration of $\GD_1$, $t_j(\xb)$ ($j\neq 1$) be a triangle of a non-root
vertex, and $\theta_j$ be the rhombus half-angle of the long edge incident to
$v_j(\xb)$. Then, $t_j(\xb)$ contains any of the three possible $2$-edge
configurations, with the orientation induced by the type. Thus,

$\circ$ If $\tau(\xb)=cw$, its contribution to $\Wc_0$ is, see also Figure
\ref{fig:fig19}:
\begin{align}\label{equ:3}
&f_{w_j}\overline{f_{v_j}}K_{w_j,v_j}f_{v_j}\overline{f_{z_j}}K_{v_j,z_j}+
f_{v_{j}}\overline{f_{z_j}}K_{v_j,z_j}f_{w_j}\overline{f_{z_j}}K_{w_j,z_j}+
f_{v_j}\overline{f_{w_j}}K_{v_j,w_j}f_{w_j}\overline{f_{z_j}}K_{w_j,z_j}
=\nonumber\\
&=f_{w_j}\overline{f_{z_j}}f_{v_j}(\overline{f_{v_j}}-\overline{f_{z_j}}
+\overline{f_{w_j}})\nonumber\\
&=f_{w_j}\overline{f_{z_j}}f_{v_j}(2\overline{f_{w_j}})\quad\text{ (by
definition of the vector $f$)}\nonumber\\
&=2\overline{f_{z_j}}f_{v_j}=-2e^{i\frac{\alpha_{z_j}}{2}}(e^{-i\frac{\alpha_{
w_j}}{2}}-e^{-i\frac{\alpha_{z_j}}{2}})
=2(1-e^{-i\theta_j})\nonumber\\
&=4i\sin\Bigl(\frac{\theta_j}{2}\Bigr)e^{-i\frac{\theta_j}{2}}.
\end{align}
$\circ$ If $\tau(\xb)=cclw$, its contribution to $\Wc_0$ is:
\begin{align}\label{equ:4}
(-1)f_{v_j}\overline{f_{w_j}}f_{z_j}\overline{f_{w_j}}+
f_{v_{j}}\overline{f_{z_j}}&f_{z_j}\overline{f_{w_j}}+(-1)f_{z_j}\overline{f_{
v_j }}(-1)f_{v_j}\overline{f_{w_j}}=\nonumber\\
&=\overline{f_{w_j}}f_{z_j}f_{v_j}(-\overline{f_{w_j}}+\overline{f_{z_j}}
+\overline{f_{v_j}})\nonumber\\
&=\overline{f_{w_j}}f_{z_j}f_{v_j}(2\overline{f_{z_j}})\quad\text{ (by
definition of the vectors $f$)}\nonumber\\
&=2\overline{f_{w_j}}f_{v_j}=2e^{i\frac{\alpha_{w_j}}{2}}(e^{-i\frac{\alpha_{w_j
}}{2}}-e^{-i\frac{\alpha_{z_j}}{2}})
=2(1-e^{i\theta_j})\nonumber\\
&=-4i\sin\Bigl(\frac{\theta_j}{2}\Bigr)e^{i\frac{\theta_j}{2}}.
\end{align}
\end{enumerate}
Combining this, we deduce the contribution of a decoration $\xb$. Note that in
computations below, we use the fact that
$\sum_{j=1}^{d_{\xb}}\frac{\theta_j}{2}=\frac{\pi}{2}$.

$\circ$ If $\tau(\xb)=cw$, the contribution of a decoration $\xb$ to $\Wc_0$ is:
\begin{equation*}
\left(\prod_{j=1}^{d_{\xb}}2i\sin\Bigl(\frac{\theta_j}{2}\Bigr)e^{-i\frac{
\theta_j}{2}} \right)2^{d_{\xb}-1}
=-2^{2d_{\xb}-1}i^{d_{\xb}+1}\prod_{j=1}^{d_{\xb}}\sin\Bigl(\frac{\theta_j}{2}
\Bigr).
\end{equation*}
$\circ$ If $\tau(\xb)=cclw$, the contribution of a decoration $\xb$ to $\Wc_0$
is:
\begin{equation*}
\left(\prod_{j=1}^{d_{\xb}}(-1)2i\sin\Bigl(\frac{\theta_j}{2}\Bigr)e^{i\frac{
\theta_j}{2}} \right)(-2)^{d_{\xb}-1}
=-2^{2d_{\xb}-1}i^{d_{\xb}+1}\prod_{j=1}^{d_{\xb}}\sin\Bigl(\frac{\theta_j}{2}
\Bigr).
\end{equation*}
We deduce that the contribution of a decoration is in fact independent of its
type. Let us denote by 
$$
N=V(G_1),\;M=E(G_1).
$$
Then, since $\Fb$ is an OCRSF of $G_1$, it has $N$ edges. Taking the product
over all long edges of $\Fb$ yields a contribution:
\begin{equation}\label{equ:1}
(-2)^N i^N \prod_{(\xb,\yb)\in\Fb}\sin\theta_{\xb\yb},
\end{equation}
where the contribution is in fact independent of the orientation of the edges.
Observing that $\sum_{\xb\in V(G_1)}d_{\xb}=2M$, and taking the product over
all decorations of $\GD_1$ yields a contribution:
\begin{equation}\label{equ:2}
(-1)^N 2^{4M-N}i^{2M+N}\prod_{\xb\yb\in
E(G_1)}\sin^2\Bigl(\frac{\theta_{\xb\yb}}{2}\Bigr). 
\end{equation}
Taking the product of Equations \eqref{equ:1} and \eqref{equ:2}, gives:
\begin{equation}\label{equ:5}
\Wc_0=\left[2^{4M}(-1)^{M+N}\prod_{\xb\yb\in
E(G_1)}\sin^2\Bigl(\frac{\theta_{\xb\yb}}{2}\Bigr)\right]
\prod_{\xb\yb\in\Fb}\sin\theta_{\xb\yb}.
\end{equation}

\underline{\textbf{Computation of $\Wc_k,\,k\geq 1$}}. Recall that 
$\F^{\tau,(\Fb,\Fb^*),\{\eb_{i_1},\cdots,\eb_{i_k}\}}(\GD_1,\GD_1^*)$ consists
of all pairs $(F,F^*)$ of dual essential OCRSFs of $\GD_1$ and $\GD_1^*$,
obtained by performing essential moves adding the edge $+\eb_{i_k}$ or
$-\eb_{i_k}$, on pairs of dual OCRSFs of
$\F^{\tau,(\Fb,\Fb^*),\{\eb_{i_1},\cdots,\eb_{i_{k-1}}\}}(\GD_1,\GD_1^*)$. Using
Lemma \ref{lem:lem12}, characterizing the set
$\F^{\tau,(\Fb,\Fb^*)\}}(\GD_1,\GD_1^*)$, and returning to the definition of
essential moves, see Figure \ref{fig:fig18}, we deduce by induction that, for
every $k\in\{0,\cdots,m\}$ and for every $(i_1,\cdots,i_k)\in J_k$, the
restriction to a decoration $\xb$ of $\GD_1$ of the first component $F$ of a
pair $(F,F^*)$ of dual OCRSFs of
$\F^{\tau,(\Fb,\Fb^*),\{\eb_{i_1},\cdots,\eb_{i_k}\}}(\GD_1,\GD_1^*)$ contains:
\begin{enumerate}
\item all inner edges with the orientation induced by the type $\tau(\xb)$,
\item any of the three possible $2$-edge configurations at triangles of
non-root vertices,
\item the edge $(w_1,v_1)$ (resp.
$(z_1,v_1)$) at the triangle $t_1(\xb)$, if $\tau(\xb)=cw$
(resp. $\tau(\xb)=cclw$),
\item one or two of the $1$-edge configurations at triangles of other root
vertices, depending on whether one or two essential moves can be performed.
\end{enumerate}
Let us compute the ratio $\frac{\Wc_k}{\Wc_{k-1}}$. To simplify notations, we
denote by $\xb\yb$ the edge $\eb_{i_k}$, by $v_k(\xb)v_\l(\yb)$ the
corresponding long edge of $\GD_1$, and by $\theta$ the corresponding
rhombus half-angle.

The edge $\eb_{i_k}$ is absent in OCRSFs of
$\F^{\tau,(\Fb,\Fb^*),\{\eb_{i_1},\cdots,\eb_{i_{k-1}}\}}(\GD_1,\GD_1^*)$, so
that $v_k$ and $v_\l$ are non-root vertices. By Point $3.$, this implies that
the
triangles $t_k$ and $t_{\l}$ each contain any of the three possible
$2$-edge configurations. Using equations \eqref{equ:3} and \eqref{equ:4}, we
deduce that the contribution to $\Wc_{k-1}$ of the absent edge and of the two
incident triangles is:
\begin{itemize}
\item If $\tau(\xb)=cw$ and $\tau(\yb)=cw$,
\begin{equation*}
(4i\sin\Bigl(\frac{\theta}{2}\Bigr)e^{-i\frac{\theta}{2}})^2=
-16\sin^2\Bigl(\frac{\theta}{2}\Bigr)e^{-i\theta}.
\end{equation*}
\item If $\tau(\xb)=cw$ and $\tau(\yb)=cclw$, or $\tau(\xb)=cclw$ and
$\tau(\yb)=cw$,
\begin{equation*}
16\sin^2\Bigl(\frac{\theta}{2}\Bigr).
\end{equation*}
\item If $\tau(\xb)=cclw$ and $\tau(\yb)=cclw$,
\begin{equation*}
-16\sin^2\Bigl(\frac{\theta}{2}\Bigr)e^{i\theta}.
\end{equation*}
\end{itemize}

We now describe what happens when performing an essential move adding the long
edge $(v_k,v_\l)$ or $(v_\l,v_k)$. Suppose that the dual edge in $F^*$ is
oriented as in Case A of Figure \ref{fig:fig18}. We give all details for the
first case only, since others are similar.
\begin{itemize}
 \item If $\tau(\xb)=cw$ and $\tau(\yb)=cw$. When the essential move adds the
edge $v_k v_\l$ in either of the two directions, one of $v_k$ or $v_\l$ is a
non-root vertex, so that the contribution of $t_k$ or $t_\l$ is:
$$
4i\sin\Bigl(\frac{\theta}{2}\Bigr)e^{-i\frac{\theta}{2}}.
$$
By Equation \eqref{eq:longedge}, the contribution of the edge $(v_k,v_\l)$
is independent
of its orientation and is equal to:
$$
-2i\sin\theta.
$$
Suppose that the essential move adds the long edge $(v_k,v_\l)$. Then, by Case
$A$ of Figure \ref{fig:fig18}, there are two possible configurations at the
triangle $t_k$, and the contribution is:
$$
f_{w_k}\overline{f_{v_k}}K_{w_k,v_k}+f_{w_k}\overline{f_{z_k}}K_{w_k,z_k}=
f_{w_k}\overline{f_{v_k}}-f_{w_k}\overline{f_{z_k}}=f_{w_k}\overline{f_{w_k}}=1.
$$
Suppose that the essential move adds the long edge $(v_\l,v_k)$. Then, using
symmetries, we are in Case $B$ of Figure \ref{fig:fig18}. Thus there is one
possible configuration at the triangle $t_\l$, and the contribution is:
$$
f_{w_\l}\overline{f_{z_\l}}K_{w_\l,z_\l}=e^{-i\theta}.
$$
As a consequence, the contribution to $\Wc_k$ of the edge added in one of the
two possible directions and of the two incidents triangles is
\begin{equation*}
8\sin\Bigl(\frac{\theta}{2}\Bigr)e^{-i\frac{\theta}{2}}\sin\theta(1+e^{-i\theta}
)
=16\sin\Bigl(\frac{\theta}{2}\Bigr)e^{-i\theta}\sin\theta\cos\Bigl(\frac{\theta}
{2}\Bigr) .
\end{equation*}
\item If $\tau(\xb)=cw$ and $\tau(\yb)=cclw$. In a similar way, the
contribution to $\Wc_k$ is:
\begin{align*}
&\left(-4i\sin\Bigl(\frac{\theta}{2}\Bigr)e^{i\frac{\theta}{2}}
\right)(-2i\sin\theta)(1)+
\left(4i\sin\Bigl(\frac{\theta}{2}\Bigr)e^{-i\frac{\theta}{2}}
\right)(-2i\sin\theta)(-1)\\
&=-16\sin\Bigl(\frac{\theta}{2}\Bigr)\sin\theta\cos\Bigl(\frac{\theta}{2}\Bigr).
\end{align*}
\item If $\tau(\xb)=cclw$ and $\tau(\yb)=cw$, the
contribution to $\Wc_k$ is:
\begin{align*}
&\left(4i\sin\Bigl(\frac{\theta}{2}\Bigr)e^{-i\frac{\theta}{2}}
\right)(-2i\sin\theta)(-e^{i\theta})+
\left(-4i\sin\Bigl(\frac{\theta}{2}\Bigr)e^{i\frac{\theta}{2}}
\right)(-2i\sin\theta)(e^{-i\theta})\\
&=-16\sin\Bigl(\frac{\theta}{2}\Bigr)\sin\theta\cos\Bigl(\frac{\theta}{2}\Bigr).
\end{align*}
\item If $\tau(\xb)=cclw$ and $\tau(\yb)=cclw$, the
contribution to $\Wc_k$ is:
\begin{align*}
&\left(-4i\sin\Bigl(\frac{\theta}{2}\Bigr)e^{i\frac{\theta}{2}}
\right)(-2i\sin\theta)(-e^{i\theta})+
\left(-4i\sin\Bigl(\frac{\theta}{2}\Bigr)e^{i\frac{\theta}{2}}
\right)(-2i\sin\theta)(-1)\\
&=16\sin\Bigl(\frac{\theta}{2}\Bigr)e^{i\theta}\sin\theta\cos\Bigl(\frac{\theta}
{ 2 } \Bigr).
\end{align*}
\end{itemize}
Moreover, by Points $1.-4.$ we know that the contribution of all other long
edges, inner edges and triangles is the same in $\Wc_k$ and $\Wc_{k-1}$. Thus,
from the above computations we deduce that, independently of the type of the
decorations $\xb$ and $\yb$:
\begin{align}\label{equ:final}
\frac{\Wc_k}{\Wc_{k-1}}&=-\frac{16\sin\bigl(\frac{\theta}{2}\bigr)
\sin\theta\cos\bigl(\frac{\theta}
{ 2 } \bigr)}{16\sin^2\bigl(\frac{\theta}{2}\bigr)}\nonumber\\
&=-\sin\theta\cot\bigl(\frac{\theta}
{ 2 } \bigr)=-(1+\cos\theta).
\end{align}
Let us now introduce the notation $\theta_{i_j}$ for the rhombus half-angle of
the edge $\eb_{i_j}$. Then, from Formula \eqref{equ:final}, we deduce that
independently of the type of the vertices:
$$
\Wc_k=\Wc_0\prod_{j=1}^k(-(1+\cos\theta_{i_j})). 
$$
\underline{\textbf{Conclusion}}. Independently of the type of the vertices, we
have:
\begin{align*}
\sum_{k=0}^m \sum_{(i_1,\cdots,i_k)\in J_k}\Wc_k
&=\Wc_0\sum_{k=0}^m \sum_{(i_1,\cdots,i_k)\in J_k}\prod_{j=1}^k
(-(1+\cos\theta_{i_j}))\\
&=\Wc_0
\prod_{i=1}^m[1-(1+\cos\theta_i)]\\
&=\Wc_0(-1)^m\prod_{i=1}^m\cos\theta_i\\
&=\Wc_0(-1)^{M-N} \prod_{\xb\yb\in
E(G_1)\setminus \Fb}\cos\theta_{\xb\yb},\;\text{ (changing notations)}\\
&=\left[2^{4M}\prod_{\xb\yb\in
E(G_1)}\sin^2\Bigl(\frac{\theta_{\xb\yb}}{2}\Bigr)\right]
\Bigl[\prod_{\xb\yb\in\Fb}\sin\theta_{\xb\yb}\Bigr]
\Bigl[\prod_{\xb\yb\in E(G_1)\setminus \Fb}\cos\theta_{\xb\yb}\Bigr]
,\;\text{ (by
\eqref{equ:5})}\\
&=\left[2^{4M}\prod_{\xb\yb\in
E(G_1)}\sin^2\Bigl(\frac{\theta_{\xb\yb}}{2}\Bigr)\cos\theta_{\xb\yb}\right]
\prod_{\xb\yb\in\Fb}\tan\theta_{\xb\yb}
\end{align*}
Summing over the $2^{N}$ possible types for the vertices of $G_1$ yields
Theorem \ref{prop:weightpres}.
\end{proof}

\begin{cor}[\cite{BoutillierdeTiliere:iso_perio,BoutillierdeTiliere:iso_gen}]
\label{cor:beaced}
$$
\Pdimer(\zs,\ws)=\Big(2^{|V(G_1)|}\prod_{\xb\yb\in E(G_1)}[
\cot^2\Bigl(\frac{\theta_{\xb\yb}}{2}\Bigr)-1]\Bigr)\Plap(\zs,\ws).
$$
\end{cor}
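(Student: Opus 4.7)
The plan is to combine Theorem \ref{thm:correspondence}, Theorem \ref{prop:weightpres}, Lemma \ref{lem:rewriting} and the gauge identity \eqref{equ:30}, and then to simplify the resulting prefactor by a direct trigonometric computation.

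First I would start from the rewriting
\begin{equation*}
\Pdimer^0(\zs,\ws)=\sum_{(F,F^*)\in\F^0(\GD_1,\GD_1^*)}\left(\prod_{(x,y)\in F}f_x\overline{f_y}K_{x,y}\right)(-\zs^{h_0(F)}\ws^{v_0(F)})^{\frac{1}{2}N(F,F^*)}
\end{equation*}
given at the start of Section \ref{sec:CORRES}. By Points $1$ and $2$ of Theorem \ref{thm:correspondence}, the set $\F^0(\GD_1,\GD_1^*)$ is the disjoint union of the sets $\S^{(\Fb,\Fb^*)}(\GD_1,\GD_1^*)$ as $(\Fb,\Fb^*)$ ranges over $\F(G_1,G_1^*)$. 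Splitting the sum accordingly and applying Theorem \ref{prop:weightpres} to each piece, every $(\Fb,\Fb^*)$-block collapses to the single term $\mathbf{C}\bigl(\prod_{(\xb,\yb)\in\Fb}\tan\theta_{\xb\yb}\bigr)(-\zs^{h_0(\Fb)}\ws^{v_0(\Fb)})^{\frac{1}{2}N(\Fb,\Fb^*)}$ with the universal constant $\mathbf{C}=2^{4|E(G_1)|+|V(G_1)|}\prod_{\xb\yb\in E(G_1)}\sin^2(\theta_{\xb\yb}/2)\cos\theta_{\xb\yb}$. Recognising the result, via Lemma \ref{lem:rewriting}, as $\mathbf{C}\cdot \Plap(\zs,\ws)$ gives $\Pdimer^0(\zs,\ws)=\mathbf{C}\cdot\Plap(\zs,\ws)$.

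Next I would pass from $\Pdimer^0$ to $\Pdimer$ by means of \eqref{equ:30}, which requires computing $\prod_{x\in V(\GD_1)}|f_x|^2$. Recalling the definition of $f$ from \eqref{def:f}, vertices of type $w$ and $z$ contribute $1$, while $|f_{v_k(\xb)}|^2=2(1-\cos\theta_k(\xb))=4\sin^2(\theta_k(\xb)/2)$, as already computed in \eqref{eq:longedge}. Each edge $\xb\yb$ of $G_1$ has its rhombus half-angle $\theta_{\xb\yb}$ appearing twice in the product (once at each endpoint), so
\begin{equation*}
\prod_{x\in V(\GD_1)}|f_x|^2=2^{4|E(G_1)|}\prod_{\xb\yb\in E(G_1)}\sin^4(\theta_{\xb\yb}/2).
\end{equation*}

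Finally, dividing $\mathbf{C}$ by this product yields
\begin{equation*}
\Pdimer(\zs,\ws)=2^{|V(G_1)|}\prod_{\xb\yb\in E(G_1)}\frac{\cos\theta_{\xb\yb}}{\sin^2(\theta_{\xb\yb}/2)}\;\Plap(\zs,\ws),
\end{equation*}
and the half-angle identity $\cos\theta=\cos^2(\theta/2)-\sin^2(\theta/2)$ immediately rewrites $\cos\theta_{\xb\yb}/\sin^2(\theta_{\xb\yb}/2)=\cot^2(\theta_{\xb\yb}/2)-1$, giving the claimed formula. I do not expect any genuine obstacle here: all the hard work has been done in Theorems \ref{thm:correspondence} and \ref{prop:weightpres}, and the only step requiring attention is the bookkeeping of the rhombus-angle exponents when comparing $\mathbf{C}$ with $\prod|f_x|^2$, which is elementary.
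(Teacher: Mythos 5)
Your proposal is correct and follows essentially the same route as the paper: Theorems \ref{thm:correspondence} and \ref{prop:weightpres} (together with the rewritings of $\Pdimer^0$ and $\Plap$ over pairs of dual OCRSFs) give $\Pdimer^0(\zs,\ws)=\mathbf{C}\,\Plap(\zs,\ws)$, and then the gauge identity \eqref{equ:30} with $\prod_{x\in V(\GD_1)}|f_x|^2=2^{4|E(G_1)|}\prod_{\xb\yb}\sin^4(\theta_{\xb\yb}/2)$ and the half-angle identity yield the stated constant. Your bookkeeping of the exponents and the final simplification agree with the paper's computation, so there is nothing to add.
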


\begin{proof}
From Theorem \ref{thm:correspondence} and \ref{prop:weightpres}, we deduce that:
\begin{equation*}
\Pdimer^0(\zs,\ws)=\mathbf{C}\Plap(\zs,\ws). 
\end{equation*}
Moreover, by Equation \eqref{equ:30}, 
$$
\Pdimer^0(\zs,\ws)=\Bigl(\prod_{x\in
V(\GD_1)}|f_x|^2\Bigr)\Pdimer(\zs,\ws). 
$$
By definition of the vector $f$, for every decoration $\xb$ and for every
$k\in\{1,\cdots,d_{\xb}\}$:
\begin{align*}
|f_{w_k(\xb)}|&=|f_{z_k(\xb)}|=1\\
|f_{v_k(\xb)}|^2&=2(1-\cos\theta)=4\sin^2\Bigl(\frac{\theta}{2}\Bigr),
\end{align*}
where $\theta$ is the rhombus half-angle of the long edge incident to
$v_k(\xb)$. As a consequence:
$$\prod_{x\in V(\GD_1)}|f_x|^2=2^{4M}\prod_{\xb\yb\in
E(G_1)}\sin^4\Bigl(\frac{\theta_{\xb\yb}}{2}\Bigr).$$
We conclude:
\begin{align*}
\Pdimer(\zs,\ws)&=\Bigl(\frac{\mathbf{C}}{\prod_{x\in
V(\GD_1)}|f_x|^2}\Bigr)\Plap(\zs,\ws)\\
&=\left(2^N\prod_{\xb\yb\in
E(G_1)}\frac{\cos\theta_{\xb\yb}}{\sin^2\Bigl(\frac{\theta_{\xb\yb}}{2}
\Bigr)}\right)\Plap(\zs,\ws)\\
&=\Big(2^N\prod_{\xb\yb\in E(G_1)}[
\cot^2\Bigl(\frac{\theta_{\xb\yb}}{2}\Bigr)-1]\Bigr)\Plap(\zs,\ws).
\end{align*}

\end{proof}

\bibliographystyle{alpha}
\bibliography{survey}

\begin{thebibliography}{BdT10b}

\bibitem[Bax86]{Baxter:Zinv}
R.~J. Baxter.
\newblock Free-fermion, checkerboard and {${Z}$}-invariant lattice models in
  statistical mechanics.
\newblock {\em Proc. Roy. Soc. London Ser. A}, 404(1826):1--33, 1986.

\bibitem[Bax89]{Baxter:exactly}
Rodney~J. Baxter.
\newblock {\em Exactly solved models in statistical mechanics}.
\newblock Academic Press Inc. [Harcourt Brace Jovanovich Publishers], London,
  1989.
\newblock Reprint of the 1982 original.

\bibitem[BdT10a]{BoutillierdeTiliere:iso_gen}
C\'edric Boutillier and B\'eatrice de~Tili\`ere.
\newblock The critical {$Z$}-invariant ising model via dimers: Locality
  property.
\newblock {\em Communications in Mathematical Physics}, pages 1--44, 2010.
\newblock 10.1007/s00220-010-1151-3.

\bibitem[BdT10b]{BoutillierdeTiliere:iso_perio}
C\'edric Boutillier and B\'eatrice de~Tili\`ere.
\newblock The critical {$Z$}-invariant ising model via dimers: the periodic
  case.
\newblock {\em Probability Theory and Related Fields}, 147:379--413, 2010.
\newblock 10.1007/s00440-009-0210-1.

\bibitem[{Fis}66]{Fisher}
M.~E. {Fisher}.
\newblock On the dimer solution of planar ising models.
\newblock {\em Journal of Mathematical Physics}, 7:1776--1781, October 1966.

\bibitem[For93]{Forman}
R.~Forman.
\newblock {Determinants of Laplacians on graphs}.
\newblock {\em Topology}, 32(1):35--46, 1993.

\bibitem[Kas61]{Kast61}
P.~W. Kasteleyn.
\newblock The statistics of dimers on a lattice : I. the number of dimer
  arrangements on a quadratic lattice.
\newblock {\em Physica}, 27:1209--1225, December 1961.

\bibitem[Kas67]{Kasteleyn}
P.~W. Kasteleyn.
\newblock Graph theory and crystal physics.
\newblock In {\em Graph {T}heory and {T}heoretical {P}hysics}, pages 43--110.
  Academic Press, London, 1967.

\bibitem[Ken02]{Kenyon3}
Richard Kenyon.
\newblock The laplacian and dirac operators on critical planar graphs.
\newblock {\em Invent. Math.}, 150(2):409--439, 2002.

\bibitem[Mes06]{messikh}
RJ~Messikh.
\newblock {The surface tension near criticality of the 2d-Ising model}.
\newblock {\em Arxiv preprint math/0610636}, 2006.

\bibitem[MW73]{McCoyWu}
B.~McCoy and F.~Wu.
\newblock {\em The two-dimensional Ising model}.
\newblock Harvard Univ. Press, 1973.

\bibitem[TF61]{TemperleyFisher}
HNV Temperley and M.E. Fisher.
\newblock {Dimer problem in statistical mechanics-an exact result}.
\newblock {\em Philosophical Magazine}, 6(68):1061--1063, 1961.

\bibitem[Wil11]{Wilson}
David~B. Wilson.
\newblock Xor-ising loops and the gaussian free field.
\newblock {\em arxiv:1102.3782}, 2011.

\end{thebibliography}

\end{document}